\documentclass[%
prl,
reprint,
superscriptaddress, nofootinbib,amsmath,amssymb,aps,
floatfix
]{revtex4-2}

\usepackage{enumitem}
\usepackage{mathtools}

\usepackage{hyperref,url}
\hypersetup{breaklinks,colorlinks=true,linkcolor=blue,citecolor=blue,urlcolor=blue,filecolor=blue}

\usepackage[dvipsnames]{xcolor}
\usepackage{graphicx}%
\usepackage{xcolor}

\usepackage{booktabs}
\usepackage{makecell} %
\usepackage{dcolumn}%

\usepackage{subcaption}

\usepackage{quantikz}
\usepackage{physics}
\usepackage{amsmath}
\usepackage{bm}%
\usepackage{mathtools}
\usepackage{amsfonts}
\usepackage{amsthm, thmtools} \theoremstyle{definition}

\usepackage[ruled]{algorithm2e}

\usepackage[capitalise]{cleveref}%

\newcommand{\Google}{\affiliation{Google Quantum AI, Mountain View, CA, USA}}
\newcommand{\citen}[1]{Ref.~\citenum{#1}}
\newcommand{\etal}{\textit{et al.}}

\usepackage[english]{babel}

\makeatletter
\def\bbl@set@language#1{%
  \edef\languagename{%
    \ifnum\escapechar=\expandafter`\string#1\@empty
    \else\string#1\@empty\fi}%
  \@ifundefined{babel@language@alias@\languagename}{}{%
    \edef\languagename{\@nameuse{babel@language@alias@\languagename}}%
  }%
  \select@language{\languagename}%
  \expandafter\ifx\csname date\languagename\endcsname\relax\else
    \if@filesw
      \protected@write\@auxout{}{\string\select@language{\languagename}}%
      \bbl@for\bbl@tempa\BabelContentsFiles{%
        \addtocontents{\bbl@tempa}{\xstring\select@language{\languagename}}}%
      \bbl@usehooks{write}{}%
    \fi
  \fi}
\newcommand{\DeclareLanguageAlias}[2]{%
  \global\@namedef{babel@language@alias@#1}{#2}%
}
\makeatother

\DeclareLanguageAlias{en}{english}

\usepackage[colorinlistoftodos, textsize=scriptsize, color=orange!50]{todonotes}

\newcommand{\citeneeded}[1]{[{\color{blue!50}\textbf{CITE}}]}

\theoremstyle{definition}
\newtheorem{theorem}{Theorem}

\newtheorem{corollary}[theorem]{Corollary}

\SetKwFor{UnaryFor}{multiplexing over register}{do}{end}
\SetKw{MiddleFor}{, for}

\crefname{algocf}{alg.}{algs.}
\Crefname{algocf}{Algorithm}{Algorithms}

\newcommand{\bigo}[1]{\mathcal{O}(#1)}

\newcommand{\bigot}[1]{\tilde{\mathcal{O}}\left( #1\right)}

\newcommand{\doubleplus}{\mathbin{{+}\!\!\!{+}}}
\DeclareMathOperator{\prep}{\mathtt{PREPARE}}
\DeclareMathOperator{\sel}{\mathtt{SELECT}}

\begin{document}

\title{Productionizing Quantum Mass Production}

\author{William J. Huggins}
\email{whuggins@google.com}
\Google

\author{Tanuj Khattar}
\Google

\author{Nathan Wiebe}
\affiliation{Department of Physics, University of Toronto, Toronto, ON, Canada}
\affiliation{Department of Computer Science, University of Toronto, Toronto, ON, Canada}
\affiliation{Pacific Northwest National Laboratory, Richland, WA, USA}

\date{\today}

\newcommand{\ceil}[1]{\left\lceil #1 \right\rceil}

\begin{abstract}
	For many practical applications of quantum computing, the most costly steps involve coherently accessing classical data.
	We help address this challenge by applying mass production techniques, which can reduce the cost of applying an operation multiple times in parallel~\cite{Ulig1974-xh, Uhlig1992-tg,Kretschmer2022-uy}.
	We combine these techniques with modern approaches for classical data loading based on ``quantum read-only memory'' (QROM).
        We find that we can polynomially reduce the total number of gates required for data loading, but we find no advantage in cost models that only count the number of non-Clifford gates.
        Furthermore, for realistic cost models and problem sizes, we find that it is possible to reduce the cost of parallel data loading by an order of magnitude or more.
	We present several applications of quantum mass production, including a scheme that uses parallel phase estimation to asymptotically reduce the gate complexity of state-of-the-art algorithms for estimating eigenvalues of the quantum chemical Hamiltonian, including both Clifford and non-Clifford gates, from \(\bigot{N_{orb}^2}\) to \(\bigot{N_{orb}^{\log_2 3}}\), where \(N_{orb}\) denotes the number of orbitals.
	We also show that mass production can be used to reduce the cost of serial calls to the same data loading oracle by precomputing several copies of a novel QROM resource state.
\end{abstract}
\maketitle

\section{Introduction}

The high cost of loading classical data poses an obstacle to the search for quantum advantage in machine learning, data analysis, and many other contexts.
For example, some of the most promising quantum algorithms for the electronic structure problem make heavy use of classical preprocessing and data loading~\cite{von-Burg2021-yq, Lee2021-su, Caesura2025-wl, Low2025-ei}.
Even when the amount of data is modest by classical standards, data loading may be costly when we account for the slow speeds we expect from fault-tolerant quantum computers~\cite{Babbush2021-aq, Beverland2022-dh}.
We apply recently-developed ``quantum mass production'' techniques to help address this problem~\cite{Kretschmer2022-uy}.

These mass production techniques build on the earlier work of Uhlig, who showed how to construct a classical circuit for evaluating an arbitrary \(n\)-bit boolean function in parallel \(r\) times with a circuit complexity of \(\mathcal{O}\left( 2^n / n \right)\), so long as \(r\) is not too large (\(r = 2^{o(n / \log n)}\))~\cite{Ulig1974-xh,Uhlig1992-tg}.
Asymptotically, this matches the scaling required to implement a single arbitrary \(n\)-bit boolean function.
Ref.~\citenum{Kretschmer2022-uy} developed analogues of these results for preparing arbitrary quantum states or synthesizing arbitrary unitaries in parallel.
In both cases, gate complexity of implementing \(r = 2^{o(n / \log n)}\) parallel operations is asymptotically proportional to the complexity of performing the basic task a single time.

We apply these mass production ideas to the task of querying a boolean function in superposition, sometimes referred to as ``quantum read-only memory'' (QROM)~\cite{Babbush2018-tb}.
Given a function \(f: \left\{ 0, 1 \right\}^n \rightarrow \left\{ 0, 1 \right\}^m,\) the task is to implement the oracle 
\begin{equation}
	O_f: \ket{x}\ket{\alpha} \rightarrow \ket{x} \ket{\alpha \oplus f(x)}.
	\label{eq:O_f_def}
\end{equation}
There is a significant body of work on optimizing the implementation of such an \(O_f\)~\cite{Low2018-uu,Berry2019-qo, Zhu2024-pe, Giovannetti2008-un, Giovannetti2007-cg, Arunachalam2015-vr, Jaques2023-fo}.
We review some of these developments in \Cref{app:background_and_qrom}, including the ``SelectSwap'' QROM that we use in this work~\cite{Low2018-uu, Berry2019-qo}.
This approach reduced the number of non-Clifford T or Toffoli gates required, at the expense of requiring additional space.
Historically, it has been believed that non-Clifford gates will be vastly more costly to implement for fault-tolerant quantum computing using the surface code, making this tradeoff appealing~\cite{Bravyi2005-vi, Bravyi2012-mg, Fowler2018-he}.
Continued progress on better techniques for implementing non-Clifford gates challenges this assumption~\cite{Litinski2019-ek, gidney2024magic}, motivating us to consider a cost model that does not neglect the cost of Clifford gates.

We outline our mass production protocol and the intuition behind it below, formalizing asymptotic scaling in \Cref{thm:mass_produced_qrom}.
Informally, we show that the cost of implementing \(r\) queries to \(O_f\) in parallel is asymptotically proportional to the cost of implementing a single query, so long as \(r\) is not too large.
We estimate the actual savings afforded by our construction for mass production using the Qualtran software package~\cite{Harrigan2024-rj}, finding that it is possible to achieve practical benefits at reasonable problem sizes.
Interestingly, this advantage only appears when accounting for the cost of both Clifford and non-Clifford gates.
We go on to discuss applications of our protocol, showing that we can reduce the gate complexity for certain applications of parallel phase estimation and amplitude amplification.
We even find that mass production can be used to reduce the (amortized) cost of serial queries to the same data loading oracle.

\section{Mass-produced QROM queries}
\label{sec:mass_producing_qrom}

Given a function \(f: \left\{ 0,1 \right\}^n \rightarrow \left\{ 0, 1 \right\}^m\), we aim to implement \(O_f^{\otimes r}\) for some \(r\) that is a power of two.
As with Refs.~\citenum{Ulig1974-xh,Uhlig1992-tg,Kretschmer2022-uy}, the technique for accomplishing this task most efficiently relies on a clever strategy for reducing the implementation of \(O_f^{\otimes r}\) to a collection of smaller data loading tasks.
We begin our construction by describing the mass production protocol for implementing two parallel operations.
We then construct the general protocol recursively.
We give a high-level overview of the procedure here and refer the reader to \Cref{app:mass_production_protocol} for more details.

We begin by defining a family of functions \(\left\{ f_{\ell} \right\}\), where \(\ell\) ranges from \(0\) to \(2^{k} - 1\) for a constant number of bits $k<n$. 
For each \(\ell\), we let \(f_{\ell}\) denote the function \(f\) with the first \(k\) bits fixed to (the binary encoding of) \(\ell\).  For example, if $k=1$ then the first bit will be fixed to $\ell$.
Then we define a family of functions,
\begin{align}
	g_0 =     & f_0 \nonumber,
	\\
	g_\ell =  & f_{\ell - 1} \oplus f_\ell \text{ for } 1 \leq \ell \leq 2^k - 1,
	\label{eq:g_def}
	\\
	g_{2^k} = & f_{2^k - 1} \nonumber.
\end{align}
Mass production takes advantage of the following equation to evaluate \(f\) on two inputs while only evaluating each \(g_\ell\) at most once,
\begin{equation}
	f_\ell(z) = \bigoplus_{j=0}^{\ell} g_j(z) = \bigoplus_{j=\ell+1}^{2^k}g_j(z).
	\label{eq:f_ell_g_ell_trick}
\end{equation}

Our aim is to implement a circuit
\begin{multline}
	C: \ket{x_L}\ket{x_R}\ket{\alpha}\ket{y_L}\ket{y_R}\ket{\beta} 
	\\
	\rightarrow \ket{x_L}\ket{x_R}\ket{\alpha \oplus f(x)}\ket{y_L}\ket{y_R}\ket{\beta \oplus f(y)},
	\label{eq:desired_C}
\end{multline}
where \(x_L\) (\(y_L\)) denotes the first \(k\) bits of \(x\) (\(y\)) and \(x_R\) (\(y_R\)) denotes the remaining \(n-k\) bits.
We begin by defining a data loading subroutine for each \(g_\ell\),
\begin{equation}
	G_\ell: \ket{z}\ket{\gamma} \rightarrow \ket{z}\ket{\gamma \oplus g_\ell(z)}.
\end{equation}
Using a series of arithmetic comparisons and controlled swap gates, we can control whether \(G_\ell\) is applied to the registers initially containing \(x_R\) and \(\alpha\) or \(y_R\) and \(\beta\).

We work by applying each \(G_{\ell}\) in sequence.
When \(x_L \leq \ell\), we route the inputs and outputs so that the effect of \(G_\ell\) is to XOR \(g_\ell(x_R)\) into the \(\alpha\) register.
Similarly, when \(y_L > \ell\), we ensure that \(g_\ell(y_R)\) is XORed into the \(\beta\) register.
After this process, the output register initially set to \(\ket{\alpha}\) contains the state \(\ket{\alpha \bigoplus_{j=0}^{x_L} g_j(x_R)}\) and the other output register contains \(\ket{\beta \bigoplus_{j=y_L + 1}^{2^k} g_j(y_R)}\).
Applying the identities in \Cref{eq:f_ell_g_ell_trick}, we find that we have implemented \Cref{eq:desired_C} as desired.

We defer analysis of the cost savings to \Cref{app:base_case_cost_analysis} and sketch the argument here.
The dominant cost of the construction is the \(2^k + 1\) calls to the data loading oracles \(G_{\ell}\).
Working in a cost model where we account for Clifford gates, the cost of implementing each \(G_\ell\) is bounded by \(2^{n-k} m K\) for some constant \(K\).
Summing these costs, we have 
\begin{equation}
	\sum_{\ell=0}^{2^k} \textsc{Cost}\left( G_\ell \right) \leq  2^n m K \frac{2^{k} + 1}{2^k}.
\end{equation}
As \(k\) grows, this cost approaches that of implementing a single call to \(O_f\), which has a cost bounded by \( 2^n m K\).
As we show in \Cref{app:base_case_cost_analysis}, this conclusion breaks down if we neglect the cost of Clifford gates.

The insight that enables us to recursively construct the general protocol is that data loading is present in the two-copy case as a subroutine.
Imagine implementing \(r/2\) copies of the two-query mass production protocol in parallel.
Doing so would involve implementing \(r/2\) queries to each of the data loading oracles \(G_{\ell}\) in parallel.
Instead of doing this, we implement the \(r/2\) parallel queries to each \(G_{\ell}\) using an \(r/2\)-query mass production protocol.
We illustrate this for the \(r=4\) case in \Cref{fig:recursion_cartoon}.
Since the cost of each two-copy scheme is dominated by the \(G_{\ell}\), this replacement will result in an \(r\)-query protocol with a cost comparable to the cost of a single \(r/2\)-query protocol.
We formalize this below in \Cref{thm:mass_produced_qrom}, which we prove in \Cref{app:scaling_thm_proof}.

\begin{figure}
	\begin{quantikz}[wire types={q,q,q,q,q,q,q,q}, row sep={0.40cm,between origins}, column sep=0.34cm, font=\small]
    \lstick{$\ket{x_1}$}      & \gate[2,style={rounded corners, fill=red!25}, disable auto height][.85cm]{G_0} \gategroup[6,steps=1,style={dashed,rounded corners,fill=blue!35, inner sep=0pt},label style={label position=above}, background]{$\bm{C \left(g_0, 2 \right)}$} \gategroup[wires=4, steps=5, style={dotted, draw=black, rounded corners,fill=black, inner sep=0pt, draw opacity=.6, fill opacity=.05}]{}  & \gate[4,style={rounded corners}, disable auto height][.85cm]{A_0} & \ldots & \gate[2,style={rounded corners, fill=red!25}, disable auto height][.85cm]{G_{2^k}} \gategroup[6,steps=1,style={dashed,rounded corners,fill=blue!35, inner sep=0pt}, background,label style={label position=above}]{$\bm{C\left(g_{2^{k}}, 2\right)}$} & \gate[4,style={rounded corners}, disable auto height][.85cm]{A_{2^k}} & \rstick{$\ket{x_1}$}
    \\
    \lstick{$\ket{\alpha_1}$} & \qw                                                                                                                                                                                                                              & \qw                                                               & \ldots & \qw                                                                                                                                                                                                                                                            & \qw                                                                   & \rstick{$\ket{\alpha_1 \oplus f(x_1)}$}
    \\
    \lstick{$\ket{x_2}$}      & \qw                                                                                                                                                                                                                              & \qw                                                               & \ldots & \qw                                                                                                                                                                                                                                                            & \qw                                                                   & \rstick{$\ket{x_2}$}
    \\
    \lstick{$\ket{\alpha_2}$}  & \qw                                                                                                                                                                                                                              & \qw                                                               & \ldots & \qw                                                                                                                                                                                                                                                            & \qw                                                                   & \rstick{$\ket{\alpha_2 \oplus f(x_2)}$}
    \\
    [.5cm]
    \lstick{$\ket{x_3}$}      & \gate[2,style={rounded corners, fill=red!25}, disable auto height][.85cm]{G_0} \gategroup[wires=4, steps=5, style={dotted, draw=black, rounded corners,fill=black, inner sep=0pt, draw opacity=.6, fill opacity=.05}]{}                                                                                                                                                                & \gate[4,style={rounded corners}, disable auto height][.85cm]{A_0} & \ldots & \gate[2,style={rounded corners, fill=red!25}, disable auto height][.85cm]{G_{2^k}}                                                                                                                                                                                          & \gate[4,style={rounded corners}, disable auto height][.85cm]{A_{2^k}} & \rstick{$\ket{x_3}$}
    \\
    \lstick{$\ket{\alpha_3}$} & \qw                                                                                                                                                                                                                              & \qw                                                               & \ldots & \qw                                                                                                                                                                                                                                                            & \qw                                                                   & \rstick{$\ket{\alpha_3 \oplus f(x_3)}$}
    \\
    \lstick{$\ket{x_4}$}      & \qw                                                                                                                                                                                                                              & \qw                                                               & \ldots & \qw                                                                                                                                                                                                                                                            & \qw                                                                   & \rstick{$\ket{x_4}$}
    \\
    \lstick{$\ket{\alpha_4}$}  & \qw                                                                                                                                                                                                                              & \qw                                                               & \ldots & \qw                                                                                                                                                                                                                                                            & \qw                                                                   & \rstick{$\ket{\alpha_4 \oplus f(x_4)}$}
\end{quantikz}
	\caption{
	An illustration of how the \(r=4\)-query mass production protocol is constructed recursively.	
	Two copies of the \(2\)-query protocol are laid out in parallel, each surrounded by dotted lines.
	We replace the parallel calls to each \(G_\ell\) (pale red) with calls to the \(2\)-query mass production protocol for \(g_\ell\) (shaded blue and outlined with dashed lines).}
	\label{fig:recursion_cartoon}
\end{figure}
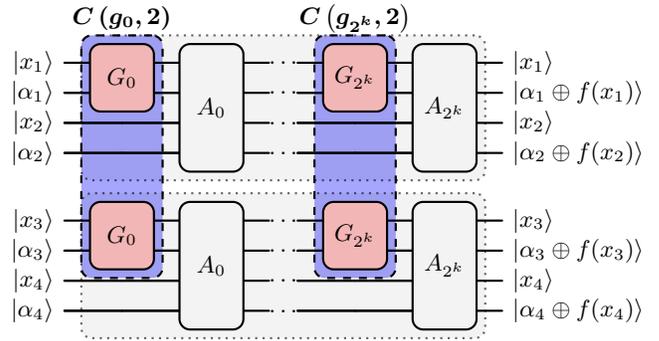

\begin{restatable}[Quantum Mass Production for quantum read-only memory]{theorem}{qromthm}
	Let \(f: \left\{ 0, 1 \right\}^n \rightarrow \left\{ 0, 1 \right\}^m\) be arbitrary, and let \(O_f\) be a unitary operator that queries this function in superposition, \(O_f: \ket{x}\ket{\alpha} \rightarrow \ket{x} \ket{\alpha \oplus f(x)}\).
	Let \(\lambda\) be a function of \(n\) that returns a power of \(2\) such that \(	\lambda = o\left(\frac{2^{n/2}}{\sqrt{n}}\right)\) and \(1 \leq \lambda \leq 2^{n/2} m^{-1/2}\), and let \(m\) be a constant function of $n$.
	For any \(r\) which is a power of two such that \(r = 2^{o \left( \frac{n - 2 \log \lambda}{\log(n)} \right)}\), there exists a quantum circuit \(C\) that implements \(O_f^{\otimes r}\) and satisfies the following properties:

	\begin{enumerate}
		\item \(C\) is composed entirely of one- and two-qubit Clifford gates and Toffoli gates.

		\item The number of one- and two-qubit Clifford gates in \(C\) is bounded by
		      \begin{equation}
			      \textsc{Clifford}\left( C \right) \leq \left( K + o(1) \right) 2^n m
				  \label{eq:thm_clifford_costs}
		      \end{equation}
		      for some universal constant \(K\).

		\item The number of Toffoli gates in \(C\) is
		      $
			      \textsc{Toffoli}\left( C \right) = \left( 1 + o(1) \right) 2^n \lambda^{-1}.
		      $
	\end{enumerate}
	\label{thm:mass_produced_qrom}
\end{restatable}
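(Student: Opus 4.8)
The plan is to prove the theorem by induction on \(r\) over powers of two, using the recursive construction sketched in the main text: the single-query base case is a SelectSwap QROM with space--time parameter \(\lambda\), and the \(r\)-query circuit is assembled from \(r/2\) copies of the two-query skeleton in which each of the \(2^k+1\) calls to \(G_\ell\) is replaced by an \((r/2)\)-query mass-production circuit for \(g_\ell\) (a function on \(n-k\) bits). A preliminary step is to pin down the two free parameters. Here \(\lambda\) is the SelectSwap tradeoff parameter appearing in the statement, while the splitting width \(k\) must be allowed to grow slowly (I expect \(k = \Theta(\log n)\)) rather than being held at the constant used to present the base case; this choice is exactly what will match the hypothesis \(r = 2^{o((n - 2\log\lambda)/\log n)}\).

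First I would establish correctness. Property~1 is immediate: the SelectSwap QROM, the arithmetic comparators, and the controlled swaps are all built from one- and two-qubit Cliffords and Toffolis, and the recursion only composes such gadgets. That \(C\) implements \(O_f^{\otimes r}\) follows by induction. Assuming the \((r/2)\)-query circuits correctly realize \(O_{g_\ell}^{\otimes r/2}\), each two-query skeleton realizes \(O_f^{\otimes 2}\) via the telescoping identity \Cref{eq:f_ell_g_ell_trick} together with the routing argument already given, and laying \(r/2\) of them in parallel yields \(O_f^{\otimes r}\).

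The heart of the argument is the cost recurrence. Writing \(\mathrm{Tof}(n',s)\) and \(\mathrm{Cliff}(n',s)\) for the Toffoli and Clifford counts of the \(s\)-query protocol on a function of \(n'\) bits, the construction gives \(\mathrm{Tof}(n',s) = (2^k+1)\,\mathrm{Tof}(n'-k,\,s/2) + R_{\mathrm{Tof}}\), and analogously for Clifford, where \(R\) collects the comparator and controlled-swap cost of the \(s/2\) skeletons at that level. Feeding in the base cases \(\mathrm{Tof}(n',1) = (1+o(1))\,2^{n'}\lambda^{-1}\) and \(\mathrm{Cliff}(n',1) = (K+o(1))\,2^{n'}m\) from \Cref{app:base_case_cost_analysis} (valid in the regime fixed below) and telescoping over the \(\log_2 r\) levels, the homogeneous part collapses cleanly because \((2^k+1)2^{-k} = 1 + 2^{-k}\). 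This produces a multiplicative overhead \((1+2^{-k})^{\log_2 r}\) multiplying \(2^n\lambda^{-1}\) (resp.\ \(K\,2^n m\)), plus an accumulated routing term that I would bound by a geometric sum dominated by its deepest level, of order \(O(n+m)\,2^{k\log_2 r}\).

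The main obstacle is the simultaneous asymptotic bookkeeping forcing all three error terms to be \(o(1)\) relative to the leading terms, since the requirements pull against one another. Making the multiplicative overhead negligible, \((1+2^{-k})^{\log_2 r} = 1+o(1)\), wants \(2^{-k}\log_2 r = o(1)\), i.e.\ \(k\) large; but every level strips \(k\) bits off the domain, so the deepest subproblems act on \(n - k\log_2 r\) bits, and the base-case coefficient is only \((1+o(1))\) when the select term dominates the swap term, i.e.\ when \(2^{\,n-k\log_2 r} = \omega(\lambda^2 m)\), which wants \(k\log_2 r\) small. The resolution is \(k = \Theta(\log n)\): then \(2^{-k}\log_2 r \to 0\), the routing factor \(2^{k\log_2 r} = 2^{o(n-2\log\lambda)}\) is dominated by both \(2^n\lambda^{-1}\) and \(2^n m\), and the depth condition \(k\log_2 r = o(n-2\log\lambda)\) becomes precisely \(\log_2 r = o((n-2\log\lambda)/\log n)\), the hypothesis of the theorem. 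I would close by collecting these estimates to read off \(\mathrm{Cliff}(C) \le (K+o(1))\,2^n m\) and \(\mathrm{Tof}(C) = (1+o(1))\,2^n\lambda^{-1}\), having used \(1 \le \lambda \le 2^{n/2}m^{-1/2}\) to keep the select term dominant at the top level and \(\lambda = o(2^{n/2}/\sqrt n)\) to ensure the budget \(n-2\log\lambda\) grows without bound so that the depth condition is actually satisfiable, with \(m\) constant keeping the \(O(n+m)\) routing factor benign.
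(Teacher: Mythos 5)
Your proposal is correct and takes essentially the same route as the paper's proof in \Cref{app:scaling_thm_proof}: the identical recursive construction and cost recurrence (\(2^k+1\) subcalls to \((n-k)\)-bit data loaders), the same telescoping factor \(\left(1+2^{-k}\right)^{\log_2 r}\), the same routing-term bound \(\mathcal{O}(n+m)\,2^{k\log_2 r}\), and the same role for each hypothesis --- the paper merely packages the recurrence as a standalone lemma (\Cref{lemma:mass_produced_qrom}) valid for any \(k = \mathcal{O}(\log n)\) before specializing. The one detail to tighten is that \(2^{-k}\log_2 r \to 0\) does not follow from \(k=\Theta(\log n)\) alone (with \(\lambda = \mathcal{O}(1)\) the hypothesis permits \(\log_2 r = n/(\log n)^2\), so \(k=\tfrac{1}{2}\log_2 n\) fails); you need \(2^{k} \geq n\), i.e.\ the paper's choice \(k=\ceil{\log_2 n}\), which the constraints you derive already accommodate.
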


We note that the \(\lambda\) parameter in the statement of this theorem arises from the usage of ``SelectSwap'' QROM as a subroutine, which allows for a reduced circuit depth and number of Toffoli gates at the expense of requiring $\approx \lambda m$ ancilla qubits~\cite{Low2018-uu}.
While we focus on computational complexity here, it might sometimes be desirable to limit the choice of $\lambda$ to reduce the number of qubits used.
Conversely, choosing a sufficiently large $\lambda$ could make the additional space required for mass production negligible, leading to an overall spacetime cost that would be approximately proportional to the number of one- and two-qubit Clifford gates.
Finally, we observe that the number of copies we can mass produce ($r$) depends on $\lambda$ in a way that is consistent with our claim that we see no benefit from mass production when we ignore the cost of Clifford gates (and maximize $\lambda$ to reduce the number of Toffoli gates).

\section{Constant factor analysis}
\label{sec:practical_costs}

We use the Qualtran software package (See~\citen{Harrigan2024-rj}) to study the gate complexity of our mass production protocol in this section.
We calculate the gate complexity of a circuit \(C\) using the formula
\begin{equation}
	\textsc{Cost}(C) = \textsc{Clifford}(C) + \Xi \textsc{T}(C),
	\label{eq:cost_model_def}
\end{equation}
where \(\textsc{Clifford}(C)\) denotes the number of one and two-qubit Clifford gates, \(\textsc{T}(C)\) denotes the number of non-Clifford T gates, and \(\Xi\) is a constant (see \Cref{app:cost_models} for a discussion of this model).
To quantify the advantage provided by a circuit \(C_r\) that mass-produces \(r\) queries to a data loading oracle \(O\), we calculate the ``improvement factor''
\begin{equation}
	\mathcal{I} = \frac{\textsc{Cost}\left( O^{\otimes r} \right)}{\textsc{Cost}\left( C_r \right)} = \frac{r \textsc{Cost}\left( O \right)}{\textsc{Cost}\left( C_r \right)}.
\label{eq:improvement_factor_def}
\end{equation}
For all of the data we present in this section, we minimize the costs of both \(C_r\) and \(O\) by varying the tunable parameter \(\lambda\) (described in \Cref{app:qrom_background}) and the value of \(k\) chosen at each step in the recursive construction for mass production.
See \Cref{app:numerical_details} for additional details on our numerical experiments.

\begin{figure}[t]
	\centering
	\includegraphics[width=.48\textwidth]{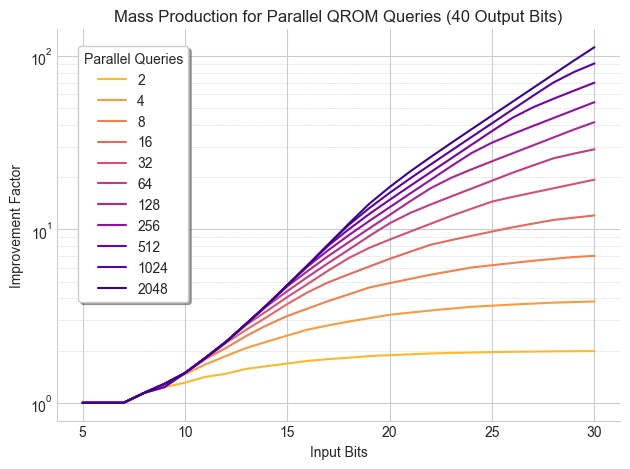}
	\caption{
		The ``improvement factor'' $\mathcal{I}$ when using mass production to implement \(r\) parallel queries to a data loading oracle for an arbitrary function that takes \(n\) input bits and outputs \(40\) bits. %
		When the number of parallel queries is small and the number of input bits is large, we observe that the improvement factor nearly attains the upper bound of \(r\), implying that the entire mass production protocol is only slightly more expensive than a single query without mass production.
	}
	\label{fig:mass_production_helps}
\end{figure}

We plot the improvement factor achievable for mass-producing queries to an arbitrary function of the form \(f: \left\{ 0, 1 \right\}^n \rightarrow \left\{ 0, 1 \right\}^{40}\) in \Cref{fig:mass_production_helps}.
We vary \(n\) and plot the improvement factor available when implementing \(O_f^{\otimes r}\) for several values of \(r\), taking \(\Xi = 1\).
At a fixed value of \(r\), the improvement factor converges towards its upper bound of \(r\) as the number of input bits (\(n\)) increases.
For the largest values of \(n\) and \(r\) we consider in this figure, we find an improvement factor of slightly more than \(10^2\), which implies that we can perform \(2048\) parallel queries to a function \(f: \left\{ 0, 1 \right\}^{30} \rightarrow \left\{ 0,1 \right\}^{40}\) with roughly the same cost as performing \(20\) using standard techniques.
Specifying such an \(f\) requires more than \(200\) billion parameters, suggesting that implementations on this scale will have to wait for very large fault-tolerant quantum computers.
However, we find that improvements by a factor of \(10\) or more are reasonable at smaller input sizes.
For example, we analyze the sparse simulation method of \citen{Berry2019-qo} in \Cref{app:lcu_overview} and find that model systems with sizes ranging from \(100\) to \(200\) spin-orbitals require between \(14\) and \(18\) input qubits.
The number of output qubits required for a desired level of accuracy can vary, but the original work considered output sizes of \(\approx 80\) for several systems \cite{Berry2019-qo}.

In \Cref{fig:T_gate_overhead}, we study how the cost of the non-Clifford T gate (relative to one- and two-qubit Clifford gates) impacts the effectiveness of mass production.
Here we fix the number of input bits (\(n\)) to \(20\) and the number of output bits (\(m\)) to \(40\) and plot the improvement factor available as a function of the T gate overhead (\(\Xi\), as defined in \Cref{eq:cost_model_def}) for several values of \(r\).
We see that the potential benefit from mass production is strongly suppressed as \(\Xi\) increases.
We discuss this behavior in \Cref{app:base_case_cost_analysis}, where we see that it emerges because the non-Clifford gate complexity of data loading scales as \(2^{n/2}\) rather than \(2^n\)~\cite{Low2018-uu}.
This suggests that recent work on reducing the cost of non-Clifford gates will be crucial to realizing the benefits of mass production for data loading.
For example, \citen{Litinski2019-ek} presents data that supports a value of \(\Xi\) roughly between \(5\) and \(20\) and the more recent \citen{Craig2024-kv} argues that it may be possible to achieve a value of \(\Xi\) close to \(1\).%

\begin{figure}[t]
	\centering
	\includegraphics[width=.48\textwidth]{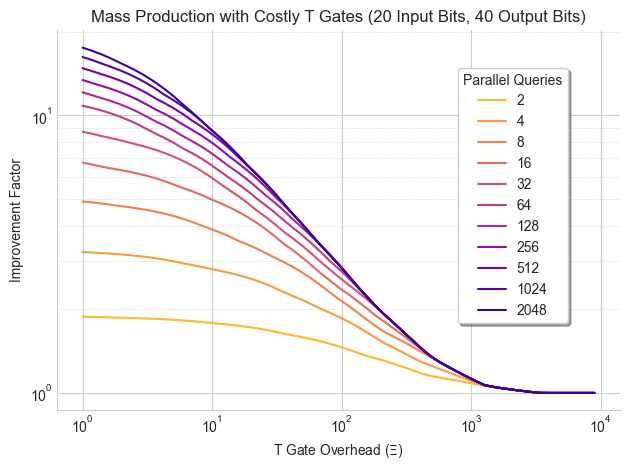}
	\caption{
		The improvement factor $\mathcal{I}$ available when using mass production to implement \(r\) parallel queries to a data loading oracle for an arbitrary \(f\) with \(20\) input bits and \(40\) output bits, plotted as a function of the T gate overhead (\(\Xi\) in \Cref{eq:cost_model_def}).
		When \(\Xi = 1\) and the cost of a T gate is taken to be the same as an arbitrary one- or two-qubit Clifford gate, mass production can offer reasonably large improvements.
		The potential benefit of using mass production is suppressed as \(\Xi\) increases.
	}
	\label{fig:T_gate_overhead}
\end{figure}

\section{Applications}

We highlight a few promising applications of quantum mass production in this section, beginning with its application to parallel phase estimation.
This approach involves sharing a GHZ state over $r$ workers that each have a copy of a state $\ket{\psi}$ such that \(U \ket{\psi} = e^{-iE} \ket{\psi}\) for some unitary \(U\).
Using phase kickback, each worker can apply the same controlled unitary to collectively accumulate the phase \(e^{-i r E}\).
Ordinarily, this would simply reduce the circuit depth.
However, as we explain in \Cref{app:lcu_overview} and \Cref{app:parallel_qpe}, classical data loading is the dominant cost in some of the state-of-the-art approaches for the quantum simulation of quantum chemistry~\cite{Lee2021-su,von-Burg2021-yq}.

We can therefore use quantum mass production to reduce the overall gate complexity of eigenvalue estimation compared with a standard serial phase estimation approach~\cite{Lee2021-su,von-Burg2021-yq}.
As examples, we examine the sparse simulation method of \citen{Berry2019-qo} and the tensor hypercontraction algorithm of \citen{Lee2021-su}.
By reanalyzing the mass production protocol for a choice of \(r\) such that \(r \propto 2^{n}\) (see \Cref{app:max_improvement_factor}), we are able to effectively reduce the cost of implementing the qubitized quantum walk operator by a factor that is polynomial in the system size, \(N_{orb}\).
Provided that the appropriate eigenstates are cheaply preparable, the combination of parallel phase estimation and mass production allows us to achieve the best known \textit{total} gate complexity for these methods. 
We expect that the same improvement should be available for the recently introduced techniques based on spectrum amplification~\cite{Low2025-ei}.

Amplitude amplification also can benefit from the use of mass production.
Given an algorithm \(A\) that produces some desired ``good'' state \(\ket{G}\) with a small amplitude \(\sqrt{p}\),
\begin{equation}
	A \ket{0^{\otimes n}} = \sqrt{p} \ket{G} \ket{1} + \sqrt{1 - p} \ket{B} \ket{0},
\end{equation}
amplitude amplification lets us output \(\ket{G}\) using \(\approx \frac{1}{\sqrt{p}}\) calls to \(A\).
In \Cref{app:parallel_amp_amp}, we show that quantum mass production can allow us to reduce the gate complexity of outputting \(\ket{G}\) by a factor of \(\approx \sqrt{r}\), provided that we can mass produce \(A^{\otimes r}\) at a cost that is comparable to \(A\), e.g., when the cost of \(A\) is dominated by data loading.
This is a promising direction because many quantum algorithms, including algorithms for linear algebra~\cite{Harrow2009-sm, Childs2017-is}, differential equations~\cite{Berry2014-fz, Berry2017-pd, Liu2021-ud}, ground state preparation~\cite{Lin2020-qh}, and machine learning~\cite{Lloyd2016-po, Liu2024-dv} involve heavy use of both amplitude amplification and data loading oracles that might be amenable to mass production.

Finally, we show how mass production can enable us to reduce the complexity of serial queries to the same data loading oracle.
The idea, which we elaborate on in \Cref{app:resource_state}, is simple.
We can use our scheme for mass-producing QROM queries to prepare multiple copies of the QROM resource state 
\begin{equation}
	\ket{\text{QROM}_{f}} = O_f \ket{+}^{\otimes n} \ket{0} = 2^{-n/2} \sum_{y=0}^{N - 1} \ket{y} \ket{f(y)}.
	\label{eq:qrom_resource_state_mt}
\end{equation}
We show how to consume one of these resource states to perform a data lookup that is effectively a scrambled version of the desired data lookup, \(\bar{O}_f^{(b)}: \ket{x}\ket{0} \rightarrow \ket{x}\ket{0 \oplus f(x \oplus b)}\), for some randomly determined bitstring \(b\).
We can correct this scrambling by performing another data lookup for a function \(g(x) = f(x) \oplus f(x \oplus b)\).
Furthermore, we can take advantage of the fact that \(g(x)\) outputs the same value for the inputs \(x\) and \(x \oplus b\) to implement this correction using half the resources required to implement \(O_f\) directly.
In many situations, this allows us to reduce the amortized cost of repeated queries to the same \(O_f\) by nearly a factor of two.

\section{Discussion}

In this paper, we have introduced a variant of quantum mass production that enables multiple parallel queries to a ``quantum read-only memory'' (QROM) to be performed at a cost that is comparable to the cost of a single query.
We used the Qualtran software package to analyze the cost of implementing these parallel queries~\cite{Harrigan2024-rj}, combining mass production ideas with state-of-the-art techniques for reducing the total number of non-Clifford gates.
In a naive cost model where we only account for the cost of non-Clifford gates, we find no benefit to mass-producing QROM queries.
However, in a more realistic cost model that accounts for both Clifford and non-Clifford gates, we find that quantum mass production can offer a practical benefit at reasonable problem sizes, reducing the cost of parallel queries by an order of magnitude or more.
This offers an alternative approach to benefiting from extra space that, unlike other advanced techniques for data loading~\cite{Low2018-uu, Berry2019-qo}, has the potential to reduce the overall gate complexity rather than merely the non-Clifford gate complexity.

We proposed several applications of quantum mass production that go beyond the straightforward idea of parallelizing independent executions of the same algorithm.
Focusing on the simulation of quantum chemistry as a concrete example, we discussed how mass production and parallel phase estimation might be combined to reduce the overall cost of eigenvalue estimation by parallelizing the \(\prep\) subroutine in a linear combination of unitaries framework.
More generally, we show how mass production could be used in combination with amplitude amplification to reduce the cost for a wide variety of quantum algorithms, provided that their overall cost is dominated by the cost of classical data loading.
These concrete examples we have discussed are only a small fraction of the possible applications of quantum mass production.
For example, the initial state preparation step for first-quantized chemical simulation naturally involves parallel applications of the same arbitrary unitaries~\cite{su2021fault,huggins2024efficient}, parallel calls to the same block encoding could be used to implement methods based on truncated Taylor series~\cite{Berry2015-ly}, and a host of quantum machine learning algorithms might benefit from cheaply encoding the same classical data multiple times in parallel~\cite{Lloyd2016-po, Gilboa2023-db}.

Additionally, we proposed a strategy that can reduce the amortized cost of serial queries to the same QROM by using mass production to precompute several copies of a QROM resource state.
We then consume these resource states to implement the desired query, up to a smaller correction.
This protocol, which we present in \Cref{app:resource_state}, can reduce the cost of repeated queries by almost a factor of two.
It would be interesting to understand if our protocol can be improved, or if there is a fundamental reason that the cost reduction is limited to a factor of two.
Outside of our general QROM resource state construction, there may be more specialized situations where mass production could be combined with precomputation to realize a speedup, perhaps using density matrix exponentiation as in \citen{Huggins2023-nx}.

\section{Acknowledgments}
We thank Craig Gidney for suggesting the resource state construction.  
We thank Stephen Jordan and Guang Hao Low for useful feedback on the manuscript.
NW acknowledges support from the NSERC funded Quantum Software Consortium and also U.S. Department of Energy, Office of Science,
National Quantum Information Science Research Centers, Co-design Center for Quantum Advantage (C2QA)
under contract number DE- SC0012704 (PNNL FWP 76274). NW's research is also supported by PNNL’s
Quantum Algorithms and Architecture for Domain Science (QuAADS) Laboratory Directed Research and
Development (LDRD) Initiative. The Pacific Northwest National Laboratory is operated by Battelle for
the U.S. Department of Energy under Contract DE-AC05-76RL01830.

\section{Code Availability}
The code used to generate the data for this paper is available upon request.

% \bibliography{paperpile,extracites}%
%apsrev4-2.bst 2019-01-14 (MD) hand-edited version of apsrev4-1.bst
%Control: key (0)
%Control: author (8) initials jnrlst
%Control: editor formatted (1) identically to author
%Control: production of article title (0) allowed
%Control: page (0) single
%Control: year (1) truncated
%Control: production of eprint (0) enabled
%

\appendix

\clearpage
\onecolumngrid
\setcounter{secnumdepth}{2}

\section{Querying a classical function in superposition with quantum read-only memory (QROM)}
\label{app:background_and_qrom}

Consider an arbitrary boolean function,
\begin{equation}
	f: \left\{ 0,1 \right\}^n \rightarrow \left\{ 0, 1 \right\}^m.
\end{equation}
We often want to construct an oracle that allows us coherent access to such a function,
\begin{equation}
	O_f: \ket{x}\ket{\alpha} \rightarrow \ket{x}\ket{\alpha \oplus f(x)}.
	\label{eq:O_f:def_qroam_app}
\end{equation}
In order to establish a framework for discussing asymptotic costs, we first begin with a short discussion of cost models for quantum algorithms in \Cref{app:cost_models}.
In \Cref{app:qrom_background}, we summarize prior work on instantiating such oracles~\cite{Babbush2018-tb,Childs2018-fq,Low2018-uu,Berry2019-qo}.
Throughout the paper, we use the phrase ``quantum read-only memory,'' or ``QROM'' to refer to the whole family of approaches to implementing such oracles, although specific variants often have other more technical names.
Then, in \Cref{app:modified_clean_qroam}, we introduce a variant of QROM that will be particularly useful in constructing our mass production protocol.

\subsection{A cost model for fault-tolerant algorithms}
\label{app:cost_models}

In order to discuss the asymptotic costs of various circuits, we need to specify a cost model.
One possible strategy is to count the number of arbitrary one- and two-qubit gates, or the number of gates in some particular universal gate set.
Motivated by the belief that non-Clifford gates will be dramatically more resource-intensive to implement using practical approaches to quantum error correction~\cite{Bravyi2005-vi}, many works on fault-tolerant quantum algorithms have focused exclusively on quantifying the number of non-Clifford, T or Toffoli gates~\cite{reiher2017elucidating,haner2016factoring,sanders2020compilation}.
However, continued progress in recent years suggests that this belief may be outdated~\cite{Bravyi2012-mg, reiher2017elucidating, O-Gorman2017-ul, Litinski2019-ek, Gidney2019-mv, Craig2024-kv}.
In particular, \citen{Craig2024-kv} argues that we may be able to ``cultivate'' magic states with sufficiently low error rates that we can implement the non-Clifford T gate in the surface code with only slightly more resources than a standard CNOT gate, although further work is needed to establish that this is true across a range of device parameters.

In order to model a range of plausible scenarios, we adopt the following cost model for a circuit \(C\) expressed in a Clifford + T gate set:
\begin{equation}
	\textsc{Cost} \left( C \right) = \textsc{Clifford}\left( C \right) + \Xi \textsc{T}(C),
	\label{eq:cost_model_summary}
\end{equation}
where \(\textsc{Clifford}\left( C \right)\) denotes the number of one- and two-qubit Clifford gates in \(C\), \(\textsc{T}\left( C \right)\) denotes the number of T gates, and \(\Xi\) is a constant.
Taking \(\Xi = 1\) recovers the model where we simply count the number of one- and two-qubit gates in a Clifford + T gate set, and taking \(\Xi \rightarrow \infty\) recovers the model where we discount the cost of Clifford gates entirely.
By varying \(\Xi\), we can explore regimes in between these extremes.
While initial estimates for \(\Xi\) were in the thousands~\cite{Bravyi2005-vi,Litinski2019-ek}, more modern approaches to magic state distillation have provided values in the tens or hundreds~\cite{Litinski2019-ek,Gidney2019-mv}, and \citen{Craig2024-kv} argues for a value of \(\Xi\) close to one.

Many of the constructions in this work, including the various implementations of QROM that we discuss in the next section, are naturally expressed using the non-Clifford Toffoli gate rather than T gates.
However, for simplicity, in our constant factor cost analyses we ultimately report the cost in terms of T gates by using a standard decomposition for Toffoli gates into four T gates plus several one- and two-qubit Clifford gates.

While our cost model is more nuanced than merely counting the total number of gates or the total number of non-Clifford gates, it still only provides a rough estimate for the true cost.
For example, some Clifford gates may be more difficult to implement than others, and the costs may vary based on the particular choice of quantum error correcting code or the underlying hardware platform~\cite{Fowler2018-he, Evered2023-bz, Xu2024-zd}.
Despite these subtleties, we expect that our cost model will provide useful estimates, particularly since we are focused on the relative costs of different approaches.

\subsection{Prior work on implementations of QROM}
\label{app:qrom_background}

There is a significant body of work on instantiating data loading oracles of the form given in \Cref{eq:O_f:def_qroam_app}.
The quantum read-only memory of \citen{Babbush2018-tb} provides a straightforward implementation whose gate complexity (in terms of the number of one- and two-qubit Clifford gates and T/Toffoli gates) is \(\Theta \left( 2^n m \right)\).
The SelectSwap QROM of \citen{Low2018-uu} demonstrated that a similar gate complexity could be obtained with a quadratically smaller number of non-Clifford gates, at the expense of requiring additional space.
\citen{Berry2019-qo}, which refers to this construction as advanced quantum read-only memory (QROAM), developed additional tools to efficiently uncompute QROAM queries in certain contexts.
Our constructions make use of the tools developed in these three works, but we note other improvements and variations have been proposed more recently~\cite{Zhu2024-pe}.

Our summary will mostly follow the notation and language of \citen{Berry2019-qo}.
Let \(N = 2^n\) and let \(\lambda\) be a power of two such that \(1 < \lambda < N\).
The assumption that \(N\) is a power of two can be relaxed to cover the setting where we only care about the action of \(f\) on the first \(N\) non-negative integers (with \(n = \ceil{\log N}\)).
Since this only introduces a constant factor difference in the cost, we only consider the simpler case where \(N\) is a power of \(2\) for simplicity.

Using the techniques of Berry \etal, we can implement \(O_f\) using \(N \lambda^{-1} + m(\lambda-1)\) Toffoli gates and \((\lambda-1)m + \log_2 \left( N \lambda^{-1}\right)\) clean ancilla (ancilla initialized in the \(\ket{0}\) state), together with \(\Theta\left(N m\right)\) one- and two-qubit Clifford gates (for a generic \(f\)).
More specifically, the number of one- and two-qubit Clifford gates is bounded by \(N m K\) for some small constant \(K\).
Furthermore, the spacetime volume (the product of the depth of the circuit and the number of qubits it requires) scales as \(\bigot{2^n m}\), regardless of \(\lambda\).

Technically their construction assumes that the output register is in the zero state, but the general case can be treated with only a small increase in cost.
Potentially more seriously, their approach leads either i) an entangled junk register that must be uncomputed, or ii) random (but known at execution-time)  phase errors that must be corrected.
The cost of these corrections is often neglected because, when the output of the QROAM is used and then immediately uncomputed, the corrections can be applied at no additional cost during the uncomputation step.
Since this will not necessarily be the case in our application of QROAM, we introduce a modified variant of the clean ancilla assisted QROAM in \Cref{fig:clean_ancilla_QROAM_without_uncomputation} that performs the desired operation without requiring any additional corrections and without any constraints on \(\alpha\).

An alternative construction makes use of dirty ancilla qubits (ancilla qubits borrowed and returned in an arbitrary state) rather than clean ancilla~\cite{Low2018-uu,Berry2019-qo}.
Implementing QROAM this way uses \(2 N \lambda^{-1} +4 m(\lambda-1)\) Toffoli gates, \(m(\lambda-1)\) dirty ancilla, and \(\log_2 \left( N \lambda^{-1}\right)\) clean ancilla, and \(\Theta\left(N m\right)\) one- and two-qubit Clifford gates.
There is no need to perform additional corrections when using this approach.

It is frequently critical to uncompute such a data lookup oracle.
Berry \etal{} showed how this uncomputation can be done particularly efficiently using a measurement-based strategy.
This approach uses \( N \lambda^{-1} + \lambda\) Toffoli gates and \(\lambda + \log_2 \left( N \lambda^{-1}\right)\) clean ancilla or \(2 N \lambda^{-1} + 4\lambda\) Toffoli gates, \(\lambda-1\) dirty ancilla, and \(\log_2 \left( N \lambda^{-1} \right) + 1\) clean ancilla~\cite{Berry2019-qo}.
The number of one- and two-qubit Clifford gates for the uncomputation step scales as \(\Theta \left(N\right)\) (provided that we don't count the \(m\) \(X\)-basis measurements as gates).
This measurement-based uncomputation strategy therefore avoids the scaling with \(N m\) that might be naively expected.

\subsection{Modified clean ancilla-assisted QROAM that uncomputes the junk register}
\label{app:modified_clean_qroam}

QROAM enables us to minimize the number of non-Clifford Toffoli gates required for coherently loading classical data by using additional space.
As we discussed in \Cref{app:qrom_background}, we can implement a data lookup oracle \(O_f\) for a function \(f: \left\{ 0, 1 \right\}^n \rightarrow \left\{ 0, 1 \right\}^m\) following the methods of \citen{Berry2019-qo} (which build on \citen{Low2018-uu}).
We can either do so using \(\lceil N/\lambda\rceil + m(\lambda-1)\) Toffoli gates and \((\lambda-1)m + \lceil \log N/\lambda \rceil\) clean ancilla (ancilla initialized in the \(\ket{0}\) state), or using \(2\lceil N/\lambda \rceil +4 m(\lambda-1)\) Toffoli gates, \((\lambda-1)m\) dirty ancilla (ancilla borrowed in an arbitrary state), and \(\lceil \log N/\lambda \rceil\) clean ancilla.
In both cases, we define \(N = 2^n\) to parallel the notation of \citen{Berry2019-qo}.
Provided we have the clean ancilla available, using them results in the lowest Toffoli cost.
Therefore, in this work we only consider the use of the clean ancilla variants of QROM, although we expect that our results could also be adapted to the dirty ancilla case.

However, the costs quoted for the clean ancilla version of QROM above do not exactly correspond to the costs for implementing the desired
\begin{equation}
	O_f: \ket{x} \ket{\alpha} \rightarrow \ket{x}\ket{\alpha \oplus f(x)}.
\end{equation}
Rather than implementing the above operation, the clean-ancilla version of QROAM actually implements a unitary operation \(\tilde{O}_f\) that acts as below,
\begin{equation}
	\tilde{O}_f: \ket{x} \ket{0}^{\otimes \lambda} \rightarrow \ket{x} \ket{f(x)} \ket{j(x)},
\end{equation}
where \(\ket{j(x)}\) is some ``junk'' computational basis state on \(m (\lambda - 1)\) qubits.\footnote{The dirty ancilla variants of QROAM do not have this issue since they explicitly restore the state of the borrowed qubits.}
In practice, QROAM is often employed in a context where the output values will be used and then immediately uncomputed.
If this is the case, then the uncomputation of the \(\ket{j(x)}\) register can be folded into the same step that uncomputes the \(\ket{f(x)}\) register at no additional cost, as described in \citen{Berry2019-qo}.

However, in our case, we do not immediately use the output and then uncompute it, so it is not clear that the same kind of optimization is available to us.
Naively, we might expect to have to pay an additional cost to uncompute the junk data before proceeding.
Specifically, we could perform the usual data loading step, then \(m\) CNOT gates to XOR \(f(x)\) in the register containing \(\alpha\), and then uncompute the registers containing \(f(x)\) and \(j(x)\).
However, the uncomputation itself requires a data loading, which would incur an additional cost of \(\lceil N \lambda^{-1} \rceil + \lambda\) Toffoli gates (using \(\lambda + \lceil \log N\lambda^{-1} \rceil\) clean ancilla) and \(\mathcal{O}\left( N \right)\) Clifford gates.
In the worst case, this would nearly double the cost of data loading and would make it more difficult to benefit from mass production at all.

We therefore propose a modification to the standard clean ancilla QROAM that guarantees that the additional ancilla will be returned to the zero state.
This modification requires zero additional non-Clifford gates, and only \(\Theta(m\lambda)\) additional one- and two-qubit Clifford gates.
We show this alternative strategy in \Cref{fig:clean_ancilla_QROAM_without_uncomputation}.

This QROAM is a modification of the construction from \citen{Berry2019-qo}, and we closely follow their notation for the remainder of this section.
In our circuit diagram, \(S\) denotes a series of controlled swap gates that permutes the \(\lambda\) output registers so that the register that initially contains \(\alpha\) ends up in the \(l\)th position.
\(T\) denotes the usual QROM data loading operation that outputs \(m \lambda\) bits so that the \(l\)th register contains the data for \(f(h \doubleplus l)\).
Implementing \(T\) requires \(\left \lceil 2^n / \lambda \right \rceil\) Toffoli gates and \(\mathcal{O}\left( 2^n m \right)\) Clifford gates.
Our construction takes advantage of the fact that \(T\) will act as the identity on any output registers in the \(\ket{+}^{\otimes m}\) state.
Initializing the additional clean qubits in the \(\ket{+}\) ensures that only the \(l\)th register is modified by the action of \(T\).

Berry \textit{et al.}'s construction for clean ancilla QROAM consists of exactly one controlled \(T\) gate and one controlled \(S\) gate.
Our circuit differs from theirs by the addition of the Hadamard gates as shown in the figure, the \(X\)-basis initialization of the clean ancilla qubits, and the controlled \(S^{\dagger}\) operation.
We could implement \(S^{\dagger}\) as a product of controlled swaps.
In our particular circuit, we can take advantage of the fact that we know the outputs for all of the registers but the first one will be zero to simplify the circuit and entirely remove the need for non-Clifford gates, as in Figure 10 of \citen{Berry2019-qo}.
The circuit implementation of \(S\) contains exactly \(m \left( \lambda - 1 \right)\) controlled swap gates (equivalently, \(2 m \left( \lambda - 1 \right)\) CNOT gates and \(m \left( \lambda - 1 \right)\) Toffoli gates).
The measurement-based implementation of \(S^{\dagger}\) contains \(m \left( \lambda - 1 \right)\) each of CNOT gates, Hadamard gates, single-qubit measurements, and classically-controlled CZ gates.\footnote{One could further simplify by combining the Hadamard gates of \Cref{fig:clean_ancilla_QROAM_without_uncomputation} with the implementation of \(S^{\dagger}\) from \citen{Berry2019-qo} but we neglect this optimization for now.}

Overall then, we can say that our modified advanced QROM circuit requires \(\lceil N/\lambda\rceil + m(\lambda-1)\) Toffoli gates and \((\lambda-1)m + \lceil \log N/\lambda \rceil\) clean ancilla, together with a number of Clifford gates bounded by \(K N m\) for some universal constant \(K\).
Furthermore, the number of Clifford gates required scales the same as the standard clean-ancilla advanced QROM, up to leading order.

\begin{figure}
	\centering
	\begin{quantikz}
		\lstick{$\ket{h}$}                & \qw         & \ctrl{2}    & \qw                  & \qw                 & \qw                  & \rstick{$\ket{h}$}
		\\
		\lstick{$\ket{l}$}                & \ctrl{1}    & \qw         & \qw                  & \ctrl{1}            & \qw                  & \rstick{$\ket{l}$}
		\\
		\lstick{$\ket{\alpha}$}           & \gate[4]{S} & \gate[4]{T} & \gate{H^{\otimes m}} & \gate[4]{S^\dagger} & \gate{H^{\otimes m}} & \rstick{$\ket{\alpha \oplus f(h \doubleplus l)}$}
		\\
		\lstick{$\ket{+}$}                & \qw         & \qw         & \gate{H^{\otimes m}} & \qw                 & \qw                  & \rstick{$\ket{0}$}
		\\
		{\ \ \vdots \ \ } \setwiretype{n} &             &             & {\ \ \vdots \ \ }    &                     &                      & {\ \ \vdots \ \ }
		\\
		\lstick{$\ket{+}$}                & \qw         & \qw         & \gate{H^{\otimes m}} & \qw                 & \qw                  & \rstick{$\ket{0}$}
	\end{quantikz}
	\caption{An alternative to the clean ancilla QROAM of \citen{Low2018-uu} and \citen{Berry2019-qo} that does not require uncomputation on the ancilla.
		Instead, it directly implements the desired data loading operation, XORing the data into the first output register and returning the ancilla to the \(\ket{0}\) state.
		Here \(T\) and \(S\) refer to a QROM data lookup and a series of controlled swap gates, as in \citen{Berry2019-qo}.
		\(S^{\dagger}\) can be implemented without using any non-Clifford gates using the measurement-based uncomputation approach shown in Figure 10 of \citen{Berry2019-qo}.}
	\label{fig:clean_ancilla_QROAM_without_uncomputation}
\end{figure}
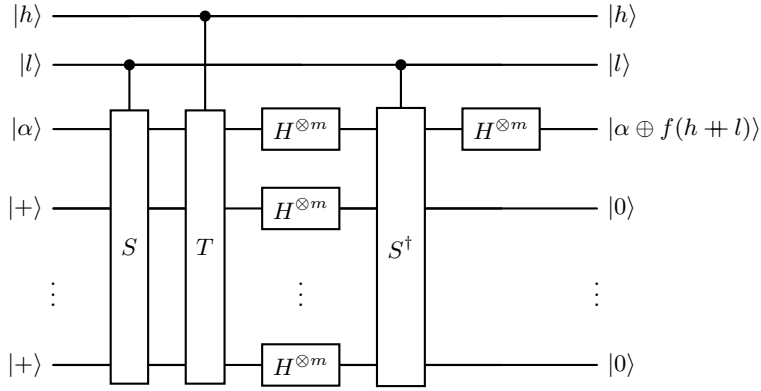

\section{Mass production for classical data loading}
\label{app:mass_production_protocol}

We detail our mass production protocol for classical data loading in this section.
Following Uhlig and Kretschmer, we construct our mass production protocol recursively~\cite{Ulig1974-xh,Uhlig1992-tg,Kretschmer2022-uy}.
We first present the two-query protocol for implementing \(O_f^{\otimes 2}\).
Then we explain how to construct the protocol for \(O_f^{\otimes 2r}\) given a protocol for \(O_f ^{\otimes r}\).
We comment on the asymptotic costs as we present the construction, culminating in \Cref{app:scaling_thm_proof}, where we prove \Cref{thm:mass_produced_qrom} by induction.
For an analysis of the constant factor costs, we refer the reader to the results presented in the main text and to our implementation of the protocol in the Qualtran software package~\cite{Harrigan2024-rj}.

Before we begin, it is helpful to introduce some convenient notation.
We use the symbol \(\doubleplus\) to denote the concatenation of two bitstrings, e.g., \(11 \doubleplus 00 = 1100\).
For integers \(a\) and \(b\) with \(a < b\), we use the notation \(\left[ a.
	.b \right]\) to denote the sequence \(a, a+1, a+2, \cdots b\).
We frequently conflate non-negative integers and the bitstrings that encode them in a standard unsigned big-endian format.

Our protocol implements \(r\) queries to an oracle \(O_f\) for an arbitrary function \(f: \left\{ 0, 1 \right\}^n \rightarrow \left\{ 0, 1 \right\}^m\), where \(r = 2^t, t \in \mathbb{Z}^+\).
We construct our protocol with several free parameters that we set later in order to optimize the costs.
We take \(\lambda\), a parameter that governs a tradeoff involved in the QROAM data lookups, to be a power of \(2\) such that \(\lambda \leq 2^{n/2} m^{-1/2}\).
For the purposes of statements about asymptotic scaling, we consider \(m\) a constant and \(\lambda\) to be a function of \(n\).
Similarly, viewing \(t\) as a function of \(n\), we require that \(t = o(\frac{n - 2 \log_2 \lambda}{\log_2 n})\) (although we will later relax this requirement for \Cref{thm:max_scaling_prop}).
We also introduce a parameter \(k \in \mathbb{Z}^+\), subject to the requirement that \(n > k t\) and \(k = \mathcal{O}\left( \log n \right)\).
More specifically, we will set \(k = \ceil{\log_2 n}\) when we prove \Cref{thm:mass_produced_qrom} in \Cref{app:scaling_thm_proof} and \(k=1\) when we prove \Cref{thm:max_scaling_prop},  although we keep it as a free parameter in the exposition since we will later optimize it to minimize the costs in our constant factor analysis.

Ultimately, we construct the circuit that implements \(O_f^{\otimes r}\) (which we denote by \(C_{f, n, m, \lambda, k, t}\)) by induction on \(t\).
Since we require that \(t = o(\frac{n - 2 \log_2 \lambda}{\log_2 n})\) and we are interested in understanding the asymptotic scaling for non-zero values of \(t\), we impose an additional technical condition on \(\lambda\).
Specifically, we demand that
\begin{equation}
	\lim_{n \rightarrow \infty} \frac{n - 2 \log_2 \lambda}{\log_2 n} \geq c
\end{equation}
for all \(c > 0\).
Rearranging this expression, we find that we need
\begin{equation}
	\lim_{n \rightarrow \infty} \lambda \leq 2^{-c} \frac{2^{n/2}}{\sqrt{n}}.
\end{equation}
Since \(2^{-c}\) is an arbitrary positive number, this is equivalent to the demand that
\begin{equation}
	\lambda = o\left(\frac{2^{n/2}}{\sqrt{n}}\right).
	\label{eq:lambda_cond}
\end{equation}
We note that this is a stricter demand (asymptotically) than the already-stated requirement that \(\lambda \leq 2^{n/2} m^{-1/2}\).

\subsection{The base case}
\label{app:base_case}

We begin by constructing a circuit \(C_{f, n, m, \lambda, k, 1}\), which we abbreviate as \(C\) throughout this section, that evaluates \(f\) on two inputs.
At a high level, we try to parallel \citen{Kretschmer2022-uy} and \citen{Uhlig1992-tg} in the construction and notation of the mass production theorems and \citen{Berry2019-qo} in the construction and notation of the QROAM data lookups.
The circuit \(C\) should act unitarily on \(2n + 2m\) qubits, applying \(O_f\) to a pair of input and output registers, i.e.,
\begin{equation}
	C: \ket{x}\ket{\alpha}\ket{y}\ket{\beta} \rightarrow \ket{x} \ket{\alpha \oplus f(x)} \ket{y} \ket{\beta \oplus f(y)} .
	\label{eq:inductive_step_desired}
\end{equation}

For simplicity, we make the assumption for now that \(x \leq y\).
If we can implement \(C\) in this case, then we can address the more general situation by first checking \(x \leq y\) with a comparator and swapping the input and output registers when \(x > y\).
After implementing \(C\), we can then swap the registers back and uncompute the comparator.

Following \citen{Uhlig1992-tg}, we define two families of functions, \(\left\{ f_\ell \right\}\) and \(\left\{ g_\ell \right\}\).
Let \(k \in \left\{ 1,\ldots,
	n \right\}\) be a constant that we will set later.
Let $\ell$ be a $k$-bit integer, \(\ell \in \{0,\ldots,2^k - 1\}\), and \(f_\ell: \left\{ 0,1 \right\}^{n-k} \rightarrow \left\{ 0, 1 \right\}^m\) denote the function \(f\) with the first \(k\) bits fixed to \(\ell\), i.e.,
\begin{equation}
	f_{\ell}\left( z \right) = f\left( \ell \doubleplus z \right).
\end{equation}
Then, for \(\ell \in \left[ 0\ldots 2^k \right]\), let
\begin{align}
	g_\ell = \begin{cases} f_0 & \ell=0,\\
    f_{\ell - 1} \oplus f_\ell& 1 \leq \ell \leq 2^k - 1,\\
    f_{2^k-1} & \ell=2^k 
    \end{cases}
	\label{eq:g_def_appendix}
\end{align}
Note that
\begin{equation}
	f_\ell(z) = \bigoplus_{j=0}^{\ell} g_j(z) = \bigoplus_{j=\ell+1}^{2^k}g_j(z).
	\label{eq:f_ell_g_ell_trick_app}
\end{equation}
We will show that the key insight that enables mass production is that we can take advantage of \Cref{eq:f_ell_g_ell_trick_app} to evaluate \(f\) on two different inputs while only evaluating each \(g_\ell\) at most once.
Because the \(g_{\ell}\) are ``simpler'' by a factor of \(2^k\), then the overall cost of implementing \(2^k + 1\) of them is comparable to the cost of implementing a single call to \(f\).

We explain the construction for \(C\) by describing its action on a computational basis state \(\ket{x}\ket{\alpha}\ket{y}\ket{\beta}\).
We treat the first \(k\) bits of \(x\) and \(y\) differently from the remaining \(n-k\), so we adopt the notation that
\begin{align}
	x & = x_L \doubleplus x_R
	\\
	y & = y_L \doubleplus y_R,
\end{align}
where \(x_L\) (\(y_L\)) denotes the first \(k\) bits of \(x\) (\(y\)) and \(x_R\) (\(y_R\)) denotes the remaining bits.
We implement \(C\) by alternating the application of two key primitives, \(G_\ell\) and \(A_{\ell}\), as illustrated in \Cref{fig:C_construction}.

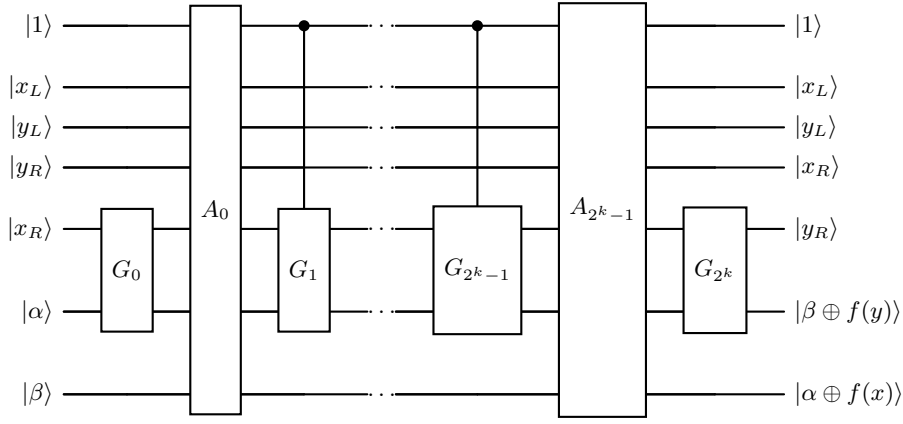
\begin{figure}
	\centering
	\begin{quantikz}[wire types={q,q,q,q,q,q,q}, scale=2]
		\lstick{$\ket{1}$}      & \qw           & \gate[7]{A_0} & \ctrl{4}      & \ldots & \ctrl{4}              & \gate[7]{A_{2^k - 1}} & \qw               & \rstick{$\ket{1}$}
		\\
		\lstick{$\ket{x_L}$}    & \qw           & \qw           & \qw           & \ldots & \qw                   & \qw                   & \qw               & \rstick{$\ket{x_L}$}
		\\
		\lstick{$\ket{y_L}$}    & \qw           & \qw           & \qw           & \ldots & \qw                   & \qw                   & \qw               & \rstick{$\ket{y_L}$}
		\\
		\lstick{$\ket{y_R}$}    & \qw           & \qw           & \qw           & \ldots & \qw                   & \qw                   & \qw               & \rstick{$\ket{x_R}$}
		\\
		\lstick{$\ket{x_R}$}    & \gate[2]{G_0} & \qw           & \gate[2]{G_1} & \ldots & \gate[2]{G_{2^k - 1}} & \qw                   & \gate[2]{G_{2^k}} & \rstick{$\ket{y_R}$}
		\\
		\lstick{$\ket{\alpha}$} & \qw           & \qw           & \qw           & \ldots & \qw                   & \qw                   & \qw               & \rstick{$\ket{\beta \oplus f(y)}$}
		\\
		\lstick{$\ket{\beta}$}  & \qw           & \qw           & \qw           & \ldots & \qw                   & \qw                   & \qw               & \rstick{$\ket{\alpha \oplus f(x)}$}
	\end{quantikz}
	\caption{A quantum circuit diagram for \(C\).
		For clarity, we omit some initial and final swaps required to make the registers correspond exactly to \Cref{eq:inductive_step_desired}.
		\(G_{\ell}\) is a data lookup that implements the function \(g_{\ell}\).
		The \(A_{\ell}\), which are defined in the text and illustrated in \Cref{fig:advance_ell}, handle the ``control flow,'' routing the inputs and outputs appropriately for each call to \(G_\ell\).
	}
	\label{fig:C_construction}
\end{figure}

The first type of primitive, \(G_\ell\), is a (usually controlled) call to a data lookup oracle for the corresponding function \(g_{\ell}\),
\begin{equation}
	G_\ell: \ket{c}\ket{z}\ket{\gamma} \rightarrow \ket{c}
	\begin{cases*}
		\ket{z}\ket{\gamma}                  & \text{ if } c=0
		\\
		\ket{z}\ket{\gamma \oplus g_\ell(z)} & \text{ if } c=1.
	\end{cases*}
\end{equation}
We implement this data lookup using the techniques described in \Cref{app:modified_clean_qroam}, which is a slight variation of the usual QROAM construction from \citen{Berry2019-qo}.
As shown in \Cref{fig:C_construction}, we apply the \(G_{\ell}\) sequentially for \(\ell \in \left[ 0,\ldots, 2^k \right]\).

For each \(G_{\ell}\), we will be in one of three cases,
\begin{enumerate}
	\item $\ell \leq x_L,$
	\item $x_L < \ell \leq y_L,$ or
	\item $y_L < \ell.$
\end{enumerate}
In case \(1\), we would like to use \(G_{\ell}\) to evaluate \(g_\ell (x_R)\) and XOR the output into the \(\alpha\) register.
In case \(2\), we should effectively not apply \(G_{\ell}\) at all.
In case \(3\), we want to evaluate \(g_{\ell}(y_R)\) and XOR the output into the \(\beta\) register.

In order to implement the desired operations, the control qubit for the data lookups should be in the \(\ket{1}\) state whenever \(G_\ell\) is called and we are in case \(1\) or case \(3\), and in the \(\ket{0}\) state when we are in case \(2\).
Furthermore, when we are in case \(1\), \(G_\ell\) should be called with the input register in the state \(\ket{x_R}\) and the output should be written to the register that is initially in the state \(\ket{\alpha}\).
When we are in case \(3\), the inputs and outputs should be swapped so that \(G_\ell\)'s input register is in the state \(\ket{y_R}\) and the output register is the one that initially contained the state \(\ket{\beta}\).

By definition we are in case \(1\) when \(\ell=0\), so we can simply apply \(G_0\) without a control qubit as in the first step of \Cref{fig:C_construction}.
After each call to \(G_\ell\) (except for the last), we can then apply \(A_\ell\) as defined in \Cref{fig:advance_ell} to flip the control qubit and swap the input and output registers as necessary.
Inspecting \Cref{fig:advance_ell}, we see that \(A_\ell\) will flip the control qubit for the \(G_\ell\) whenever \(\ell = x_L\), transitioning us from the settings necessary for case \(1\) to the settings necessary for case \(2\).
Likewise, when \(\ell = y_L\), the control qubit is flipped again and we swap the two input and two output registers, as required for case \(3\).
After applying \(G_{2^{k}}\), the only remaining task is to swap the input and output registers back, since they were swapped exactly once by the action of the \(A_{\ell}\)s.
Let us denote this final swapping operation by \(A_{2^k}\).

\begin{figure}
	\centering
	\begin{quantikz}[wire types={q,q,q,q,q,q,q,q}, classical gap=.05cm]
		\lstick{$\ket{0}$}        & \qw             & \targ{}         & \ctrl{1} & \ctrl{5} & \ctrl{7} & \targ{}         & \rstick{$\ket{0}$}
		\\
		\lstick{$\ket{\text{c}}$} & \targ{}         & \qw             & \targ{}  & \qw      & \qw      & \qw             & \qw
		\\
		\lstick{$\ket{x_L}$}      & \ctrl[open]{-1} & \qw             & \qw      & \qw      & \qw      & \qw             & \qw
		\\
		\lstick{$\ket{y_L}$}      & \qw             & \ctrl[open]{-3} & \qw      & \qw      & \qw      & \ctrl[open]{-3} & \qw
		\\
		\lstick{$\ket{y_R}$}      & \qw             & \qw             & \qw      & \targX{} & \qw      & \qw             & \qw
		\\
		\lstick{$\ket{x_R}$}      & \qw             & \qw             & \qw      & \targX{} & \qw      & \qw             & \qw
		\\
		\lstick{$\ket{\alpha}$}   & \qw             & \qw             & \qw      & \qw      & \targX{} & \qw             & \qw
		\\
		\lstick{$\ket{\beta}$}    & \qw             & \qw             & \qw      & \qw      & \targX{} & \qw             & \qw
	\end{quantikz}
	\caption{A quantum circuit diagram for a possible realization of \(A_\ell\).
		In this figure, the controlled operations where the control qubit is marked with an open circle indicate operations that are controlled based on the condition that the control register encodes the value \(\ell\).
		Our Qualtran implementation of this primitive uses machinery built into the library to compile the multi-qubit controls, which introduces \(\mathcal{O}\left( k \right)\) additional ancilla qubits not shown here.
	}
	\label{fig:advance_ell}
\end{figure}
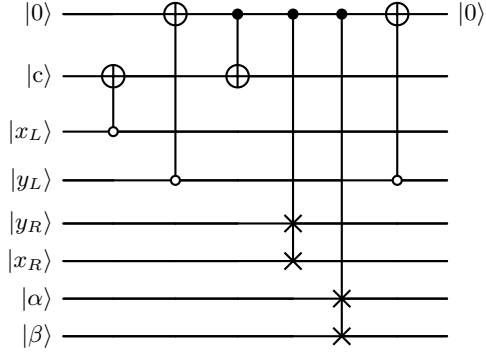

Letting \(S\) denote the additional swap operations necessary to convert \(\ket{x}\ket{\alpha}\ket{y}\ket{\beta}\) to \(\ket{x_L} \ket{y_L} \ket{y_R} \ket{x_R} \ket{\alpha} \ket{\beta}\) (which are really an artifact of our presentation anyway), we have
\begin{align}
	S^\dagger A_{2^k} G_{2^k} \cdots A_1 G_1 A_0 G_0 S \ket{1}\ket{x}\ket{\alpha}\ket{y}\ket{\beta} = \ket{1}\ket{x}\ket{\alpha'}\ket{y}\ket{\beta'},
\end{align}
where, taking advantage of \Cref{eq:f_ell_g_ell_trick_app}, we have
\begin{align}
	\alpha' & =
	\alpha \bigoplus_{j=0}^{x_L} g_j(x_R) = \alpha \oplus f_{x_L}\left( x_R \right) = \alpha \oplus f(x_L \doubleplus x_R) = \alpha \oplus f(x)
	\\
	\beta'  & =
	\beta \bigoplus_{j=y_L}^{2^k} g_j(y_R) = \beta \oplus f_{y_L}\left( y_R \right) = \beta \oplus f(y_L \doubleplus y_R) = \beta \oplus f(y)
	.
\end{align}
Therefore, the circuit presented in \Cref{fig:C_construction} implements \Cref{eq:inductive_step_desired} as desired.

\subsection{Cost analysis for the two copy protocol}
\label{app:base_case_cost_analysis}

The two-copy protocol for mass production forms the foundation for the \(r\)-copy protocol, so it is worth carefully analyzing its costs.
Here we address the asymptotic scaling.
There are contributions to the cost from four sources, which we order by their importance:
\begin{enumerate}
	\item The QROM data lookups \(G_{\ell}\).
	\item The control flow circuits \(A_{\ell}\).
	\item The reduction to the case where \(x \leq y\).
	\item A few additional swap gates.
\end{enumerate}

The dominant cost comes from the \(2^{k} + 1\) QROM data lookups.
As we discussed in \Cref{app:qrom_background}, while it is possible to reduce the number of non-Clifford gates required for a data lookup to scale sublinearly in \(2^n m\) (using techniques known as SelectSwap QROM or advanced QROM), the number of Clifford gates can not be reduced.
Therefore, in the cost model we are considering, the asymptotic scaling is driven by the number of Clifford gates.
Nevertheless, we find it illuminating to discuss the Clifford and non-Clifford costs separately here.

Using the modified QROM construction presented in \Cref{app:modified_clean_qroam}, we can implement an oracle \(O_f\) with \(n\) input qubits and \(m\) output qubits using
\begin{equation}
	\textsc{Clifford}(O_f) \leq \left(K + o(1)\right) 2^{n} m
	\label{eq:simple_qroam_clifford_cost}
\end{equation}
one- and two-qubit Clifford gates for some constant \(K\), and
\begin{equation}
	\textsc{Toffoli}(O_f) = 2^{n}\lambda^{-1} + \lambda m - m
	\label{eq:simple_qroam_toffoli_cost}
\end{equation}
Toffoli gates.
Therefore,
\begin{equation}
	\textsc{Cost}(O_f) \leq \left( K + o(1) \right) 2^{n} m + \Xi \left(2^{n}\lambda^{-1} + \lambda m \right).
	\label{eq:simple_qroam_cost}
\end{equation}
In our construction, we implement \(2^{k} + 1\) smaller data lookups, with \(n-k\) input bits and \(m\) output bits.
The overall cost of implementing all of the \(G_\ell\) is given by
\begin{equation}
	\textsc{Cost}(\left\{ G_\ell \right\}) \leq \left(1 + 2^{k}\right) \left(\left(K + o(1)\right) 2^{n-k} m  + \Xi \left(2^{n-k}\lambda^{-1} + \lambda m \right)\right).
	\label{eq:G_ell_asymptotic_cost_two_copies}
\end{equation}

We claim that the other three costs are asymptotically smaller (as we increase \(n\)), and so they can be neglected.
We address them point by point.
We implement \(2^{k}\) of the \(A_{\ell}\) circuits as described in \Cref{fig:advance_ell}.
Each of these involve some equality checks on \(k\) qubits and controlled swaps on \(n-k\) qubits and \(m\) qubits.
The number of Clifford and non-Clifford gates required by this component of the algorithm therefore both scale as \(\mathcal{O} \left(2^k \left(n + m\right) \right)\).
Using our assumption that \(k = \mathcal{O}\left( \log n \right)\), then this scaling is \(\mathcal{O}\left( \text{Poly}\left( n \right) \right)\), which is dominated by \(2^{n-k} m\).
The reduction to the case where \(x \leq y\) involves inequality checks on registers of size \(n\) and controlled swaps on registers of size \(n\) and \(m\).
The Clifford and non-Clifford costs for this aspect of the algorithm both scale as \(\mathcal{O} \left( n + m \right)\), which is also dominated by \(2^{n-k} m\).
The number of additional swap gates scales as \(\mathcal{O}(n + m)\) and is also negligible.
Therefore, we have 
\begin{equation}
	\textsc{Clifford}\left(C_{f, n, m, \lambda, k, 1}\right) \leq \left(1 + 2^k\right)\left(K + o(1)\right) 2^{n-k} m
	\label{eq:C_asymptotic_clifford_scaling}
\end{equation}
and
\begin{equation}
	\textsc{Toffoli}\left(C_{f, n, m, \lambda, k, 1}\right) = \left(1 + 2^k\right)\left(1 + o(1)\right)\left(2^{n-k}\lambda^{-1} + \lambda m - m\right).\label{eq:C_asymptotic_Toff_scaling}
\end{equation}

Asymptotically then, it is simple to compare the cost of the entire circuit for the two-copy protocol (which we abbreviate as \(C\)), with two separate QROAM queries (\(O_f^{\otimes 2}\)).
First, let us consider the limit where \(\lambda(n)\) is chosen such that the cost is dominated by the number of Clifford gates.
If we assume that the bounds are saturated in both \Cref{eq:simple_qroam_cost} and \Cref{eq:C_asymptotic_clifford_scaling}, we find that
\begin{equation}
	\frac{\textsc{Cost}\left( O_f^{\otimes 2} \right)}{\textsc{Cost}\left( C \right)} \sim \frac{2 \left( 2^n m \right)}{\left( 1 + 2^k \right)\left( 2^{n-k}m \right)} \sim\frac{2}{1 + 2^{-k}}.
\end{equation}
Therefore, we save a factor of \(2\) asymptotically.
Furthermore, we approach this level of savings quickly as \(k\) increases.

It is interesting to briefly consider what would happen if we focused on a cost model that ignores the cost of Clifford gates, as previous work on fault-tolerant quantum algorithms often has.
When we choose \(\lambda\) to minimize the cost of QROAM in this model,\footnote{We relax the requirement that \(\lambda\) is a power of two for the purposes of simplifying the analysis here.
} we find that implementing \(O_f\) for an \(n\) to \(m\) bit function \(f\) requires a number of Toffoli gates that scales as
\begin{equation}
	\textsc{Toffoli}\left( O_f \right) \sim 2 \left(2^{n/2} m^{1/2}\right). 
\end{equation}
We claim that the number of non-Clifford gates required to implement \(C\) is dominated by the cost of the \(G_{\ell}\) in this cost model as well.
Taking this claim as true, the \(2^{k} + 1\) calls to \(G_{\ell}\) therefore require a number of Toffoli gates that scales as \(2 \left(2^{k} + 1\right) 2^{\left(n-k\right)/2} m^{1/2}\).
We can therefore calculate the cost ratio to find that
\begin{equation}
	\frac{\textsc{Toffoli}\left( O_f^{\otimes 2} \right)}{\textsc{Toffoli}\left( C \right)} \sim
	\frac{4 \left( 2^{n/2} m^{1/2} \right)}{2\left( 1 + 2^k \right)\left( 2^{\left( n-k \right)/2}m^{1/2} \right)}
	\sim \frac{2^{k/2 + 1}}{2^{k} + 1},
\end{equation}
which is less than or equal to \(1\) for all positive \(k\) and asymptotically approaches \(2 n^{-1/2}\) when we set \(k = \ceil{\log_2 n}\).

In other words, it always requires fewer Toffoli gates to implement \(O^{\otimes 2}_f\) directly than to use mass production, and so there is no advantage to the two-copy mass production protocol in this cost model.
Since the advantage for larger values of \(r\) depends on the two-copy version, there would be no advantage in those cases either.
Fundamentally, this is because the strategy for mass production that we pursue relies on the smaller subproblems (the \(G_{\ell}\) in our case) becoming less costly at the same exponential rate that the number of such subproblems is increasing.
This is not the case when we quantify the complexity by the number of non-Clifford gates required.
It is an interesting open problem to determine whether or not it is possible to find an advantage in this cost model using a different construction for mass production.

\subsection{The inductive step and the proof of \Cref{thm:mass_produced_qrom}}
\label{app:scaling_thm_proof}

Now that we have discussed the two-copy protocol extensively, we can introduce the more general protocol.
Following Uhlig, we construct the protocol for mass-producing \(r=2^{t}\) copies recursively~\cite{Uhlig1992-tg}.
The key insight that enables this recursion is that the basic data loading task we are parallelizing is present in our construction as a subroutine.
Specifically, when we use mass production to implement \(O_f^{\otimes 2}\), we perform a series of calls to the data loading oracles \(G_{\ell}\).
Furthermore, as we saw, these calls to \(G_{\ell}\) are the dominant contribution to the overall cost.

Inspired by this observation, we can construct a protocol for mass-producing \(2 r\) queries that uses an \(r\)-query mass production protocol (for some \(r = 2^t\)) as a subroutine.
To proceed, we begin by imagining that we implement \(r\) copies of the two-query mass production protocol in parallel.
This would accomplish the desired task of implementing \(O_f^{\otimes 2 r}\), but it would not have the desired cost.
If we were to proceed this way, then we would find ourselves implementing \(r\) queries to each of the smaller data loading oracles \(G_{\ell}\) in parallel.
Instead of doing this, we simply implement the \(r\) parallel queries to each \(G_{\ell}\) using an \(r\)-query mass production protocol, as we illustrate in \Cref{fig:recursion_cartoon_big} for \(r=4\).

\begin{figure}
	\input{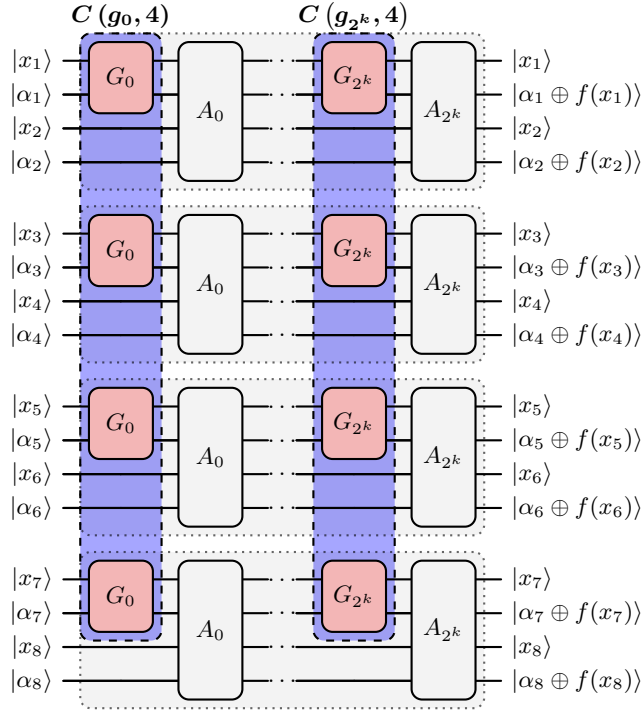}
	\caption{
		A diagram showing how the \(r=8\)-query mass production protocol is constructed recursively.}
	\label{fig:recursion_cartoon_big}
\end{figure}

Intuitively, if the \(r\)-query mass production scheme scales in the desired way (nearly independent of \(r\)), then the \(2r\)-query one will as well.
This is because, as we discussed, the cost of the two-query scheme is dominated by the cost of implementing the \(G_{\ell}\).
In order to prove \Cref{thm:mass_produced_qrom}, we will first prove the following lemma by induction:

\begin{restatable}[Quantum Mass Production for quantum read-only memory]{lemma}{qromlemma}
	Let \(f: \left\{ 0, 1 \right\}^n \rightarrow \left\{ 0, 1 \right\}^m\) be arbitrary, and let \(O_f\) be a unitary operator that queries this function in superposition, \(O_f: \ket{x}\ket{\alpha} \rightarrow \ket{x} \ket{\alpha \oplus f(x)}\).
	Let \(\lambda\) be a function of \(n\) that returns a power of \(2\) such that \(	\lambda = o\left(\frac{2^{n/2}}{\sqrt{n}}\right)\) and \(1 \leq \lambda \leq 2^{n/2} m^{-1/2}\), and let \(m\) be a constant function of $n$.
	Let \(k\) be a function of \(n\) that returns a positive integer such that \(k = \mathcal{O}\left( \log n \right)\).
	For any positive integer \(t\) such that \(kt < n\), there exists a quantum circuit \(C_{f, n, m, \lambda, k, t}\) with the following properties:

	\begin{enumerate}
		\item \(C_{f, n, m, \lambda, k, t}\) implements \(O_f^{\otimes r}\), where \(r=2^t\).
		\item \(C_{f, n, m, \lambda, k, t}\) is composed entirely of one- and two-qubit Clifford gates and Toffoli gates.
		
		\item The number of one- and two-qubit Clifford gates in \(C_{f, n, m, \lambda, k, t}\) is bounded by
		\begin{equation}
			\textsc{Clifford}\left( C_{f, n, m, \lambda, k, t} \right) \leq \left(1 + 2^k\right)^t \left( K + o(1) \right) \left(2^{n-tk} m + \bigo{n}\right)
			\label{eq:overall_clifford_cost_inductive_assumption}
		\end{equation}
		for some universal constant \(K\).
		\item The number of Toffoli gates in \(C_{f, n, m, \lambda, k, t}\) is
		\begin{equation}
			\textsc{Toffoli}\left( C_{f, n, m, \lambda, k, t} \right) = \left(1 + 2^{k}\right)^t \left( 1 + o(1) \right) \left(2^{n-tk} \lambda^{-1} + m \lambda + \bigo{n}\right),
			\label{eq:overall_T_cost_inductive_assumption}
		\end{equation}
	\end{enumerate}
	\label{lemma:mass_produced_qrom}
\end{restatable}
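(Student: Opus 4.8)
The plan is to prove \Cref{lemma:mass_produced_qrom} by induction on $t$, following the doubling recursion sketched in \Cref{fig:recursion_cartoon_big}. For the base case I would take $t=1$, i.e.\ the two-copy circuit $C_{f,n,m,\lambda,k,1}$ analyzed in \Cref{app:base_case_cost_analysis}; its Clifford and Toffoli counts are exactly \Cref{eq:C_asymptotic_clifford_scaling} and \Cref{eq:C_asymptotic_Toff_scaling}, which agree with the claimed bounds \Cref{eq:overall_clifford_cost_inductive_assumption} and \Cref{eq:overall_T_cost_inductive_assumption} at $t=1$ once the additive $\bigo{n}$ slack is used to absorb the $A_\ell$ routing, the $x\le y$ reduction, the extra swaps, and the $-m$ term. (One could equally start from the trivial case $t=0$, a single QROAM lookup with costs \Cref{eq:simple_qroam_clifford_cost} and \Cref{eq:simple_qroam_toffoli_cost}, and let the inductive step itself generate the two-copy circuit.)

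For the inductive step, assume the lemma for $t$ and build $C_{f,n,m,\lambda,k,t+1}$ exactly as in \Cref{fig:recursion_cartoon_big}: lay out $r=2^t$ parallel copies of the outer two-copy protocol, but replace the $r$ parallel calls to each data-lookup primitive $G_\ell$ (on the $(n-k)$-bit functions $g_\ell$ of \Cref{eq:g_def_appendix}) by a single invocation of the $r$-query circuit $C_{g_\ell,n-k,m,\lambda,k,t}$. Correctness is immediate from the inductive hypothesis together with the base-case analysis: each $C_{g_\ell,n-k,m,\lambda,k,t}$ implements $O_{g_\ell}^{\otimes r}$, so once the routing primitives $A_\ell$ (\Cref{fig:advance_ell}) direct the inputs and outputs, \Cref{eq:f_ell_g_ell_trick_app} accumulates $f$ on all $2r$ input registers, yielding $O_f^{\otimes 2r}$. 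The one point requiring care here is that the $G_\ell$ are \emph{controlled} lookups; I would handle the control by folding it into the routing (a single-qubit control on a QROAM costs only $\mathrm{poly}(n)$ extra gates), so that the object mass-produced by the inductive hypothesis is the plain $(n-k)$-bit lookup and the control contributes only at subleading order.

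I would then tally the costs. Since $g_\ell$ has $n-k$ input bits and $k=\bigo{\log n}$, the inductive hypothesis applied to each of the $2^k+1$ functions bounds its Clifford count by $(1+2^k)^t(K+o(1))(2^{n-(t+1)k}m+\bigo{n})$, with an analogous Toffoli count, and summing over $\ell$ produces the common prefactor $(1+2^k)^{t+1}$. The $r$ replicated copies of the outer routing contribute $\bigo{2^{t+k}n}$ gates of each type; because $k\ge 1$ we have $2^{t+k}\le 2^{k(t+1)}\le (1+2^k)^{t+1}$, so this is swallowed by the $(1+2^k)^{t+1}\bigo{n}$ slack. Collecting terms reproduces \Cref{eq:overall_clifford_cost_inductive_assumption} and \Cref{eq:overall_T_cost_inductive_assumption} at $t+1$, closing the induction; \Cref{thm:mass_produced_qrom} then follows by specializing $k=\ceil{\log_2 n}$ and bounding the $(1+2^k)^t$ prefactor.

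The delicate part is the asymptotic bookkeeping of the error terms. The crucial point is that the $(K+o(1))$ and $(1+o(1))$ factors must remain a \emph{single} $o(1)$ after the induction rather than compounding into $(1+o(1))^t$: this works because the common factor pulls out of the sum over $\ell$ cleanly and, when the recursion is unrolled, the $o(1)$ ultimately descends from the one QROAM cost formula evaluated at the deepest level, on the $(2^k+1)^t$ functions of $n-tk$ bits. I would therefore verify that the lookup primitive stays valid and its $o(1)$ meaningful all the way down, i.e.\ that $n-tk\to\infty$ and $\lambda$ remains an admissible block size at depth $t$; this is where the hypotheses $kt<n$, $k=\bigo{\log n}$, and $\lambda=o(2^{n/2}/\sqrt n)$ are consumed, and confirming that they propagate correctly through every level of the recursion is the main obstacle.
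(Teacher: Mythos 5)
Your proposal is correct and follows essentially the same route as the paper: induction on \(t\) with the two-copy analysis of \Cref{app:base_case_cost_analysis} (\Cref{eq:C_asymptotic_clifford_scaling,eq:C_asymptotic_Toff_scaling}) as the base case, and an inductive step that replaces the \(2^t\) parallel calls to each \(G_\ell\) by \(C_{g_\ell,n-k,m,\lambda,k,t}\), absorbing the replicated routing circuitry into the \(\left(1+2^k\right)^{t+1}\bigo{n}\) slack exactly as the paper does. The bookkeeping concerns you flag at the end (the \(o(1)\) factors not compounding, and the hypotheses remaining valid at input size \(n-tk\)) are precisely the points the paper's proof treats implicitly, and your resolution of them is sound.
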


\begin{proof}
	
We begin by making the strong inductive assumption that the lemma holds for any \(t' \leq t\).
The base case in our inductive argument follows from the observations made in \Cref{eq:C_asymptotic_clifford_scaling} and \Cref{eq:C_asymptotic_Toff_scaling} of \Cref{app:base_case_cost_analysis}.
Taken together, they show that the hypothesis holds for $t=1$.
Using this assumption, we will show that our assumption implies that the lemma holds for \(t + 1\) under the condition that \(k\left( t+1 \right) < n\).
By induction, this will prove that the lemma is true for all \(t\) such that \(kt < n\).

Following the construction described above, we imagine implementing the two-copy mass production construction in parallel \(2^t\) times.
Note that this involves \(2^t\) parallel calls to each \(G_\ell\) and that our assumption that \(k (t + 1) < n\) implies that \(kt < n-k\).
Therefore our inductive assumption tells us that there exists a circuit \(C_{g_\ell, n-k, m, \lambda, k, t}\) that implements \(G_{\ell}^{\otimes 2^t}\) for each \(g_\ell\) and that each of these circuits requires a number of Clifford gates that scales as
\begin{equation}
	\textsc{Clifford}\left( C_{g_{\ell}, n-k, m, \lambda, k, t} \right) \leq \left(1 + 2^k\right)^t \left( K + o(1) \right) \left(2^{n-k-tk} m + \bigo{n}\right).
	\label{eq:sub_step_apply_inductive_clifford_cost}
\end{equation}
and a number of Toffoli gates that scales as
\begin{equation}
	\textsc{Toffoli}\left( C_{g_{\ell}, n-k, m, \lambda, k, t} \right) = \left(1 + 2^{k}\right)^t \left( 1 + o(1) \right) \left(2^{n-k-tk} \lambda^{-1} + m \lambda + \bigo{n}\right).
	\label{eq:sub_step_apply_inductive_T_cost}
\end{equation}

We define the circuit \(C_{f, n, m, \lambda, k, t+1}\) by taking the circuit that implements the two-copy mass production in \(2^t\) times in parallel and replacing the parallel calls to each \(G_\ell\) with the corresponding \(C_{g_\ell, n-k, m, \lambda, k, t}\).
Now we need to show that the number of gates scales appropriately.
We begin by considering the number of Clifford gates, finding that
\begin{align}
	\textsc{Clifford}\left( C_{f, n, m, \lambda, k, t + 1} \right) \leq
	\underbrace{\left(1 + 2^k\right)^{t+1} \left( K + o(1) \right) \left(2^{n-\left(t+1\right)k} m + \bigo{n} \right)}_{\text{Mass producing the } 2^k + 1 \text{ calls to the }
	G_{\ell}^{\otimes 2^{t}}.
	}
	\nonumber
	\\
	+
	\underbrace{2^t \mathcal{O} \left(2^k \left(n + m \right)\right)}_{\text{The \(2^t\) copies of the control flow circuitry.}}
\end{align}
The first term comes from accounting for the cost of implementing a call to \(C_{g_{\ell}, n-k, m, \lambda, k ,t}\) for each of the \(2^{k} + 1\) different data lookups \(G_\ell\), i.e., we multiply \Cref{eq:sub_step_apply_inductive_clifford_cost} by \(2^k + 1\).
The second term follows from multiplying the complexity of the control flow circuitry for a single two-query mass production protocol by \(2^t\) (see \Cref{app:base_case_cost_analysis} for more details).
Because \(2^t 2^{k} < \left( 1 + 2^k \right)^{t+1}\) and \(n + m = \bigo{n}\), the second term can be absorbed into the \(\left( 1 + 2^k \right)^{t+1} K \bigo{n}\) component of the first term.
As a result, we have
\begin{equation}
	\textsc{Clifford}\left( C_{f, n, m, \lambda, k, t + 1} \right) \leq
	\left(1 + 2^k\right)^{t+1} \left( K + o(1) \right) \left(2^{n-\left(t+1\right)k} m + \bigo{n}\right).
\end{equation}
This is the desired inequality for the number of Clifford gates in the \(t+1\) case.

Now we turn our attention to the number of Toffoli gates.
Performing the same calculation in this case, we have
\begin{align}
	\textsc{Toffoli}\left( C_{f, n, m, \lambda, k, t + 1} \right) =
	\underbrace{\left(1 + 2^k\right)^{t+1} \left( 1 + o(1) \right) \left(2^{n-\left(t+1\right)k} \lambda^{-1} + \lambda m + \bigo{n}\right)}_{\text{Mass producing the } 2^k + 1 \text{ calls to the }
	G_{\ell}^{\otimes 2^{t}}} \nonumber
	\\
	+ \underbrace{2^t \mathcal{O} \left(2^k \left(n + m \right)\right)}_{\text{The \(2^t\) copies of the control flow circuitry}},
\end{align}
where the first term arises from mass-producing the calls to the \(G_\ell\) and the second term arises from having \(2^{t}\) copies of the control flow circuitry used by the two-copy protocol.
Following the same steps as the Clifford case, we can absorb the second term, yielding
\begin{equation}
	\textsc{Toffoli}\left( C_{f, n, m, \lambda, k, t + 1} \right) =
	\left(1 + 2^k\right)^{t+1} \left( 1 + o(1) \right) \left(2^{n-\left(t+1\right)k} \lambda^{-1} + \lambda m + \bigo{n}\right).
\end{equation}
as desired, completing the proof of the lemma.
\end{proof}

Having proved \Cref{lemma:mass_produced_qrom}, we are ready to prove \Cref{thm:mass_produced_qrom}, which we recall below.
Our proof is straightforward, essentially using the assumptions in the statement of \Cref{thm:mass_produced_qrom} and a particular choice for \(k\) to simplify the Clifford and Toffoli counts of \Cref{lemma:mass_produced_qrom} further.
\qromthm*

\begin{proof}
We set \(k = \ceil{\log_2 n}\) and recall the assumption that \(r = 2^{o(\frac{n - 2 \log_2 \lambda}{\log_2 n})}\), or, equivalently, \(t = o(\frac{n - 2 \log_2 \lambda}{\log_2 n})\).
Rearranging \Cref{eq:overall_clifford_cost_inductive_assumption} by pulling out and cancelling a factor of \(2^{kt}\), we find that
\begin{equation}
	\textsc{Clifford}\left( C_{f, n, m, \lambda, k, t} \right) \leq \left(1 + 2^{-k}\right)^t \left( K + o(1) \right) \left(2^{n} m + 2^{kt} \bigo{n}\right).
\end{equation}
Taking advantage of the fact that \( 1 + x < e^x\) for \(x > 0\), we have \(\left( 1 + x \right)^t < e^{tx}\) for any \(x, t > 0\), so \(\left(1 + 2^{-k}\right)^t < e^{\frac{t}{2^k}} \leq e^{\frac{t}{n}}\).
The assumption that \(t = o(\frac{n - 2 \log_2 \lambda}{\log_2 n})\) implies that \(\frac{t}{n} = o(\log_2^{-1} n)\), so \(e^{\frac{t}{n}} = 1 + o(1)\), and therefore \(\left( 1 + 2^{-k} \right)^t = 1 + o(1)\).
Asymptotically, \(2^{kt} \bigo{n}\) is bounded by \(2^{kt + c \log_2 n}\) for some constant \(c\).
Since \(kt = o(n - 2 \log_2 \lambda) = o(n)\), we also have \(kt + c \log_2 n = o(n)\) and therefore \(2^{kt} \bigo{n}\) is dominated by \(2^{n} m\).
Applying both of these simplifications yields
\begin{equation}
	\textsc{Clifford}\left( C_{f, n, m, \lambda, k, t} \right) \leq \left( K + o(1) \right) 2^{n} m.
	\label{eq:final_clifford_scaling}
\end{equation}

Beginning a similar analysis of \Cref{eq:overall_T_cost_inductive_assumption}, we have
\begin{equation}
	\textsc{Toffoli}\left( C_{f, n, m, \lambda, k, t} \right) = \left(1 + 2^{-k}\right)^t \left( 1 + o(1) \right) \left(2^{n} \lambda^{-1} + 2^{kt} \lambda m + 2^{kt}\bigo{n}\right).
\end{equation}
Following the same reasoning as above, since \(\left(1 + 2^{-k}\right)^t = 1 + o(1)\), we have
\begin{equation}
	\textsc{Toffoli}\left( C_{f, n, m, \lambda, k, t} \right) = \left( 1 + o(1) \right) \left(2^{n} \lambda^{-1} + 2^{kt} \lambda m + 2^{kt}\bigo{n}\right).
\end{equation}
In order to establish that the \(2^{kt} \bigo{n}\) term is superfluous, we note that \(kt = o(n)\) implies that \(2^{kt} \bigo{n}\) is dominated by \(2^{n/2} m\).
From our assumption that \(\lambda \leq 2^{n/2}m^{-1/2}\), we have that \(2^{n} \lambda^{-1} \geq 2^{n/2}\), and therefore the \(2^{kt} \bigo{n}\) term can be safely dropped.
Finally, using our assumption on \(t\), we have \(2^{kt} \lambda m = 2^{o(n - \log_2 \lambda)}\), implying that the contribution from this term is dominated by \(2^{n} \lambda^{-1}\).
Therefore we can further simplify to yield
\begin{equation}
	\textsc{Toffoli}\left( C_{f, n, m, \lambda, k, t} \right) = \left( 1 + o(1) \right) 2^{n} \lambda^{-1},
\end{equation}
as desired, completing the proof of the theorem.
\end{proof}

\subsection{Maximizing the improvement factor}
\label{app:max_improvement_factor}

Taking \(\lambda\) to be a constant and assuming that Clifford costs saturate the inequalities in \Cref{eq:simple_qroam_clifford_cost} and \Cref{eq:final_clifford_scaling}, \Cref{thm:mass_produced_qrom} implies that we can implement \(O_f^{\otimes r}\) for a cost that is asymptotically equal to the cost of implementing \(O_f\) for any \(r\) such that \(r = 2^{o\left( n / \log n \right)}\).
Recall the definition of the improvement factor \(\mathcal{I}\),
\begin{equation}
	\mathcal{I} = \frac{\textsc{Cost}\left( O_f^{\otimes r} \right)}{\textsc{Cost}\left( C_{f, t} \right)} = \frac{r \textsc{Cost}\left( O_f \right)}{\textsc{Cost}\left( C_{f, t} \right)},
	\label{eq:improvement_factor_app}
\end{equation}
where we drop the irrelevant subscripts and use \(C_{f, t}\) to denote the circuit that mass-produces \(O_f^{\otimes r}\) for some \(r = 2^t\).
When \(\textsc{Cost}\left( O_f \right) \sim \textsc{Cost}\left( C_{f, t} \right)\), we simply have \(\mathcal{I} \sim r\).
As a consequence of the restriction that \(r = 2^{o (n / \log n)}\),
\begin{equation}
	\mathcal{I} = o \left( \text{Poly}\left( 2^n \right) \right).
\end{equation}
In other words, the improvement factor we obtain by applying \Cref{thm:mass_produced_qrom} is subpolynomial in the total size of the data set represented by \(f\).

Here we consider optimizing the mass production protocol to obtain a larger improvement factor.
This is a different aim than the one that motivated \Cref{thm:mass_produced_qrom} and prior works on mass production theorems~\cite{Ulig1974-xh,Uhlig1992-tg,Kretschmer2022-uy}.
These formulations of mass production focused on implementing a large number of parallel operations for a cost that is asymptotically proportional to the cost of implementing the same operation once.
Here we relax the requirement that the mass production circuit has a cost that is asymptotically proportional to the cost of a single copy and instead aim to maximize the improvement factor.
This optimization is captured in the following proposition:
\begin{restatable}[Mass production with a larger number of copies]{proposition}{improvementfactorprop}
	Let \(n\) be a positive integer such that \(n > a\), where \(a\) is a positive integer and a constant function of \(n\).
	Let \(f: \left\{ 0, 1 \right\}^n \rightarrow \left\{ 0, 1 \right\}^m\) be arbitrary, with \(m\) a constant function of \(n\).
	Let \(O_f\) be a unitary operator that queries this function in superposition, \(O_f: \ket{x}\ket{\alpha} \rightarrow \ket{x} \ket{\alpha \oplus f(x)}\).
	There exists a circuit \(C\) composed of one- and two-qubit Clifford gates and Toffoli gates implementing \(O_f^{\otimes r}\) for \(r = 2^{n-a}\) such that 
	\begin{equation}
		\textsc{Cost}\left( C \right) = \widetilde{\Theta} \left( 3^n\right),
	\end{equation}
	where the tilde notations indicates that we neglect logarithmic factors.
	\label{thm:max_scaling_prop}
\end{restatable}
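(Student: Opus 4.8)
The plan is to invoke \Cref{lemma:mass_produced_qrom} directly with a carefully chosen set of parameters, rather than the more restrictive \Cref{thm:mass_produced_qrom}. The essential observation is that the lemma only requires \(kt < n\), whereas the theorem additionally imposed \(t = o\left((n - 2\log_2\lambda)/\log_2 n\right)\); by working from the lemma we are free to take \(t\) linear in \(n\). Concretely, I would set \(k = 1\), \(t = n - a\) (so that \(r = 2^{t} = 2^{n-a}\) is a power of two, using \(n > a\)), and fix \(\lambda\) to be a constant power of two. I would then verify the hypotheses of the lemma: \(kt = n - a < n\) holds since \(a \geq 1\); \(k = 1 = \mathcal{O}(\log n)\); and a constant \(\lambda\) satisfies both \(\lambda = o(2^{n/2}/\sqrt{n})\) and \(1 \leq \lambda \leq 2^{n/2} m^{-1/2}\) for all sufficiently large \(n\).

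For the upper bound I would substitute these choices into \Cref{eq:overall_clifford_cost_inductive_assumption} and \Cref{eq:overall_T_cost_inductive_assumption}. With \(k=1\) the geometric prefactor becomes \((1 + 2^k)^t = 3^{\,n-a} = \Theta(3^n)\), while the ``shrinking'' factor becomes \(2^{\,n-tk} = 2^{a} = \Theta(1)\). Hence
\begin{equation}
  \textsc{Clifford}(C) \leq 3^{\,n-a}(K + o(1))\bigl(2^{a} m + \mathcal{O}(n)\bigr) = \widetilde{\mathcal{O}}(3^n),
\end{equation}
since \(2^a m\) is a constant and \(\mathcal{O}(n) = \mathcal{O}(\log 3^n)\) is logarithmic in the quantity of interest; and, with \(\lambda\) constant,
\begin{equation}
  \textsc{Toffoli}(C) = 3^{\,n-a}(1 + o(1))\bigl(2^{a}\lambda^{-1} + m\lambda + \mathcal{O}(n)\bigr) = \widetilde{\mathcal{O}}(3^n).
\end{equation}
Recalling the cost model of \Cref{eq:cost_model_def} and that a Toffoli gate decomposes into a constant number of \(T\) gates plus Cliffords, these two bounds together give \(\textsc{Cost}(C) = \widetilde{\mathcal{O}}(3^n)\).

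For the matching lower bound I would exploit the fact that \Cref{eq:overall_T_cost_inductive_assumption} is an \emph{equality}. The parenthesized factor is bounded below by the genuinely positive constant \(2^{a}\lambda^{-1}\) (the \(\mathcal{O}(n)\) term is a gate count and hence nonnegative), so \(\textsc{Toffoli}(C) \geq \tfrac12 \cdot 3^{\,n-a}\cdot 2^{a}\lambda^{-1} = \Omega(3^n)\) for large \(n\). Since \(\textsc{Cost}(C) \geq \Xi\,\textsc{T}(C) = \Omega(\textsc{Toffoli}(C))\) for the fixed positive constant \(\Xi\), this yields \(\textsc{Cost}(C) = \widetilde{\Omega}(3^n)\). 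Combined with the upper bound, this gives \(\textsc{Cost}(C) = \widetilde{\Theta}(3^n)\), as claimed.

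I expect the main obstacle to be the bookkeeping in the regime \(t = n - a\), where the recursion is carried all the way down so that the leaf data lookups act on constant-size (\(a\)-bit) functions and \(2^{n-tk}\) is \(\Theta(1)\) rather than growing. One must confirm that the \((K+o(1))\) and \((1+o(1))\) factors and the additive \(\mathcal{O}(n)\) terms of \Cref{lemma:mass_produced_qrom} remain valid uniformly in this regime, so that the total cost is correctly dominated by the geometric prefactor \(3^{\,n-a}\) times a factor that is only polylogarithmic in \(3^n\). It is worth remarking why the base of the exponential is exactly \(3\): demanding \(r = 2^{n-a}\) fixes \(t = n-a\), and then \(kt < n\) leaves \(k = 1\) as the only admissible positive integer once \(n \geq 2a\), so that \((1+2^k)^t = 3^{\,n-a}\).
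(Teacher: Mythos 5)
Your proposal is correct, and it follows the same overall route as the paper's own proof: both invoke \Cref{lemma:mass_produced_qrom} with \(k=1\) and \(t=n-a\), so that the geometric prefactor \(\left(1+2^k\right)^t = 3^{n-a}\) carries the scaling while the leaf lookups act on a constant number \(a\) of input bits. The two arguments differ in two details, however, and in both cases your version is the more careful one. First, the paper chooses \(\lambda = 2^{n/4}\), asserting that this makes the Clifford costs dominate; that assertion is valid in the regime of \Cref{thm:mass_produced_qrom}, where \(kt\) is well below \(n/2\), but not at \(t=n-a\). There the \(m\lambda\) term of \Cref{eq:overall_T_cost_inductive_assumption} contributes \(3^{n-a}\left(1+o(1)\right) m\, 2^{n/4}\) Toffoli gates, which exceeds the claimed \(\widetilde{\Theta}\left(3^n\right)\) by the polynomial factor \(2^{n/4}\); consequently the paper's intermediate claim \(\textsc{Cost}(C) \leq 3^t\left(K+o(1)\right)\left(2^{n-t}m+\mathcal{O}(n)\right)\) does not actually follow from the lemma once \(t\) exceeds roughly \(3n/4\). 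Your choice of constant \(\lambda\) is exactly what is needed to keep both the Clifford and Toffoli counts at \(\widetilde{\mathcal{O}}\left(3^n\right)\), so your write-up repairs a genuine (though easily fixable) flaw in the paper's argument. Second, for the matching lower bound the paper simply assumes the upper bound is saturated, noting that one can always pad \(C\) with redundant gates; you instead observe that \Cref{eq:overall_T_cost_inductive_assumption} is an equality whose parenthesized factor is at least the positive constant \(2^a \lambda^{-1}\), so the constructed circuit genuinely contains \(\Omega\left(3^n\right)\) Toffoli gates and hence has cost \(\widetilde{\Theta}\left(3^n\right)\) with no padding. Both readings suffice for an existence statement, but yours gives the \(\widetilde{\Theta}\) real content.
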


\begin{proof}

Our proof proceeds by applying \Cref{lemma:mass_produced_qrom} with \(k=1\).
This choice maximizes the number of times we can recursively apply mass production while still offering a reasonable reduction in cost.
For simplicity, we will also choose \(\lambda\) such that the Clifford costs dominate by taking \(\lambda = 2^{n/4}\).
For any positive integer \(t < n\), Lemma~\ref{lemma:mass_produced_qrom} establishes the existence of a circuit \(C\) with the following properties:
\begin{enumerate}
	\item \(C\) implements \(O_f^{\otimes r}\) where \(r=2^t\).
	\item \(C\) is composed of one- and two-qubit Clifford gates and Toffoli gates.
	\item The cost of \(C\) is given by the expression
	\begin{equation}
		\textsc{Cost}\left( C\right) \leq 3^t \left( K + o(1) \right) \left(2^{n-t} m + \bigo{n}\right).
	\end{equation}
\end{enumerate}

Now we take \(t = n - a\), yielding
\begin{equation}
		\textsc{Cost}\left( C\right) \leq 3^{n - a} \left( K + o(1) \right) \left(2^{a} m + \bigo{n}\right) = 3^n \mathcal{O}\left( n \right).
\end{equation}
We make the conservative assumption that \(\textsc{Cost}\left( C \right)\) saturates the inequality (note that we could always add more gates to \(C\) to make this true without loss of generality).
We therefore achieve the desired scaling, completing the proof.
\end{proof}

When we take \(r = 2^{n - a}\) for some suitable constant \(a\) and apply \Cref{thm:max_scaling_prop}, we find that the improvement factor is
\begin{equation}
	\mathcal{I} = \frac{r \textsc{Cost}\left( O_f \right)}{\textsc{Cost}\left( C \right)} = \widetilde{\Theta}\left( \frac{4^n}{3^n} \right) = \widetilde{\Theta}\left( 2^{n\left(2 - \log_2 3\right)}\right).
	\label{eq:big_theta_without_a}
\end{equation}
Note that since \(a\) is a constant we can essentially omit it from the expression in \Cref{eq:big_theta_without_a}.
Using this approach, we are able to obtain an improvement factor that scales polynomially in \(2^n\).
A simple calculation reveals that the improvement factor has a negative derivative for all $k>1$, under the assumption that $t=(n-a)/k$.
This raises the question: is possible to obtain a polynomial improvement factor with an exponent larger than \(2 - \log_2 \left( 3 \right) \approx .415\) by some other construction?

\subsection{Mass production for QROM implies mass production for state preparation and unitary synthesis}
\label{app:qrom_for_state_prep_and_synthesis}

Nearly-optimal methods for approximately preparing arbitrary states and approximately implementing arbitrary unitaries have been known for some time~\cite{Grover2002-st, Low2018-uu}.
In order to provide the best known asymptotic scaling, we can rely on the construction for state preparation recently introduced in \citen{Gosset2024-xi}.
\citen{Gosset2024-xi}'s results are focused on achieving the optimal asymptotic scaling with respect to the number of non-Clifford T gates.
To accomplish this goal, they design a state preparation protocol that uses \(\mathcal{O}\left( 1 \right)\) QROM calls, using the SelectSwap techniques of \citen{Low2018-uu} to implement these calls with the smallest possible T gate cost.
While their results are framed in terms of the overall T-gate complexity, it is straightforward to extract the number of QROM calls and the complexities of the other steps from their exposition.

\citen{Gosset2024-xi}'s main results on state preparation assume that one is preparing a state with real amplitudes, and they separately show how to efficiently implement an arbitrary diagonal unitary to address the general case.
According to Lemma 3.5 and the proof of Theorem 1.1, the state preparation requires:
\begin{itemize}
	\item \(\mathcal{O}\left( 1 \right)\) reflections about the zero state on \(n + \log \log \left( 1 / \epsilon \right) + \mathcal{O}(1)\) qubits, which can be implemented using \(n + \log \log \left( 1 / \epsilon \right) + \mathcal{O}(1)\) one- and two-qubit Clifford gates and T gates;
	\item \(\mathcal{O}\left( 1 \right)\) reflections about a smaller state;
	\item \(\mathcal{O}\left( n \right)\) Hadamard gates;
	\item \(\mathcal{O}\left( 1 \right)\) QROM reads with \(n + \log \log \left( 1 / \epsilon \right)\) input bits and one output bit;
	\item A layer of single-qubit gates that can be synthesized using \(\log \left( 1 / \epsilon \right)\) T gates altogether.
\end{itemize}
Implementing the diagonal unitary to the necessary precision requires (according to Theorem 1.2):
\begin{itemize}
	\item \(\mathcal{O}\left( \log \left( 1 / \epsilon \right) \right)\) one- and two-qubit gates from a Clifford + T gateset.
	\item \(\mathcal{O}\left( 1 \right)\) QROM reads with \(n\) input bits and \(\mathcal{O} \left( \log \left( 1 / \epsilon \right)\right)\) output bits.
\end{itemize}

Now let us consider the cost of implementing \(r\) copies of this state preparation procedure in parallel for some \(r = 2^{o(n / \log n)}\) that is a power of two.
We will use \Cref{thm:mass_produced_qrom} to mass-produce the QROM calls, setting \(\lambda\) to be a constant for simplicity.
The number of Clifford gates dominates the gate complexity of the QROM calls, and \(2^n m\) is \(2^n \log \left( 1/ \epsilon \right)\) for both types of call, so the overall cost from the QROM calls is \(\mathcal{O}\left( 2^n \log \left( 1 / \epsilon \right) \right)\) gates\footnote{Note that even though \(m\) is assumed to be a constant in the statement of \Cref{thm:mass_produced_qrom}, we can apply the results here without a loss of rigor by performing \(\log\left( 1/\epsilon \right)\) QROM calls with a constant number of output bits.}.
The non-QROM steps of each state preparation routine have a gate complexity that is bounded by \(n + \log \left( 1 / \epsilon \right)\), so the overall cost from these non-mass-producible components of the \(r\) parallel implementations is \(\mathcal{O}\left( 2^{o (n / \log n)} \left( n + \log \left( 1/ \epsilon \right) \right) \right)\), which is dominated by the complexity of the QROM calls.
Therefore, the overall cost to implement \(r\) arbitrary state preparation routines in parallel is 
\begin{equation}
	\textsc{Cost} \left( \ket{\psi}^{\otimes r} \right) = \mathcal{O}\left( 2^n \log \left( 1 / \epsilon \right) \right).
	\label{eq:parallel_state_prep_cost_gosset}
\end{equation}

There are a variety of ways to show a similar result for the parallelization of arbitrary unitary synthesis using the ability to mass-produce QROM reads.
For example, \citen{Low2018-uu} explains a technique for reducing unitary synthesis to (\(\approx 2^n\) calls to) arbitrary state preparation.
However, another approach we can take is to follow Kretschmer, who explains in the proof of Theorem 2 in \citen{Kretschmer2022-uy} how an arbitrary unitary on \(n\) qubits can be implemented as a product of \(2^n - 1\) controlled single-qubit rotations, each with \(n - 1\) controls.
Using the same diagonal unitary implementation as above (from Theorem 1.2 of \citen{Gosset2024-xi}), we can implement each single-qubit rotation up to an error \(\epsilon\) in the spectral norm using \(\mathcal{O}\left( \log \left(1/\epsilon\right) \right)\) one- and two-qubit Clifford gates and T gates, together with \(\mathcal{O}\left( 1 \right)\) QROM reads from \(n\) input bits to \(\mathcal{O}\left( \log \left( 1 / \epsilon \right) \right)\) output bits.
Taking \(r = 2^{o (n / \log n)}\) as above and letting \(\lambda\) be a constant, we can mass-produce the QROM reads and implement an arbitrary \(U^{\otimes r}\) to within a precision \(\epsilon'\) by setting the error for each single-qubit rotation to \(\epsilon =  \epsilon' / 2^n\) for a cost
\begin{equation}
	\textsc{Cost}\left( U^{\otimes r} \right) = \mathcal{O}\left( r 2^n \log \left( 2^n /  \epsilon' \right) + 4^n \log \left( 2^n / \epsilon' \right)\right) = \mathcal{O} \left( 4^n \left( n + \log \left( 1 / \epsilon' \right) \right)\right).
\end{equation}

\section{The gate complexity of Kretschmer's mass production}

The main body of our work examined the use of the classical mass production theorem of Uhlig~\cite{Uhlig1992-tg} and in \Cref{app:qrom_for_state_prep_and_synthesis}, we discussed how this result alone can be used to enable mass production of quantum states and unitaries through QROM constructions.
However, there are cases where a direct application of the mass production theorem may be preferable due to memory considerations or other concerns.
Furthermore, it is instructive to compare the costs achievable through a QROM-based approach with the costs achievable when using prior work directly.
As we shall see, our mass production results for state preparation actually have a slightly lower cost than those of \citen{Kretschmer2022-uy} due to the fact that we were able to build on the state preparation techniques of \citen{Gosset2024-xi}.
Here we provide an extension of the analysis in \citen{Kretschmer2022-uy} goes beyond counting the number of CNOT gates and specifically accounts for the number of gates in a discrete Clifford + T gate set.
This is relevant especially for settings where CNOT operations do not dominate the overall cost such as in both magic state cultivation and traditional magic state distillation~\cite{gidney2024magic}.

The main results of \citen{Kretschmer2022-uy} are captured in the following theorems:
\begin{theorem}
	Let \(\ket{\psi}\) be an \(n\)-qubit quantum state, and let \(r = 2^{o(n / \log n)}\).
	Then there exists a quantum circuit with at most \(\left( 1 + o\left(1\right) \right)2^n\) CNOT gates that prepares \(\ket{\psi}^{\otimes r}\).
\end{theorem}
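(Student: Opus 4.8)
The plan is to follow \citen{Kretschmer2022-uy} and reduce the parallel preparation of $\ket{\psi}^{\otimes r}$ to Uhlig's classical mass production theorem, after first putting a single copy into a form whose CNOT cost is concentrated in a few data-dependent subroutines. I would start from the standard fact that an arbitrary $n$-qubit state can be prepared \emph{exactly} by a sequence of uniformly controlled single-qubit rotations $U_1,\dots,U_n$, where $U_j$ acts on qubit $j$ and is controlled on qubits $1,\dots,j-1$. In the cost model of the theorem, where CNOT gates are counted and arbitrary single-qubit gates are free, the usual multiplexed-rotation decomposition implements $U_j$ with exactly $2^{j-1}$ CNOTs and $2^{j-1}$ (free) single-qubit rotations, so a single copy uses $\sum_{j=1}^{n} 2^{j-1} = 2^n-1$ CNOTs and the cost is dominated by the multiplexors with many controls.

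To build $\ket{\psi}^{\otimes r}$ I would process the circuit one multiplexor at a time, applying $U_j$ to all $r$ registers at once. The essential point is that $U_j^{\otimes r}$ is an instance of mass production: the $r$ copies carry \emph{identical} angle data, so the only copy-specific information is the $j-1$ control bits, and $U_j^{\otimes r}$ amounts to evaluating one fixed function of those control bits on $r$ independent inputs. Isolating the classical data-loading core of this operation and invoking Uhlig's theorem, $U_j^{\otimes r}$ can be realized with $(1+o(1))\,2^{j-1}$ CNOTs, i.e.\ at essentially the single-copy cost, provided $r$ lies in the admissible regime $r = 2^{o((j-1)/\log(j-1))}$ for that subroutine.

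The constant $(1+o(1))2^n$ is then earned by a threshold argument. The global hypothesis $r = 2^{o(n/\log n)}$ places $r$ in the admissible regime only for the multiplexors with $j-1 = \Omega(n)$ controls, so I would fix a cutoff $j^\star = n - s$ with $s = \lceil n/\sqrt{\log n}\rceil$, chosen so that $s = o(n)$ (hence $(j-1)/\log(j-1) = (1-o(1))\,n/\log n$ for every $j \geq j^\star$, making the hypothesis on $r$ enough to mass-produce each such $U_j$) while also $s = \omega(n/\log n)$ (hence $r\,2^{-s}\to 0$). For $j \geq j^\star$ I mass-produce, contributing $(1+o(1))\sum_{j=j^\star}^{n} 2^{j-1} = (1+o(1))\,2^n$ because the missing tail $2^{j^\star-1} = 2^{n-s-1} = o(2^n)$. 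For $j < j^\star$ I apply the rotations naively in parallel, contributing at most $r\sum_{j<j^\star} 2^{j-1} < r\,2^{j^\star-1} = 2^{\,n-s+o(n/\log n)} = o(2^n)$ by the choice of $s$. Adding the two ranges gives a total CNOT count of $(1+o(1))2^n$.

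I expect the main obstacle to be justifying the middle step in a model that counts \emph{only} CNOTs: one must argue that mass-producing a uniformly controlled rotation genuinely reduces to the boolean mass production theorem. Concretely, the task is to exhibit the copy-specific part of $U_j^{\otimes r}$ as a classical computation whose parallel cost is governed by Uhlig's bound, and then to reinsert the (free) single-qubit rotations copy-by-copy without inflating the CNOT count, all while keeping the construction exact rather than merely $\epsilon$-approximate; this quantum-to-classical reduction is precisely the one carried out in \citen{Kretschmer2022-uy}. A secondary check is that the routing needed to feed the $r$ control registers into each shared subroutine costs only $\mathrm{poly}(n)$ per copy, so the aggregate $r\,\mathrm{poly}(n) = o(2^n)$ does not disturb the leading constant. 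I would also note that the authors' \Cref{thm:mass_produced_qrom}, combined with the reduction in \Cref{app:qrom_for_state_prep_and_synthesis}, gives an alternative route to a closely related (though approximate and differently counted) bound, providing an independent sanity check.
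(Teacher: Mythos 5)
A point of framing first: the paper does not prove this statement. It is quoted as one of the main results of \citen{Kretschmer2022-uy} and is used purely as a black box to derive the Clifford+T corollary that follows it, so there is no proof in the paper to compare against; your proposal has to be judged against Kretschmer's own argument. Its skeleton is the natural one for such a proof: decompose state preparation into multiplexed rotations, mass-produce the wide multiplexors, handle the narrow ones naively, and separate the two regimes with a cutoff. Your threshold arithmetic is also correct: \(s = \lceil n/\sqrt{\log n}\,\rceil\) satisfies both \(s = o(n)\) (so every layer above the cutoff admits \(r = 2^{o(n/\log n)}\)) and \(s = \omega(n/\log n)\) (so the naive cost \(r\,2^{j^\star-1}\) of the layers below it is \(o(2^n)\)).

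However, there are two genuine gaps. The first is that the step you flag as ``the main obstacle'' is the entire content of the theorem, and your proposal resolves it by citing \citen{Kretschmer2022-uy} --- which is circular, since the statement being proven is that paper's theorem. Moreover, the resolution you describe is not the one that can work. Uhlig's theorem concerns boolean functions, while a multiplexed rotation has continuous, group-valued ``outputs,'' so no black-box invocation of the boolean theorem applies to \(U_j^{\otimes r}\). If you force the problem to be boolean by loading discretized angles into ancilla registers and rotating by them, you obtain essentially the paper's own construction in \Cref{app:qrom_for_state_prep_and_synthesis}: an \emph{approximate} preparation whose cost is counted in Clifford+T gates with constants well above one --- not the exact, CNOT-counted, \(\left(1+o(1)\right)\)-tight statement at hand. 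What is actually required is to rerun Uhlig's recursion natively on rotations: define difference multiplexors \(g_\ell = f_\ell f_{\ell-1}^{-1}\) (angle differences) and route them to the two copies. Even that needs a modification with no boolean counterpart: the identity in \Cref{eq:f_ell_g_ell_trick} relies on XOR being an involution, which fails for rotations, so one of the two copies must receive the \emph{inverse} gates \(g_\ell^{-1}\) (with \(g_{2^k} = f_{2^k-1}^{-1}\)), e.g.\ implemented by conjugation with routed \(X\) gates using \(X R_y(\theta) X = R_y(-\theta)\). None of this appears in your sketch. The second gap is the constant. A fixed-axis multiplexed-rotation cascade with \(\sum_{j} 2^{j-1} = 2^n - 1\) CNOTs prepares only real-amplitude states; arbitrary complex amplitudes additionally require a diagonal (multiplexed-\(R_z\)) cascade costing another \(\approx 2^n\) CNOTs. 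As written, your route therefore yields \(\left(2+o(1)\right)2^n\) rather than \(\left(1+o(1)\right)2^n\), and it already fails the claimed bound at \(r=1\); reaching the stated constant for general complex states needs a sharper decomposition, which the threshold argument cannot supply.
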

\begin{theorem}
	Let \(U\) be an \(n\)-qubit unitary transformation, and let \(r = 2^{o(n / \log n)}\).
	Then there exists a quantum circuit with at most \(\left( 5/2 + o\left(1\right) \right)4^n\) CNOT gates that implements \(U^{\otimes r}\).
\end{theorem}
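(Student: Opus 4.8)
The plan is to follow the recursive philosophy of \citen{Uhlig1992-tg} and \citen{Kretschmer2022-uy}, but applied to unitary synthesis, by isolating a \emph{data-dependent} ``lookup'' component that can be mass-produced from a \emph{generic} entangling skeleton that does not depend on \(U\). First I would invoke a standard decomposition of an arbitrary \(n\)-qubit unitary into multiplexed (uniformly controlled) single-qubit rotations — the quantum Shannon / cosine--sine decomposition, or equivalently the product of \(2^n - 1\) controlled single-qubit rotations referenced in \Cref{app:qrom_for_state_prep_and_synthesis}. This writes \(U\) using \(\Theta(4^n)\) CNOT gates, with all of its dependence on \(U\) concentrated in a table of rotation angles, while the placement of the entangling gates (the Gray-code / Walsh--Hadamard skeleton of each multiplexor) is fixed once \(n\) is fixed.

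Next I would recast each multiplexed rotation as a classical data-loading step followed by a generic rotation: load the angle associated with the current control pattern into an ancilla register using a boolean lookup, apply a fixed controlled-rotation gadget driven by those ancilla bits, and uncompute the lookup. After this rewriting, every \(U\)-dependent piece of the circuit is a boolean function evaluation, and the remaining gates form a \(U\)-independent scaffold. To build \(U^{\otimes r}\), I would then lay out \(r\) copies of this circuit and replace the \(r\) parallel evaluations of each lookup by a single mass-produced evaluation, invoking Uhlig's classical mass production theorem \citen{Uhlig1992-tg} (or, as an alternative route, \Cref{thm:mass_produced_qrom}). The hypothesis \(r = 2^{o(n/\log n)}\) is exactly Uhlig's window in which \(r\) parallel evaluations of a boolean function on \(\Theta(n)\) input bits cost asymptotically the same as a single evaluation, so the dominant \(\Theta(4^n)\) contribution is incurred once rather than \(r\) times. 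The per-copy scaffold that cannot be shared contributes only \(2^{o(n/\log n)}\cdot\mathrm{poly}(n)\) gates, which is subdominant; collecting the forward lookup, the uncomputation, and this residual yields the stated leading constant of \(5/2\).

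The main obstacle I anticipate is reconciling the \emph{continuous} rotation angles of an exact unitary with the \emph{boolean} nature of Uhlig's mass production. One clean resolution is to keep single-qubit rotations free (as the CNOT-counting model permits) and arrange the decomposition so that the \(U\)-dependence lives entirely in single-qubit gates while the expensive entangling skeleton is generic; then the Uhlig argument is applied to share that skeleton across the \(r\) copies, and the delicate part becomes the combinatorial bookkeeping showing the sharing costs only the \(5/2\) constant rather than a looser \(O(1)\). The alternative is to discretize each angle to \(O(\mathrm{poly}(n))\) bits, mass-produce the resulting lookup, and bound the error accumulated over \(\Theta(4^n)\) gates, which converts the exact claim into an \(\epsilon\)-approximate one that must be tracked carefully to preserve both the leading constant and the CNOT-only accounting. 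Verifying that the control-flow and comparator overhead of the recursive construction stays subdominant, exactly as in the QROM analysis of \Cref{app:base_case_cost_analysis}, is then the remaining routine step.
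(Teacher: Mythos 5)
First, a point of orientation: the paper does not prove this statement at all --- it is quoted as one of the main theorems of \citen{Kretschmer2022-uy}, and the paper only derives Clifford+T corollaries from it. So your proposal has to stand on its own as a reconstruction of Kretschmer's argument, and there it has a genuine gap. Your starting decomposition (multiplexed single-qubit rotations carrying all the \(U\)-dependence in their angles, \(\Theta(4^n)\) CNOTs for one copy) is the right one, but your first way of finishing the proof --- keep the rotations free and ``apply the Uhlig argument to share the generic entangling skeleton across the \(r\) copies'' --- misidentifies the mechanism. The Gray-code skeleton of a multiplexed rotation acts on the qubits of a single copy; CNOTs belonging to different copies act on disjoint qubit sets and cannot be shared, so laying out \(r\) copies of a \(U\)-independent skeleton costs \(r\cdot\Theta(4^n)\) CNOTs no matter how generic it is. The savings in every mass production theorem of this type come instead from a difference-and-route recursion on the \emph{data}: in the boolean setting this is the \(g_\ell = f_{\ell-1}\oplus f_\ell\) trick of \Cref{app:mass_production_protocol}, and for exact unitary synthesis one needs its continuous analogue --- since rotations about a fixed axis compose additively, the angle tables can be replaced by telescoping differences \(\delta_\ell = \theta_\ell - \theta_{\ell-1}\), each difference multiplexor (on fewer controls) is applied exactly once, and comparators plus controlled swaps route its action to the correct copy. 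That additive-angle analogue is the heart of Kretschmer's proof; Uhlig's theorem, which concerns boolean functions, cannot be ``invoked'' to do this job, so your appeal to it is essentially circular. Neither of your routes derives the constant \(5/2\) either; it comes out of the bookkeeping of precisely this difference-and-route construction.

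Your second resolution --- discretize each angle to \(\mathrm{poly}(n)\) bits, load it with a boolean lookup, and mass-produce the lookups --- cannot prove the stated theorem because it proves a strictly weaker one. Discretization makes the implementation of \(U^{\otimes r}\) approximate rather than exact, and the lookups output \(\Theta(\log(1/\epsilon))\) bits, so the CNOT count becomes \(\Theta\left(4^n \left(n + \log(1/\epsilon')\right)\right)\) after setting the per-rotation error to \(\epsilon'/2^n\), not \(\left(5/2+o(1)\right)4^n\). This route is in fact exactly what the paper already carries out in \Cref{app:qrom_for_state_prep_and_synthesis} using \Cref{thm:mass_produced_qrom}, and the paper presents the result as a separate, \(\epsilon\)-dependent statement precisely because it does not recover the exact CNOT-counting theorem above.
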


By applying standard results for synthesizing single-qubit unitaries, we can obtain the following corollaries:
\begin{corollary}
	Let $\ket{\psi}$ be an $n$-qubit quantum state.
	There exists a quantum circuit using at most $24(1+o(1))2^n\left(n+\log_2(6/\epsilon)\right)$ Hadamard, T, and CNOT gates that prepares the state $\ket{\bar{\psi}}^{\otimes r}$, where \(\ket{\bar{\psi}}\) is a state such that \(\norm{\ket{\psi} - \ket{\bar{\psi}}} < \epsilon\) in the Euclidean norm.
	Similarly, for any \(n\)-qubit unitary \(U\), there exists a quantum algorithm that uses $60(1+o(1))4^n\left(2n+\log_2(15/\epsilon)\right)$ Hadamard, T, and CNOT gates to implement \(\bar{U}^{\otimes r}\), where \(\norm{U - \bar{U}} < \epsilon\) in the operator norm.
\end{corollary}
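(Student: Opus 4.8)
The plan is to take the two CNOT-count theorems of Kretschmer \citenum{Kretschmer2022-uy} just quoted and compile their arbitrary single-qubit gates into the Clifford + T gate set using the optimal $z$-rotation synthesis of Ross and Selinger \citenum{ross2014optimal}. The key observation is that Kretschmer's mass-production circuits are built from exactly two ingredients: CNOT gates, whose count is controlled by the cited theorems, and a collection of arbitrary single-qubit rotations --- the ``expensive'' operations that the Uhlig-style recursion shares across the $r = 2^{o(n/\log n)}$ copies. First I would make the number of these rotations explicit. For state preparation, the single-copy routine being mass-produced is an arbitrary $n$-qubit state preparation, expressible with $\Theta(2^n)$ single-qubit unitaries (a tree of uniformly controlled rotations for the magnitudes together with a diagonal unitary for the phases); for unitary synthesis, Kretschmer's decomposition of an arbitrary $n$-qubit unitary into $2^n - 1$ multiply-controlled single-qubit rotations likewise yields $\Theta(4^n)$ single-qubit unitaries once the controls are absorbed. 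Writing each general single-qubit unitary as three $z$-rotations conjugated by Hadamards gives $M = 6 \cdot 2^n$ and $M = 15 \cdot 4^n$ elementary $z$-rotations respectively, and these counts are where the constants $6$ and $15$ inside the logarithms originate.

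Next I would replace each $z$-rotation by its Ross--Selinger approximation to spectral-norm error $\delta$, which costs at most (the guaranteed Ross--Selinger constant, $\approx 4$) times $\log_2(1/\delta)$ Hadamard and T gates and, crucially, \emph{no} additional CNOTs. Hence the CNOT count is unchanged and remains the subdominant $(1 + o(1))2^n$ (resp. $(5/2 + o(1))4^n$) term, while the new gates are entirely Hadamard and T. Because the shared rotations are identical across all $r$ copies, the compiled circuit prepares an exact tensor power $\ket{\bar\psi}^{\otimes r}$ (resp. implements $\bar U^{\otimes r}$), so the error analysis collapses to a single copy: by the triangle inequality applied to the product of $M$ perturbed unitaries, taking $\delta = \epsilon / M$ guarantees $\norm{\ket{\psi} - \ket{\bar\psi}} < \epsilon$ (resp. $\norm{U - \bar U} < \epsilon$), with no resulting dependence on $r$.

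Finally I would collect the gate counts. The total number of Hadamard and T gates is at most $M$ times the per-rotation cost $\approx 4\log_2(1/\delta) = 4\log_2(M/\epsilon)$; substituting $M = 6 \cdot 2^n$ with $\log_2(M/\epsilon) = n + \log_2(6/\epsilon)$ yields the claimed $24(1 + o(1))2^n\bigl(n + \log_2(6/\epsilon)\bigr)$, and the analogous substitution $M = 15 \cdot 4^n$ with $\log_2(M/\epsilon) = 2n + \log_2(15/\epsilon)$ yields $60(1 + o(1))4^n\bigl(2n + \log_2(15/\epsilon)\bigr)$. The subdominant CNOT term and the lower-order additive ($\log\log$ and constant) corrections in the Ross--Selinger bound are absorbed into the $(1 + o(1))$ factor.

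The main obstacle I anticipate is bookkeeping rather than conceptual. Pinning down the exact constants requires (i) the precise rotation count of the single-copy state-preparation and unitary-synthesis routines being mass-produced, which the quoted theorems suppress since they report only CNOT counts, and (ii) the exact guaranteed constant in the Ross--Selinger gate count together with a careful argument that the lower-order terms, summed over all $M$ rotations, are genuinely $o(1)$ relative to the leading term. I would also need to verify that sharing the approximate rotations across copies really does yield an \emph{exact} tensor power, so that the per-copy error bound $\delta = \epsilon/M$ is legitimate and the final counts carry no dependence on $r$; this follows from the structure of the Uhlig recursion but deserves an explicit check.
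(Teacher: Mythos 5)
Your overall strategy---compiling Kretschmer's CNOT-counted circuits into Clifford+T by Euler-decomposing each single-qubit gate into three $z$-rotations and applying Ross--Selinger with a global error budget---is the same as the paper's, and your final arithmetic ($M \cdot 4\log_2(M/\epsilon)$ with $M = 6\cdot 2^n$, resp.\ $M = 15\cdot 4^n$) reproduces the stated bounds. The genuine gap is in how you obtain $M$: you try to read the single-qubit-gate count off the internal structure of the single-copy state-preparation and unitary-synthesis routines (a tree of uniformly controlled rotations; $2^n-1$ multiply-controlled rotations), and you yourself flag this as an unresolved obstacle. This step does not work as stated, for two reasons: Kretschmer's theorems bound only CNOT counts and say nothing about the interior of his circuits, and the mass-production circuit is not $r$ juxtaposed copies of a single-copy routine---it is one recursive Uhlig-style circuit, so counting rotations in the single-copy routine tells you nothing rigorous about the compiled object. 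The paper closes exactly this hole with a simple canonical-form lemma that your proposal is missing: \emph{any} circuit on $N$ qubits built from arbitrary single-qubit gates and $L$ CNOT gates can be rewritten, by merging adjacent single-qubit gates rightward until they hit a CNOT endpoint or the end of the circuit, as a product of at most $N + 2L$ single-qubit gates and the $L$ CNOTs. Applying this with $N = rn$ and $L = (1+o(1))2^n$ (resp.\ $L = (5/2+o(1))4^n$) gives $rn + 2L = (2+o(1))2^n$ (resp.\ $(5+o(1))4^n$) single-qubit gates purely from the CNOT count, since $r = 2^{o(n/\log n)}$ makes $rn$ subdominant; the constants $6 = 3\cdot 2$ and $15 = 3\cdot 5$ then fall out with no knowledge of the circuit's interior required.

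A second, smaller problem: your claim that the compiled circuit prepares an \emph{exact} tensor power $\ket{\bar{\psi}}^{\otimes r}$ ``because the shared rotations are identical across all $r$ copies'' is unjustified. The mass-production circuit is not a tensor product of $r$ identical sub-circuits; its registers become entangled during the computation, so perturbing internal gates generically yields a final state that is $\epsilon$-close to $\ket{\psi}^{\otimes r}$ but is not itself a tensor power, and the error analysis does not ``collapse to a single copy.'' The correct (and paper's) argument is global: each of the $rn+2L$ single-qubit gates is replaced with error $\delta = \epsilon/(rn+2L)$, and the triangle inequality over the whole product of unitaries bounds the total deviation by $\epsilon$; your budget $\delta = \epsilon/M$ is numerically equivalent, but it must be justified this way rather than per copy.
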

Note that the number of gates required for parallel state preparation using this approach is asymptotically larger than the result we obtain using QROM-based methods (see \Cref{eq:parallel_state_prep_cost_gosset}).

\begin{proof}
	The proof immediately follows from the following observation:
	Any quantum circuit on \(n\)-qubits composed of single-qubit gates and \(L\) CNOT gates can be written as a product of at most \(n + 2L\) arbitrary single-qubit gates and \(L\) CNOT gates.
	To see this, consider the following canonical form for a circuit composed of \(L\) CNOT gates:
	\begin{equation}
		\left(\prod_{j=1}^L (U_{x_j} \otimes U_{y_j}){\rm CNOT}_{x_j,y_j}\right)\left(\bigotimes_{i=1}^{n} U_i\right),
	\end{equation}
	where we use the notation $U_i$ to denote a single-qubit unitary acting on the \(i\)th qubit.
	Any circuit composed of single-qubit unitaries and CNOT gates can be put into this form by merging single-qubit unitaries together to the right until they are obstructed by a CNOT gate or the end of the circuit is reached.

	We can make use of this canonical form to count the number of gates required in a discrete Clifford + T gate set.
	Recall that at most $4\log_2(1/\delta) +O(1)$ Hadamard and T gates are needed to approximate a single-qubit rotation about the Z axis to within a factor of \(\delta\) in the operator norm (which is also an induced Euclidean-norm) \cite{ross2014optimal}.
	A single qubit rotation consists of at most $3$ single qubit rotations using an Euler angle decomposition, which implies that $12\log_2(3/\delta)+O(1)$ Hadamard and T gates ar sufficient to provide a \(\delta\)-accurate approximation to an arbitrary single-qubit rotation.
	The canonical circuit consists of $n + 2L$ single qubit rotations and \(2L\) CNOT gates and therefore we can bound the number of gates required for an overall precision of \(\epsilon\) by setting \(\delta = \epsilon/(n + 2L)\), yielding
	\begin{equation}
		(n + 2L) (12\log_2(3(n +2 L)/\epsilon) + \mathcal{O}(1)).
		\label{eq:number_of_gates}
	\end{equation}

	To calculate the total number of gates in a discrete gate set for Kretschmer's mass production, we set the \(n\) in \Cref{eq:number_of_gates} to \(rn\), and \(L\) to \(\left(1 + o(1)\right)2^n\), finding that the overall number of gates required in a Clifford + T gate set is bounded by
	\begin{equation}
		24(1+o(1))2^n\left(n+\log_2(6/\epsilon)\right).
	\end{equation}
	For the case of unitary synthesis $L=(5/2+o(1))4^n$~\cite{Kretschmer2022-uy}, and so the number of gates needed is bounded by
	\begin{equation}
		60(1+o(1))4^n\left(2n+\log_2(15/\epsilon)\right),
	\end{equation}
	completing the proof.
\end{proof}

\section{Additional supporting data and computational details}
\label{app:numerical_details}

In the main text we presented some numerical data comparing the cost of our mass production protocol with the cost of a naive implementation of parallel data loading.
In this section, we provide some additional details on our numerical implementations as well as some supplementary data in support of our conclusions.

In our asymptotic analysis we argued that the cost of the data loading subroutines \(G_\ell\) dominates the cost of mass production.
Asymptotically this is true, but it would be useful to understand how true this is for implementations at practical sizes.
To this end, we take our optimized circuits for mass production and plot the fraction of the costs not due to data loading in  \Cref{fig:not_data_loading}.
Specifically, we calculate the following quantity as a function of the number of input bits for several values of \(\Xi\) and \(r\),
\begin{equation}
	1 - \frac{\textsc{Cost}\left( \left\{ G_\ell \right\} \right)}{\textsc{Cost}\left( C \right)},
\end{equation}
where \(\textsc{Cost}\left( C \right)\) denotes the cost of the entire mass production protocol and \(\textsc{Cost}\left( \left\{ G_\ell \right\} \right)\) denotes the cost of all of the data loading steps involved.

\begin{figure}
	\centering
	\includegraphics[width=.7\textwidth]{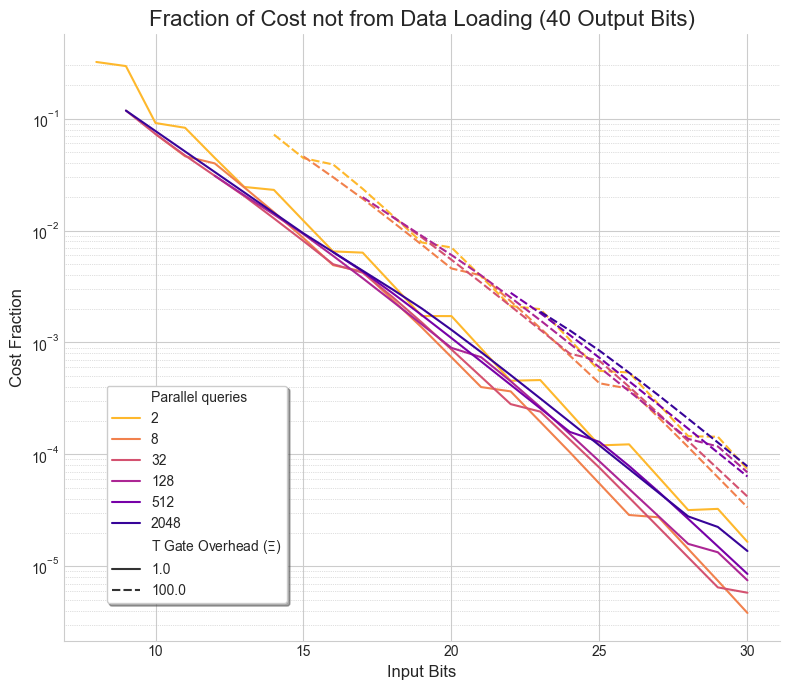}
	\caption{
	The fraction of costs not due to data loading in our optimized mass production circuits.
	We fix the number of output bits to \(40\) and vary the number of input bits (\(n\)), the T gate overhead (\(\Xi\)), and the number of parallel copies (\(r\)).
	}
	\label{fig:not_data_loading}
\end{figure}

In the main text, we presented some data that shows how the improvement factor available when using mass production depends on the number of input bits.
Specifically, in \Cref{fig:mass_production_helps}, we focused on the case where the number of output bits (\(m\)) is \(40\) and the T gate overhead (\(\Xi\)) is \(1\).
We plotted the improvement factor as a function of the number of input bits (\(n\)) for different choices of the total number of parallel queries (\(r\)).
In \Cref{fig:big_data}, we show similar data for several different values of \(m\) and \(\Xi\).
Each subfigure of this \(4 \times 3\) collection contains a single plot in the same format as \Cref{fig:mass_production_helps}, with the number of output bits and the T gate overhead specified in the subfigure title.
As in our other calculations, we numerically optimized the choice of \(k\) at each step in the recursive construction for mass production as well as the choice of \(\lambda\) in the data lookup subroutines.

\begin{figure}
	\centering
	\includegraphics[width=.95\textwidth]{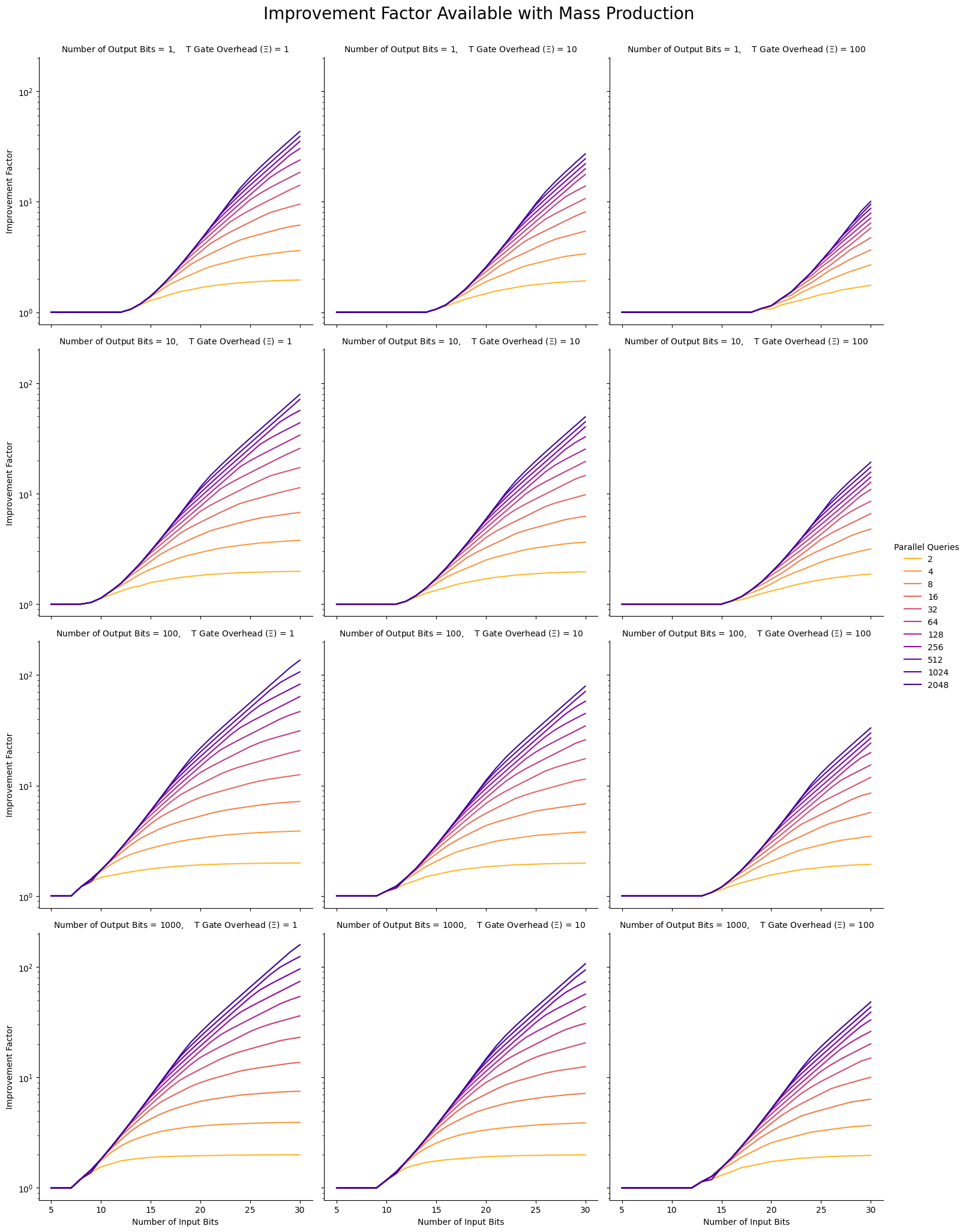}
	\caption{
		The ``improvement factor'' $\mathcal{I}$ when using mass production to implement \(r\) parallel queries to a data loading oracle for an arbitrary function that takes \(n\) input bits and outputs \(m\) bits.
		We plot the improvement factor for various values of the number of output bits (\(m\), different rows), the T gate overhead (\(\Xi\), different columns).
	}
	\label{fig:big_data}
\end{figure}

Our cost analysis was carried out using the Qualtran software package~\cite{Harrigan2024-rj}, which contains implementations of many of the basic primitives we use, including variants of ``SelectSwap'' QROM used for the \(G_\ell\) (with minor modification, as detailed in \Cref{app:modified_clean_qroam}) and the multicontrolled gates used in the construction of the \(A_\ell\).
For completeness, we mention some details about these implementations.
QROM and its variants include a data lookup step that repeatedly flips bits of the output register(s) conditioned on a single control qubit.
These multitarget single-control gates can be expressed as a product of CNOT gates, one for each \(1\) in the corresponding output.
When estimating the cost of this step without specific data, Qualtran upper bounds the number of CNOT gates by assuming that there is one for each output qubit.

In our construction of \(A_\ell\), we repeatedly use multicontrolled gates to help us efficiently flag which of the three cases we are in.
By default, Qualtran compiles these multicontrolled gates using a series of "AND" operations, which introduce additional ancilla.
However, since these multicontrolled gates have \(k\) controls and we generally take \(k\) to be either \(1\) or \(\approx \log_2 n\), this additional ancilla cost is negligible compared to the space required by the input and output registers (\(r \left( m + n \right)\)) and the ancilla space used by the SelectSwap QROM.
Furthermore, a slightly more sophisticated implementation could replace our naive use of multicontrolled gates with a strategy more like unary iteration, which would reduce the number of one- and two-qubit gates slightly at the expense of increasing the complexity of the presentation.

For additional details, we refer the reader to our Qualtran implementation of the mass production protocol.

\section{Applications}
\label{app:applications}

We discuss several potential practical applications of quantum mass production in this appendix.
We highlight several situations where it is natural to implement some mass-producible operation multiple times in parallel.
We focus the discussion around the mass production of QROM queries, although in some cases the applications that we consider could be framed in terms of mass production for state preparation or unitary synthesis.
While we present some numerical evidence about the potential for benefitting from mass production in \Cref{app:lcu_overview}, we leave the end-to-end analysis of particular applications to a future work.

QROM queries can be used to realize a wide range of quantum oracles.
We begin in \Cref{app:lcu_overview} by focusing on a particular well-studied example, the \(\prep\) oracle in a simple formulation of the linear combination of unitaries (LCU) framework.
By analyzing previously-published data using the Qualtran software package~\cite{Lee2021-su,Harrigan2024-rj}, we provide an example of a natural situation where the cost of QROM queries dominates the overall cost of implementing an algorithm.
Building on this, we explain in \Cref{app:parallel_qpe} and \Cref{app:parallel_mps} how the combination of parallel phase estimation with quantum mass production can provide a substantial reduction to the overall cost of eigenvalue estimation.
Along the way, we discuss the application of quantum mass production to the tensor hypercontraction simulation algorithm, providing a concrete example of a situation where the ability to mass-produce QROM queries appears more powerful than the ability to mass-produce state preparation or unitary synthesis.

We close by discussing a more general example in \Cref{app:parallel_amp_amp}, the combination of mass production with amplitude amplification.
Specifically, we argue that quantum mass production will enable us to reduce the overall cost of algorithms that require amplifying some small success probability using amplitude amplification.
We leave the analysis of particular applications to future work, but we note that there are many quantum algorithms that 1) rely heavily on amplitude amplification and 2) involve oracles whose implementation may require large QROM queries~\cite{Harrow2009-sm, Childs2017-is,Berry2014-fz, Berry2017-pd, Liu2021-ud, Lin2020-qh, Lloyd2016-po, Liu2024-dv}.

\subsection{Constructing block encodings with a linear combination of unitaries}
\label{app:lcu_overview}

In this section, we briefly introduce the concepts of block encoding and linear combination of unitaries.
We refer the reader to \citen{Gilyen2019-jt} for a more comprehensive introduction.
Many variations on these ideas have been explored over the last few years, and we do not attempt to provide a comprehensive review.
Instead, we provide enough background to discuss some particular examples that illustrate how these primitives can benefit from quantum mass production.

The block encoding framework is a general approach to representing and manipulating matrices using quantum algorithms.
We say that a unitary \(U\) is a block-encoding of a square matrix \(A\) if
\begin{equation}
	\left(\bra{0}^{k} \otimes \mathbb{I}\right) U \left(\ket{0}^{k} \otimes \mathbb{I}\right) = A / \lambda,
\end{equation}
for some \(\lambda > 0\)\footnote{Block encodings can also be generalized to non-square matrices.}.
Equivalently, we can say that the sub-normalized \(A/\lambda\) is the top left block of \(U\),
\begin{equation}
	U =
	\begin{pmatrix}
		A/\lambda & \cdot
		\\
		\cdot     & \cdot
	\end{pmatrix}
	.
\end{equation}
Access to a block encoding allows us to perform a variety of tasks, many of which can be cast as implementing polynomial functions of the original matrix~\cite{Low2019-oa, Gilyen2019-jt, Martyn2021-mf}.

Block encodings can be obtained in various ways, but one of the most commonly used approaches is to represent the matrix \(A\) as a linear combination of (efficiently implementable) unitaries.
We begin with a decomposition of \(A\) as a linear combination of unitaries (LCU),
\begin{equation}
	A / \lambda = \sum_{k=0}^{N-1} \alpha_k U_k,
\end{equation}
where the \(\alpha_k\) are positive numbers such that \(\sum_{k=0}^{N-1} \alpha_k = 1\).
We define a unitary operator \(\prep\) by its action on the zero state,
\begin{equation}
	\prep \ket{0} \ket{0} = \sum_{k=0}^{N-1} \sqrt{\alpha_k} \ket{k} \ket{\text{junk}_k},
	\label{eq:prep_def}
\end{equation}
where \(\ket{k}\) is a computational basis state that encodes \(k\) (usually in binary), and \(\ket{\text{junk}_k}\) can be an arbitrary normalized state.
Then we define an operator \(\sel\),
\begin{equation}
	\sel = \sum_{k} \ketbra{k} \otimes \mathbb{I} \otimes U_k.
\end{equation}
It is straightforward to see that we can combine these two operators to block encode \(A\),
\begin{equation}
	\left(\bra{0} \bra{0} \otimes \mathbb{I} \right)  \prep^{\dagger} \cdot \sel \cdot \prep  \left( \ket{0} \ket{0}
	\otimes \mathbb{I}\right) = A/\lambda.
\end{equation}

One standard way to implement \(\prep\) is to use a technique known as coherent alias sampling~\cite{Babbush2018-tb}.
This technique allows one to efficiently approximate the unitary \(\prep,\) preparing a state of the form
\begin{equation}
	\ket{\phi} = \sum_{k=0}^{N-1} \sqrt{\tilde{\alpha}_k} \ket{k} \ket{\text{junk}_k},
	\label{eq:coherent_alias_sampling_state}
\end{equation}
where the \(\tilde{\alpha}_{k}\) are \(\mu\) bit approximations to the \(\alpha_k\) with \(\abs{\alpha_k - \tilde{\alpha}_k} \leq \frac{1}{2^{\mu}
	N} \).
The procedure begins by preparing a uniform superposition, \(N^{-1/2}\sum_{k=0}^{N-1} \ket{k}\), and then uses a single QROM query to coherently load bitstrings \(\ket{\text{alt}_k}\) and \(\ket{\text{keep}_k}\), resulting in the state
\begin{equation}
	N^{-1/2} \sum_{k=0}^{N-1} \ket{k} \ket{\text{alt}_k} \ket{\text{keep}_k}.
	\label{eq:coherent_alias_sampling_Phi}
\end{equation}
These registers are of size \(\ceil{\log_2(N)}\), \(\ceil{\log_2(N)}\), and \(\mu\) respectively.
We then adjoin a final register of size \(\mu\) initialized in a uniform superposition, \(2^{-\mu / 2}\sum_{\sigma=0}^{2^{\mu} - 1} \ket{\sigma}\).
To complete the procedure, we swap the first and second registers (containing \(\ket{k}\) and \(\ket{\text{alt}_k}\)) if \(\text{keep}_k \leq \sigma\).

As \citen{Babbush2018-tb} explains, it is always possible to choose the bitstrings \(\text{alt}_k\) and \(\text{keep}_k\) such that the resulting state takes the form of \(\ket{\phi}\) in \Cref{eq:coherent_alias_sampling_state}.
Furthermore, this classical preprocessing can be done sequentially in time that scales roughly linearly with \(N\).
As we discussed in \Cref{app:qrom_background}, the cost of the QROM read is driven by the number of one- and two- qubit Clifford gates, which scales as \(\mathcal{O}\left( N \left( \log_2 N + \mu \right) \right)\).
All other steps scale logarithmically in \(\mu\) and \(N\), so the cost of coherent alias sampling is dominated by the cost of the QROM read.

The overall cost of implementing a block encoding depends on the cost of both \(\prep\) and \(\sel\).
When we implement \(\prep\) using coherent alias sampling and the cost of \(\prep\) dominates the overall cost, then there is significant potential to benefit from applying quantum mass production.
To provide a concrete example, we use Qualtran to analyze the fraction of the cost that comes from classical data loading in the sparse simulation algorithm proposed in \citen{Berry2019-qo}.
The implementation of the \(\prep\) subroutine in this algorithm is slightly more sophisticated than the coherent alias sampling that we described above, but at a high level it works similarly and relies heavily on classical data loading.
The quantum chemical Hamiltonian is a two-body operator, so in a naive representation one might expect that a system of size \(N_{orb}\) (the number of spin-orbitals) would require loading \(\mathcal{O}\left( N_{orb}^4 \right)\) parameters.
In the sparse simulation approach of Ref.~\citen{Berry2019-qo}, the cost of the classical data loading is reduced if the Hamiltonian is sparse in a particular basis once very small terms have been dropped.
Asymptotically, there are arguments that the number of non-zero terms should scale as \(O(N_{orb}^2)\) for a sufficiently large system in a properly chosen basis~\cite{McClean2014-ll}.
Regardless of the sparsity, the other components of the algorithm scale nearly linearly in \(N_{orb}\).

In \Cref{fig:cost_fraction}, we consider the task of implementing a block encoding of the quantum chemical Hamiltonian in second quantization using this sparse simulation approach.
Specifically, we consider two examples reported in \citen{Lee2021-su}, a chain of Hydrogen atoms represented in an STO-6G basis where the system size was varied by changing the number of atoms, and a square configuration of four Hydrogen atoms where the system size was varied by changing the size of the single-particle basis set.
We refer to the first example as the ``Thermodynamic'' scaling limit and the second as the ``Continuum'' scaling limit.
In both cases, we use the data from Figure 14 of \citen{Lee2021-su} to fit an equation of the form
\begin{equation}
	c = a N_{orb}^b,
\end{equation}
where \(c\) is the number of non-zero coefficients, \(N_{orb}\) is the number of spin-orbitals, \(a\) is a free parameter, and \(b\) is taken from the caption of the figure (\(1.78\) and \(3.83\) for the two limits, respectively).
Then, using the cost model described in \Cref{app:cost_models}, we use the Qualtran software package to calculate the fraction of the cost that comes from the classical data loading subroutine as a function of the system size for various values of \(\Xi\) (the overhead of a T gate compared with a typical Clifford gate).
We plot this data in the left panel.
In the right panel, we show the number of input bits required for the classical data loading subroutine as a function of system size.

\begin{figure}
	\centering
	\includegraphics[width=.485\textwidth]{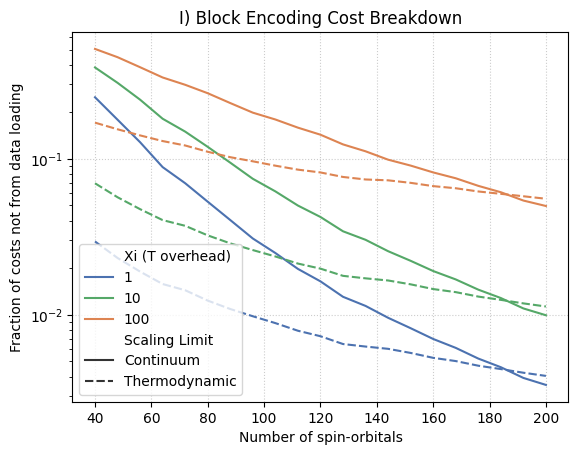}
	\;
	\includegraphics[width=.485\textwidth]{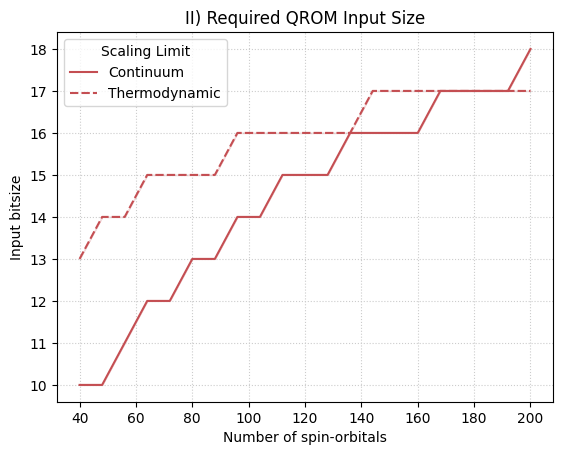}
	\caption{Left Panel (I): The fraction of the cost of block-encoding the quantum chemical Hamiltonian that is not due to data loading when using the sparse simulation algorithm of \citen{Berry2019-qo}, for several values of \(\Xi\) (the T gate overhead in the cost model defined in \Cref{app:cost_models}).
		We vary the system size by adding Hydrogen atoms to a chain (Thermodynamic) or by adding additional diffuse basis functions to a simulation of four Hydrogen atoms in a square geometry (Continuum).
		In both cases, the number of non-zero matrix elements is obtained from Figure 14 of \citen{Lee2021-su}.
		Right Panel (II): The number of input bits required by the classical data loading subroutine used to block-encode the Hamiltonians.
		We find that classical data loading dominates the cost, particularly at larger system sizes and smaller values of \(\Xi\).
		Furthermore, the number of input bits grows relatively large.
		Taken together, this suggests that it is possible to significantly benefit from mass production techniques.
	}
	\label{fig:cost_fraction}
\end{figure}

In both scaling limits, \Cref{fig:cost_fraction} reveals that the cost of implementing the block encoding using this algorithm is dominated by the QROM queries at large system sizes and small values of \(\Xi\).
Furthermore, the number of input bits is large enough that we may be able to obtain a significant cost reduction from mass production, although reducing the cost by an order of magnitude or more appears challenging at these system sizes (see \Cref{fig:mass_production_helps}).

While the linear combination of unitaries approach to constructing a block encoding is frequently used in state-of-the-art quantum algorithms, the actual implementation is often more complicated than the straightforward approach described above.
In particular, a significant body of work is dedicated to more advanced techniques for efficiently block-encoding the quantum chemical Hamiltonian~\cite{Babbush2018-tb,Berry2019-qo,von-Burg2021-yq, Lee2021-su}.
By varying the form of the LCU decomposition as well as the strategies for implementing \(\prep\) and \(\sel\), the costs of the block encoding can be dramatically reduced (both in terms of the asymptotic scaling and the constant factors)~\cite{Lee2021-su}.
Despite this variation, the cost of classical data loading dominates many of the most advanced algorithms for this application.
In the most advanced algorithms, this classical data is used for tasks other than arbitrary state preparation and unitary synthesis~\cite{von-Burg2021-yq, Lee2021-su}.
This observation highlights the benefit of our approach to quantum mass production, which focuses on parallelizing classical data loading rather than the more specific tasks of state preparation or unitary synthesis.

\subsection{Parallel Phase Estimation}
\label{app:parallel_qpe}

In this section, we discuss how quantum mass production can be combined with parallel phase estimation in order to reduce the gate complexity required for eigenvalue estimation.
In its standard form, parallel phase estimation assumes that we have access to a unitary \(U\) and a state such that $\ket{\psi_0}$ such that
\begin{equation}
	U\ket{\psi_0} = e^{i\phi_0} \ket{\psi_0}.
\end{equation}
The task is to estimate the phase \(\phi_0\).
Parallel phase estimation differs from standard phase estimation in that it uses \(r\) copies of the initial state to reduce the required depth.
Parallel phase estimation does not generally reduce the query complexity or gate complexity of the phase estimation task, although it can enable multiple quantum processors to collaborate on solving the phase estimation task with very little communication overhead.

We illustrate a circuit that implements the parallel version of an iterative phase estimation protocol in \Cref{fig:parPE}.
This approach begins by first preparing the GHZ state tensored with $r$ copies of the state $\ket{\psi_0},$
\begin{equation}
	\frac{1}{\sqrt{2}} \left(\ket{0}^{\otimes r} + \ket{1}^{\otimes r} \right) \ket{\psi_0}^{\otimes r}.
\end{equation}
We then take each qubit in the GHZ state and perform a controlled unitary $U$ onto the corresponding $\ket{\psi_0}$ state.
This yields the state
\begin{equation}
	\prod_{j=1}^r \left(\ketbra{0}{0}_j\otimes I_j + \ketbra{1}{1}_j\otimes U_j \right) \frac{1}{\sqrt{2}} \left(\ket{0}^{\otimes r} + \ket{1}^{\otimes r} \right) \ket{\psi_0}^{\otimes r} = \frac{1}{\sqrt{2}} \left(\ket{0}^{\otimes r} + e^{ir\phi_0}\ket{1}^{\otimes r}  \right) \ket{\psi_0}^{\otimes r},
\end{equation}
where we use the notation that $[\cdot]_j$ is an operator acting on the $j^{\rm th}$ ancilla qubit or state $\ket{\psi_0}$ as appropriate.
By using \(r\) parallel applications of the controlled form of \(U\), we apply a phase equal to \(r \phi_0\) to the GHZ state.
Uncomputing the GHZ state effectively leaves us with a single control qubit in a state proportional to \(\ket{0} + e^{ir\phi_0}\ket{1}\).
An equivalent experiment performed with the standard iterative phase estimation circuit would require total evolution time \(rt\) instead of \(t\).

\begin{figure}
\scriptsize
	\centering
	\begin{quantikz}[column sep = 1mm, row sep = 1mm]
		\lstick{$\ket{0}$}    & \gate{H} & \ctrl{4} & \ctrl{1}        & \ctrl{4} & \gate{H} & \qw
		\\
		\lstick{$\ket{\psi}$} & \qw      & \qw      & \gate{e^{-iHt}} & \qw      & \qw      & \qw
		\\
		\lstick{$\ket{0}$}    & \qw      & \targ{}  & \ctrl{1}        & \targ{}  & \qw      & \qw
		\\
		\lstick{$\ket{\psi}$} & \qw      & \qw      & \gate{e^{-iHt}} & \qw      & \qw      & \qw
		\\
		\lstick{$\ket{0}$}    & \qw      & \targ{}  & \ctrl{1}        & \targ{}  & \qw      & \qw
		\\
		\lstick{$\ket{\psi}$} & \qw      & \qw      & \gate{e^{-iHt}} & \qw      & \qw      & \qw
	\end{quantikz}
	\caption{Parallel QPE circuit for $r=3$ workers.  \label{fig:parPE}}
\end{figure}
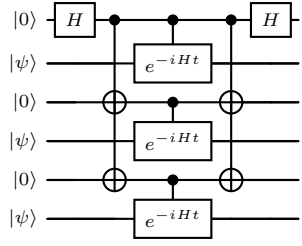

There are many ways that we could perform such a simulation but we will focus on qubitization based approaches here.
We assume that we have access to the Hamiltonian as a linear combination of \(N\) different unitaries, as reviewed in \Cref{app:lcu_overview}.
Using the oracles \(\prep\) and \(\sel\), we can construct a ``qubitized quantum walk operator'' \(W\) of the form
\begin{equation}
	W= - \sel\left(I - 2~\prep\ketbra{0}{0}\prep^\dagger\otimes I\right),
\end{equation}
where \(\ket{0}\) denotes the combined zero state of both registers in \Cref{eq:prep_def}.
\(W\) has the property that if $\ket{\psi_j}$ is an eigenvector of $H$ with eigenvalue $E_j,$ then there exist eigenstates $\ket{\phi_j^{\pm}}\in {\rm span}((\prep\ket{0})\ket{\psi_j}, W(\prep \ket{0})\ket{\psi_j} )$ with eigenvalues
\begin{equation}
	W\ket{\phi_j^{\pm}} = e^{\pm i \arccos(E_j/\lambda)} \ket{\phi_j^{\pm}},
\end{equation}
where \(\lambda\) is the normalization factor of the block encoding.
We can therefore directly perform phase estimation on \(W\) in order to determine properties about the spectrum of \(H\)\footnote{ Note that the controlled variant of the walk operator can be found by controlling the \(\sel\) operator, which can also be trivially constructed if atleast one of the unitaries has a known $+1$ eigenvector through controlled swap operations and a single query.
}.

The simplest application of quantum mass production in this context is in the application of the \(\prep\) operation.
This could be done either through the use of coherent alias sampling and a parallel QROM read, as we discussed in \Cref{app:lcu_overview}, or by directly using the techniques of \citen{Kretschmer2022-uy} for arbitrary state preparation.
Using the coherent alias sampling approach described above and setting \(\mu\) to a constant, then we have that \(\textsc{Cost}\left( \prep \right) = \mathcal{O}\left( N \log N \right)\).
Assuming that we are working in a regime where we can fruitfully mass-produce \(r\) copies of the QROM read for coherent alias sampling by applying \Cref{thm:mass_produced_qrom}, we can implement \(\prep^{\otimes r}\) for a comparable cost.
So the cost of implementing \(W^{\otimes r}\) with mass production is
\begin{equation}
	\mathcal{O}\left( N \log \left(N\right) + r \textsc{Cost}\left( \sel \right) \right),
	\label{eq:mass_production_costs_coherent_alias_sampling_W}
\end{equation}
whereas the cost without mass production would be
\begin{equation}
	\mathcal{O}\left( r N \log \left(N\right) + r \textsc{Cost}\left( \sel \right) \right).
\end{equation}
This implies that there is an asymptotic advantage provided that
\begin{equation}
	\textsc{Cost} \left( \sel \right) = o \left( N \log \left(N\right) \right).
\end{equation}

As we discussed in \Cref{app:lcu_overview}, the sparse simulation method of \citen{Berry2019-qo} satisfies this requirement.
Specifically, for this approach, \(\textsc{Cost}\left( \sel \right)\) scales as \(\mathcal{O}\left( N_{orb} \right)\), and the size of the QROM queries used to implement \(\prep\)  scales as \(N \propto N_{orb}^{b}\), where \(b\) is naively \(4\), but is perhaps \(2\) in some asymptotic limits.
In two particular examples (a Hydrogen chain and a square arrangement of Hydrogen atoms), \citen{Berry2019-qo} numerically estimated \(b \approx 1.78\) and \(b \approx 3.83\) respectively.
This allows us substantial room to benefit by taking \(r \gg 1\).

Naively, one might hope to balance the costs in \Cref{eq:mass_production_costs_coherent_alias_sampling_W} by taking \(r \approx N_{orb}^{b-1}\).
However, \Cref{thm:mass_produced_qrom} restricts us to values of \(r\) such that \(r = 2^{o \left( \frac{n - \log \lambda}{\log n} \right)}\).
For any choice of \(a > 0\), taking \(r = N^a\) implies \(r = 2^{n \log a}\), which does not satisfy the requirement on \(r\).
Nevertheless, we show in \Cref{thm:max_scaling_prop} that we can obtain a polynomial advantage when taking \(r = N/2\), reducing the cost of implementing all \(r\) data lookups from \(\Theta\left( N^2 \right)\) to \(\Theta \left(N^{\log_2 3} \right)\).
In other words, the amortized cost per data lookup goes as \(N^{\log_2 \left( 3/2 \right)}\).
For \(N \propto N_{orb}^b\), this means that \Cref{thm:max_scaling_prop} implies that we can implement \(W^{\otimes r}\) for \(r = N_{orb}^b /2\) for a cost that scales as 
\begin{equation}
	\mathcal{O}\left( N_{orb}^{b \log_2 3} \log \left( N_{orb} \right)\right) + N_{orb}^{b + 1}.
\end{equation}

Using the sparse simulation method to construct the walk operator, mass production with an appropriate value of \(r\), we can follow a parallel phase estimation approach to apply the transformation 
\begin{equation}
	\ket{\phi_j^{\pm}}^{\otimes r} \mapsto (W^t)^{\otimes r} \ket{\phi_j^{\pm}}^{\otimes r} = e^{\pm i r t \arccos(E_j/\lambda)} \ket{\phi_j^{\pm}}^{\otimes r}.
\end{equation}
Requiring that the error from iterative phase estimation is $O(\epsilon)$ with high probability, then the above circuit needs to be repeated a logarithmic number of times using a total evolution at each iteration that is on the order of $r t\in \bigot{\lambda_{sparse} / \epsilon}$~\cite{svore2013faster}.
Up to logarithmic factors, the overall cost is therefore
\begin{equation}
	\textsc{Cost} \left( \textrm{Parallel QPE with sparse simulation and mass production} \right) = \bigot{\frac{\lambda_{sparse}}{\epsilon} \left( N_{orb} + N_{orb}^{b  \log_2 \left(3/2 \right)} \right)},
	\label{eq:overall_cost_mass_produced_sparse}
\end{equation}
which is polynomially smaller than the cost without mass production,
\begin{equation}
	\textsc{Cost} \left( \textrm{Standard QPE with sparse simulation} \right) = \bigot{\frac{\lambda_{sparse}}{\epsilon}  N_{orb}^b}.
\end{equation}

We now briefly explain how mass production may be useful in accelerating this THC-based approach to quantum chemistry simulation, although we refer the reader to \citen{Lee2021-su} for a complete description of the algorithm.
Starting with a standard second-quantized representation of the electronic structure Hamiltonian,
\begin{equation}
	H=T+V=\sum_{pq} h_{pq}a^\dagger_p a_q+\sum_{pqrs} h_{pqrs} a^\dagger_p a^\dagger_q a_r a_s,
\end{equation}
tensor hypercontraction gives a method of rewriting the tensor describing the Coulomb operator in an approximate form,
\begin{equation}
	V \approx \sum_{pqrs} \sum_{\mu,\nu=1}^R \chi_p^{(\mu)} \chi_{q}^{(\mu)}\chi_r^{(\nu)} \chi_{s}^{(\nu)}\zeta_{\mu\nu} a^\dagger_p a^\dagger_q a_r a_s.
\end{equation}
Empirically, it has been argued that
\begin{equation}
	R = \bigot{N_{orb} \log(1/\epsilon_{\rm THC})}\label{eq:Reqn}
\end{equation}
is sufficient, where \(\epsilon_{\rm THC}\) is some error one allows in the THC decomposition. 

We claim that the actual cost (including the Clifford gates) of this algorithm is dominated by a QROM read with input size \(\approx R^2\) (and a small number of output bits that scales logarithmically with the desired precision).
Furthermore, the next most dominant cost is a QROM read with input size \(\approx R\) and output size \(\approx N_{orb}\), and all remaining components of the algorithm scale as \(\bigot{N_{orb}}\).
Notably, this second QROM read is not used for arbitrary state preparation (or a similar primitive), but instead it is used to coherently load parameters for a series of controlled rotations.
This application offers an example that highlights the power of using mass production for data loading rather than the more specific tasks of state preparation or unitary synthesis.

Mass production can help us reduce the cost of parallel calls to the walk operator in this case as well.
Applying \Cref{thm:max_scaling_prop} and performing a similar calculation to the one above reveals that we could reduce the amortized cost of implementing a call to the walk operator in the THC algorithm (in the context of parallel phase estimation) from \(\bigot{ R^2 + R N_{orb} + N_{orb}}\) to \(\bigot{R^{\log_2(9/4)} + R^{\log_2{3/2}} N_{orb} + N_{orb}}\).
Assuming the empirical scaling for \(R\) in~\eqref{eq:Reqn} holds, the overall cost therefore scales as 
\begin{equation}
	\textsc{Cost} \left( \textrm{Parallel QPE with THC and mass production} \right) = \bigot{\frac{\lambda_{THC}}{\epsilon} N_{orb}^{\log_2  3 }},
	\label{eq:parallel_qpe_thc_cost}
\end{equation}
where \(\lambda_{THC}\) is the rescaling factor of the THC block encoding.
Note that the QROM read with the smaller input size and larger output size is the dominant cost after applying \Cref{thm:max_scaling_prop}.
Contrast this with the cost available in the standard THC approach,
\begin{equation}
	\textsc{Cost} \left( \textrm{Standard QPE with THC} \right) = \bigot{\frac{\lambda_{THC}}{\epsilon} N_{orb}^{2}}.
\end{equation}
The scaling in \Cref{eq:parallel_qpe_thc_cost} represents a polynomial improvement in the gate complexity for this method when we account, as we do here, for the cost of Clifford gates as well as non-Clifford gates.

The parallel phase estimation considered above takes \(r\) copies of the appropriate eigenstate as an input to the simulation algorithm.
This is not a free resource and must be considered in the overall complexity.
The simplest way to address this is to use eigenstate filtering to prepare each of the individual eigenstates $\ket{\psi_j}$.
Let us assume that we are promised that we know the eigenenergy within error $\Delta$ and assume that the eigenvalue gap is at least $2\Delta$.
In this case, the cost of performing this preparation within error $\epsilon$ is given by~\cite{Lin2020-qh}
\begin{equation}
	\bigot{\frac{\lambda_{THC}
			N_{orb}^2 \log(1/\epsilon)}{\Delta}}
\end{equation}
which is sub-dominant if we assume that the gap is larger than $\epsilon$.
This is almost certainly the case if we are given an eigenstate of the Hamiltonian and merely use filtering to prepare a specific eigenstate of the walk operator.
In the more general situation, we may or may not have a gap between \(\Delta\) and \(\epsilon\).
The proposed combination of mass production and parallel phase estimation is most likely to be practically useful for examples where \(\Delta \gg \epsilon\).
We could, of course, use mass production to help implement multiple filtering steps in parallel as a prelude to using parallel phase estimation to estimate an eigenvalue.

\subsection{Matrix Product State Preparation}
\label{app:parallel_mps}

Parallel state preparation is a major obstacle in the application of parallel phase estimation.
In this section, we consider the use of mass production to accelerate the parallel preparation of several matrix product states.
This could serve as a complement or alternative to the phase estimation procedure discussed above, or it may be independently useful.
For example, the state-of-the-art approach for initial state preparation for first-quantized quantum algorithms naturally involves parallel calls to the same matrix product state preparation circuits as a subroutine~\cite{huggins2024efficient}.

Matrix product states take the form
\begin{equation}
	\ket{\psi} = \sum_{s\in \{0,1\}^n, \{\alpha\}}
	A^{s_0}_{\alpha_0} A^{s_1}_{\alpha_0 \alpha_1}\cdots A^{s_{n-1}}_{\alpha_{n-2}}\ket{s_0\ldots s_{n-1}}
\end{equation}
for a set of two- and three-index tensors $\{A^{s_0}_{\alpha_0} A^{s_1}_{\alpha_0 \alpha_1}\cdots A^{s_{n-1}}_{\alpha_{n-2}}\}$.
Matrix product states and are particularly efficient at representing states of one-dimensional quantum systems~\cite{Perez-Garcia2007-wt}, as well as amplitude-encoded functions of few variables~\cite{Oseledets2011-xm,huggins2024efficient}.
The individual dimensions of each of the \(\alpha\) indices are known as the ``bond dimension,'' which we denote as $\chi_j$ for the index $\alpha_j$.
For simplicity, we refer to the bond dimension of a matrix product state by a single number \(\chi\) such that \(\chi_j \leq \chi\) for all \(j\).
Matrix product states can exactly represent any state using an exponentially large bond dimension, and they are frequently used as ansatz for the ground state problem even when they are expected to scale exponentially.

It is also possible to approximately prepare a matrix product state using a quantum circuit whose cost scales polynomially with the bond dimension(s).
Specifically, consider the task of preparing a state $\ket{\phi}$ such that it closely approximates the target matrix product \(\ket{\psi}\), i.e.,
\begin{equation}
	\|\ket{\psi} - \ket{\phi} \| \le \epsilon
\end{equation}
for a chosen $\epsilon>0$.
A standard protocol for this task, discussed in Refs.~\citenum{huggins2024efficient,fomichev2024initial,Berry2024-qe} involves preparing an arbitrary state in a space of dimension \(\mathcal{O}\left( \chi \right)\) and then implementing a series of \(n-1\) unitaries of dimension \(\mathcal{O}\left( \chi \right)\).
This requires a number of gates that scales as \(\bigot{n \left( \chi^2 + \chi \log \left( 1/ \epsilon \right)\right)}\).
Note that optimizations to this process can reduce the number of non-Clifford Toffoli gates needed down to $\widetilde{O}(n \chi^{3/2} \log(1/\epsilon))$~\cite{huggins2024efficient,fomichev2024initial,Berry2024-qe}.
However, as argued previously this reduction may not be as useful in settings such as magic state cultivation where the costs of Clifford and non-Clifford gates are approximately equivalent.

Because this protocol for matrix product state preparation reduces the task to arbitrary state preparation and unitary synthesis, we can clearly accelerate it using mass production.
The states and unitaries that we require act on spaces of dimension at most \(\chi\), so we can benefit from mass-producing the most expensive such steps \(r\) times in parallel, provided that 
\begin{equation}
	r\in 2^{o(\log(\chi)/\log\log(\chi))}.
\end{equation}
The procedures outlined in \citen{fomichev2024initial}, \citen{huggins2024efficient}, and \citen{Berry2024-qe} already use QROM for the state preparation and unitary synthesis steps, so we could either apply our theorem to parallelize those QROM calls, or apply the results of \citen{Kretschmer2022-uy} directly.
In either case, we expect that a total of $\tilde{O}(\chi^2\log(1/\epsilon))$ gates will be required.

Recent work has found that using matrix product states with very high bond dimensions as initial states for quantum phase estimation is a promising route towards practical quantum advantage for quantum chemical simulation~\cite{fomichev2024initial, Berry2024-qe}.
\citen{Berry2024-qe}'s findings suggest that the cost of matrix product state preparation is moderately smaller than the cost of phase estimation for some of the smallest molecular systems that may be beyond the reach of classical methods.
For larger systems, the cost of matrix product state preparation is expected to grow exponentially, so it will be useful to be able to reduce its cost using mass production techniques.
This could be useful in the context of parallel phase estimation, or when it is necessary to perform several separate phase estimation experiments (on separate initial states) in order to have a high probability of obtaining the ground state energy.

\subsection{Parallel execution with amplitude amplification of success}
\label{app:parallel_amp_amp}

In this section, we consider the combination of our mass production protocol with amplitude amplification.
We consider an \(n\)-qubit unitary \(A\) such that
\begin{equation}
	A \ket{0^{\otimes n}} = \sqrt{p} \ket{G} \ket{1} + \sqrt{1 - p} \ket{B} \ket{0},
\end{equation}
where \(\ket{G}\) is some arbitrary ``good'' state and \(\ket{B}\) is some arbitrary ``bad'' state.
The usual task of amplitude amplification is to use queries to \(A\) and \(A^\dagger\) to prepare \(\ket{G}\) with some constant success probability greater than, say, \(1/2\).
This can be done using \(\approx \frac{1}{\sqrt{p}}\) queries to \(A\) and \(A^{\dagger}\).
We focus on the simplest case, where \(p\) is known ahead of time, although the more general case where \(p\) is unknown can be handled with a constant factor increase in cost~\cite{brassard2000quantum}.
Before we explain how to speed up amplitude amplification using quantum mass production, we note that query lower bounds for related problems, such as Grover's search algorithm~\cite{Zalka1999-no,brassard2000quantum}, will not apply in the models that we consider.
This is because implementing quantum mass production necessarily makes use of ``white-box'' access to the function being mass-produced.

We begin by making the assumption that we can fruitfully mass-produce up to \(r_{max}\) queries to \(A\).
Concretely, we assume that, there is a mass production protocol \(C(A, r)\) such that \(C(A, r) = A^{\otimes r}\) and
\begin{equation}
	\textsc{Cost}\left( C(A, r) \right) \approx \textsc{Cost}\left( A \right)
\end{equation}
for any \(r \leq r_{max}\), where \(r\) and \(r_{max}\) are both powers of two.
Similarly, we assume that we can mass-produce queries to \(A^{\dagger}\) under the same conditions.
Furthermore, we make the assumption that \(p r_{max} \ll 1\) in order to focus on the most interesting case.

Consider the action of \(A^{\otimes r}\) on the appropriate zero state,
\begin{equation}
	A^{\otimes r} \ket{0^{\otimes r n}} = \bigotimes_{i=1}^{r} \left( \sqrt{p} \ket{G} \ket{1} + \sqrt{1 - p} \ket{B} \ket{0} \right).
\end{equation}
Let \(\ket{\mathcal{G}}\) denote the ``good'' state obtained by expanding this tensor product, collecting all of the terms where at least one of the copies is in the state \(\ket{G}\ket{1}\), and normalizing.
Likewise, let \(\mathcal{B}\) denote the ``bad'' state \(\ket{\mathcal{B}} = \otimes_{i=1}^{r} \ket{B}\ket{0}\).
Finally, let \(\mathcal{A}_r\) denote the unitary that applies \(A^{\otimes r}\) and then uses a multi-qubit OR gate to flip an ancilla qubit if any of the \(A\)s have successfully prepared a \(\ket{G}\) state.
We assume that the cost of this OR operation is negligible compared to the cost of implementing \(A\).
Then we can write
\begin{equation}
	\mathcal{A}_r \ket{0^{\otimes n r + 1}} = \sqrt{p_{r}} \ket{\mathcal{G}}\ket{1} + \sqrt{1 - p_{r}} \ket{\mathcal{B}}\ket{0},
	\label{eq:script_A_def}
\end{equation}
where \(p_{r}\) is implicitly defined by the expression \(\sqrt{1 - p_r} = \left( \sqrt{1 - p} \right)^r\).
Solving for \(p_r,\) we have
\begin{equation}
	p_r = \sum_{i=1}^r \left( -1 \right)^{i + 1} \binom{r}{i} p^i = r p + \mathcal{O}(r^2 p^2).\label{eq:pr}
\end{equation}

In order to combine mass production with amplitude amplification, we simply apply amplitude amplification to \(\mathcal{A}_r\) to amplify the probability of obtaining \(\mathcal{G}\).
Once we have obtained \(\mathcal{G}\), we can either measure the individual ancilla qubits to determine which register is in the state \(\ket{G}\), or sort the registers according to their ancilla values (if we wish to avoid non-unitary operations).
Producing at least one copy of \(\ket{G}\) with high probability using this approach requires \(\approx \frac{1}{\sqrt{r p}}\) queries to \(\mathcal{A}_r\).
By assumption, the cost for implementing \(\mathcal{A}_r\) is comparable to the cost of implementing \(A\) for any \(r \leq r_{max}\).
Therefore, mass production allows us to reduce the costs for preparing \(\ket{G}\) by a factor of \(\sqrt{r_{max}}\).

\begin{corollary}
	Let \(A\) be a quantum algorithm that makes no measurements, implemented using \(q\) queries to an oracle \(O_f: \ket{x}\ket{\alpha} \rightarrow \ket{x} \ket{\alpha \oplus f(x)}\) for some known \(f:\left\{0,1\right\}^n \rightarrow \left\{0, 1\right\}^m\), together with at most \(c\ge 1\) additional %
    Clifford and Toffoli gates.
	Furthermore, let \(A \ket{0} = \sqrt{p} \ket{G} \ket{1} + \sqrt{1 - p} \ket{B} \ket{0}\) for some \(p < 1\), and assume that \(A\) uses at most \(n'\) qubits.
	For any \(r\) that satisfies \(r = 2^{o\left( n / \log n \right) }\) and \( r p \in o(1) \), there exists an algorithm that prepares a copy of the state \(\ket{G}\) with probability \(1 - \delta\) using \(\mathcal{O}\left(\frac{\log\left( 1/\delta \right)}{\sqrt{r p}} \left(2^n m q  + r c \right)\right)\) Clifford and Toffoli gates.
This algorithm uses at most \(r n' + 1\) qubits.
\end{corollary}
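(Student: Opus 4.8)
The plan is to expose the internal structure of $A$, mass-produce its oracle queries across the $r$ parallel copies, and then amplify the resulting flagged success probability, boosting it to $1-\delta$ by repetition.

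First I would construct $A^{\otimes r}$ layer by layer. Writing $A$ as an interleaving of its $q$ calls to $O_f$ with blocks of its $\le c$ remaining Clifford and Toffoli gates, the parallel circuit $A^{\otimes r}$ becomes $q$ layers of $O_f^{\otimes r}$ separated by $r$ parallel copies of the non-oracle gates. I replace each layer $O_f^{\otimes r}$ by the circuit guaranteed by \Cref{thm:mass_produced_qrom}, taking $\lambda$ to be a constant so that the Clifford count dominates and each layer costs $(K+o(1))2^n m + \bigo{2^n} = \bigo{2^n m}$; the hypotheses $r = 2^{o(n/\log n)}$ and $m$ constant are exactly what the theorem requires. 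Because $O_f$ is its own inverse ($O_f^2 = \idmat$), the queries appearing in $A^{\dagger\,\otimes r}$ are again layers $O_f^{\otimes r}$ and cost the same, while the non-oracle gates invert to Clifford/Toffoli gates of identical count; hence $\textsc{Cost}(A^{\otimes r}) = \textsc{Cost}(A^{\dagger\,\otimes r}) = \bigo{2^n m q + r c}$. Appending the multi-qubit OR gate that defines $\mathcal{A}_r$ in \eqref{eq:script_A_def} adds only $\bigo{r}$ gates, absorbed into $rc$ since $c \ge 1$.

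Next I would feed $\mathcal{A}_r$ into amplitude amplification. By \eqref{eq:pr} and the assumption $rp \in o(1)$, the flagged success probability is $p_r = rp\,(1+o(1)) = \bigtheta{rp}$, so $1/\sqrt{p_r} = \bigtheta{1/\sqrt{rp}}$. The Grover iterate for $\mathcal{A}_r$ uses one call each to $\mathcal{A}_r$ and $\mathcal{A}_r^{\dagger}$, a phase flip on the flag qubit, and a reflection about $\ket{0^{\otimes rn'+1}}$; the latter two cost $\bigo{rn'}$ and are negligible against the per-application cost in the regime of interest. With $p_r$ known, $\bigo{1/\sqrt{rp}}$ iterations prepare $\ket{\mathcal{G}}$ with constant probability; repeating $\bigo{\log(1/\delta)}$ times and stopping when the flag reads $1$ (equivalently, using fixed-point amplification~\cite{Yoder2014-ek}) drives the failure probability below $\delta$. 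Reading out a single copy of $\ket{G}$ from $\ket{\mathcal{G}}$ costs only a measurement of the per-copy flags (or a controlled sort, to remain unitary). Multiplying the three factors yields the claimed $\bigo{\tfrac{\log(1/\delta)}{\sqrt{rp}}\left(2^n m q + r c\right)}$ gate count.

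The step I expect to demand the most care is the qubit accounting against the bound $rn' + 1$. The $r$ parallel copies of $A$ occupy $rn'$ qubits and the OR flag occupies the remaining one, but the mass-production subroutine for each layer allocates its own SelectSwap-QROM and control-flow ancilla (of size $\bigo{\lambda m + \mathrm{poly}(n)}$ per recursion level, with $\lambda$ constant). The two facts I would establish are that these ancilla are returned to $\ket{0}$ within each layer, so they neither accumulate across the $q$ layers nor across amplification iterations, and that they can be supplied from the workspace already counted in $n'$: since a single copy of $A$ must itself furnish enough ancilla to run one $O_f$ query, the $r$ copies collectively provide the room the mass-production circuit needs during a layer, when those workspace qubits would otherwise be idle. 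Verifying this borrow-and-return argument is the only point where the stated qubit bound is tight rather than loose; everything else is a routine combination of \Cref{thm:mass_produced_qrom} with textbook amplitude amplification.
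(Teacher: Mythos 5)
Your proposal is correct and takes essentially the same route as the paper's proof: construct \(\mathcal{A}_r\) by replacing each of the \(q\) layers of \(O_f^{\otimes r}\) with the circuit of \Cref{thm:mass_produced_qrom} at constant \(\lambda\) (giving \(\textsc{Cost}(\mathcal{A}_r) = \mathcal{O}(2^n m q + rc)\) with the OR flag absorbed), then apply fixed-point amplitude amplification~\cite{Yoder2014-ek} with \(p_r \sim rp\) to get \(\mathcal{O}\left(\log(1/\delta)/\sqrt{rp}\right)\) invocations and multiply. Your closing paragraph on ancilla accounting is extra diligence the paper does not attempt — it simply asserts the \(rn'+1\) qubit count — though be aware your borrow-and-return idea would need the dirty-ancilla QROAM variant, since the clean-ancilla construction used in the mass-production protocol cannot borrow the \(A\) copies' idle workspace in an arbitrary state.
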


\begin{proof}
	We can implement the transformation in \Cref{eq:script_A_def} by combining \(r\) parallel queries to \(A\) with an OR operation that flags the success of any of the individual calls to \(A\).
	This requires exactly \(r n' + 1\) qubits and the OR operation requires a number of gates that scales linearly in \(r\).
	Rather than implementing each call to \(O_f^{\otimes r}\) directly, we can apply \Cref{thm:mass_produced_qrom} to do so with a number of gates that scales as \(\mathcal{O}\left( 2^n m \right)\) (taking \(\lambda\) to be a constant).
	Since $q$ queries need to be made to $O_f$, the cost of $\mathcal{A}_r$ is $q$ times the cost of querying $O_f^{\otimes r}$ followed by at most $rc$ additional Clifford and Toffoli gates where $c\ge 1$ is the additional Clifford and Toffoli gates needed by $A$ in addition to query operations. 
 The overall number of gates required to implement \(\mathcal{A}_r\) is therefore 
 \begin{equation}
     {\rm COST}(\mathcal{A}_r)= \mathcal{O}\left( 2^n m q + rc\right)
 \end{equation}

	We can use standard fixed-point amplitude amplification techniques to prepare \(\ket{\mathcal{G}}\) using \(\mathcal{O}\left( \frac{\log\left( 1/\delta \right)}{\sqrt{p_r}} \right)\) invocations of \(\mathcal{A}_r\) and its inverse~\cite{Yoder2014-ek} together with \(\mathcal{O}\left( \frac{\log\left( 1/\delta \right)}{\sqrt{p_r}} \right)\) additional gates.
	By assumption, \(rp\in o(1)\) so we have from~\eqref{eq:pr} \(p_r \sim r p\), implying that \(\mathcal{O}\left( \frac{\log\left( 1/\delta \right)}{\sqrt{r p}} \right)\) calls to \(\mathcal{A}_r\) are sufficient.
	Multiplying the number of calls to \(\mathcal{A}_r\) by the number of gates required to implement it yields the claimed complexity.
\end{proof}

\section{Mass production for serial operations}
\label{app:resource_state}

Quantum mass production is most useful when it is natural to parallelize calls to the same data loading oracle.
However, we show in this section that it is also possible to obtain some benefit when serial queries are required.
The basic idea is to use mass production to cheaply prepare several copies of the following resource state:
\begin{equation}
	\ket{\text{QROM}_{f}} = O_f \ket{+}^{\otimes n} \ket{0} = 2^{-n/2} \sum_{y=0}^{N - 1} \ket{y} \ket{f(y)}.
	\label{eq:qrom_resource_state}
\end{equation}
As we explain below, we can consume this resource state to implement the operation 
\begin{equation}
	\bar{O}_f^{(b)}: \ket{x}\ket{0} \rightarrow \ket{x}\ket{0 \oplus f(x \oplus b)},
\end{equation}
where \(b\) is a bitstring that is determined randomly when the resource state is consumed and the bar above \(O_f^{(b)}\) indicates that the output register must be in the zero state (we discuss the implications of this requirement after introducing the protocol).
We can then recover the operation
\begin{equation}
	\bar{O}_f: \ket{x}\ket{0} \rightarrow \ket{x}\ket{f(x)}
\end{equation}
by performing an additional QROM query,
\begin{equation}
	O_g: \ket{x}\ket{\alpha} \rightarrow \ket{x}\ket{\alpha \oplus g(x)},
\end{equation}
where we define
\begin{equation}
	g(x) = f(x) \oplus f(x \oplus b).
\end{equation}
Because \(g(x) = g(x \oplus b)\) for all \(x\), we can specify \(g(x)\) using half as much information as \(f(x)\). We show below how this allows us to implement \(O_g\) more cheaply than \(O_f\) or \(\bar{O}_f\).

Provided that the cost of storing several copies of the resource state is negligible, these observations allow us to nearly halve the cost of repeated serial queries to \(\bar{O}_f\).
In order to implement \(c\) serial queries to \(\bar{O}_f\), we first use mass production to prepare \(\ket{\text{QROM}_{f}}^{\otimes c}\) by replacing \(c\) parallel calls to \(O_f\) with a single call to a circuit \(C\) that mass-produces \(O_f^{\otimes c}\).
The cost to consume these resource states is dominated by implementing the \(O_g\), which, as we show below, is approximately half the cost of implementing \(O_f\).
Assuming that we are in the regime where \(\textsc{Cost}\left( C \right) \approx \textsc{Cost}\left( O_f \right)\), the overall cost is
\begin{equation}
	\textsc{Cost} \underbrace{ \left( \bar{O}_f \cdots \bar{O}_f \cdots \bar{O}_f \right)}_{c \text{ serial queries}} \approx \textsc{Cost}\left( O_f \right) + c \cdot \textsc{Cost}\left( O_g \right),
\end{equation}
which approaches \(\frac{c}{2} \cdot \textsc{Cost} \left( O_f \right)\) as \(c\) increases.

The process for consuming the resource states is simple.
Let \(\ket{\psi} = \sum_{x=0}^{N-1} c_x \ket{x}\) be an arbitrary state of an input register.
To obtain \(\bar{O}_f \ket{\psi} \ket{0}\), we begin with the input state \(\ket{\psi}\) and one copy of the resource state \(\ket{\text{QROM}_f}\) and perform the following steps:
\begin{align}
	\MoveEqLeft \ket{\psi} \ket{\text{QROM}_f} &
	\\
	                                           & = \ket{\psi} \left( 2^{-n/2} \sum_{y \in \{0,1\}^n} \ket{y}\ket{f(y)} \right)
	\\
	                                           & = 2^{-n/2} \sum_{x, y \in \{0,1\}^n} c_x \ket{x} \ket{y} \ket{f(y)}
	\\
	                                           & \xrightarrow{\text{Apply CNOTs: } \ket{x}\ket{y} \to \ket{x}\ket{x \oplus y}} 2^{-n/2} \sum_{x, y \in \{0,1\}^n} c_x \ket{x} \ket{x \oplus y} \ket{f(y)}
	\\
	                                           & \xrightarrow{\substack{\text{Measure 2nd register,}
	\\
			\text{outcome } b \in \{0,1\}^n}} \sum_{x \in \{0,1\}^n} c_x \ket{x} \ket{b} \ket{f(x \oplus b)}
	\\
	                                           & \xrightarrow{\text{Discard 2nd register (containing } \ket{b}\text{)}} \sum_{x \in \{0,1\}^n} c_x \ket{x} \ket{f(x \oplus b)}.
\end{align}
Note that each value of \(b\) is equally likely and that the post-measurement state in the above equations is normalized.
After discarding the unentangled register, the final state is equivalent to \(\bar{O}_f^{(b)} \ket{\psi} \ket{0}\), where \(\bar{O}_f^{(b)}\) is an operator that computes \(f(x \oplus b)\) into the output register.
To obtain the desired \(f(x)\), we apply a correction using a QROM read \(O_g\) for a function \(g\) that can be specified using only \(2^{n-1}\) values.
The specific construction of \(O_g\) depends on the measurement outcome \(b\), as detailed below.
Let \(x'\) denote an \((n-1)\)-bit string representing the last \(n-1\) bits of an \(n\)-bit string \(x\), such that \(x = 0 \doubleplus x'\) or \(x = 1 \doubleplus x'\).
There are four cases to consider.

\begin{enumerate}[label={\textbf{Case \arabic*:}}, wide, labelwidth=!, labelindent=2pt]
	\item \textbf{\(b = 0^n\):}
	      No correction is needed, as \(f(x \oplus 0^n) = f(x)\).
	      The process is complete.

	\item \textbf{\(b = 1 \doubleplus 0^{n-1}\) (i.e., \(b = 100\dots0\)):}
	      The state after resource consumption is effectively \[ \sum_{x' \in \{0,1\}^{n-1}} \left( c_{0 \doubleplus x'} \ket{0 \doubleplus x'}\ket{f(1 \doubleplus x')} + c_{1 \doubleplus x'} \ket{1 \doubleplus x'}\ket{f(0\doubleplus x')} \right).
	      \]

	      We define a correction function \(g(z) = f(0 \doubleplus z) \oplus f(1 \doubleplus z)\) for \(z \in \{0,1\}^{n-1}\).
	      Performing a QROM read \(O_g\) (where the input bits are the last \(n-1\) bits of the full input register) yields:
	      \begin{align}
		      \MoveEqLeft \sum_{x' \in \{0,1\}^{n-1}} \left( c_{0 \doubleplus x'} \ket{0 \doubleplus x'}\ket{f(1 \doubleplus x') \oplus g(x')} + c_{1 \doubleplus x'} \ket{1 \doubleplus x'} \ket{f(0 \doubleplus x') \oplus g(x')} \right)
		      \\
		       & = \sum_{x' \in \{0,1\}^{n-1}} \left( c_{0 \doubleplus x'} \ket{0 \doubleplus x'}\ket{f(0 \doubleplus x')} + c_{1 \doubleplus x'} \ket{1 \doubleplus x'}\ket{f(1 \doubleplus x')} \right)
		      \\
		       & = \sum_{x \in \{0,1\}^n} c_{x}\ket{x}\ket{f(x)}.
	      \end{align}
	\item \textbf{\(b = 1 \doubleplus b'\), where \(b' \in \{0,1\}^{n-1}\) and \(b' \neq 0^{n-1}\):}
	      This generalizes Case 2.
	      The state after resource consumption is:
	      \begin{equation}
		      \ket{\phi_{\text{init}}} = \sum_{x' \in \{0,1\}^{n-1}} \left( c_{0 \doubleplus x'} \ket{0 \doubleplus x'} \ket{f(1 \doubleplus (x'\oplus b'))} + c_{1 \doubleplus x'} \ket{1 \doubleplus x'}\ket{f(0\doubleplus (x'\oplus b'))} \right).
	      \end{equation}
	      The correction function is \(g(z) = f(0\doubleplus z) \oplus f(1 \doubleplus (z \oplus b'))\) for \(z \in \{0,1\}^{n-1}\).
	      The correction proceeds in steps:
	      \begin{align*}
		      \ket{\phi_{\text{init}}} & \xrightarrow{\text{Step 1}} \sum_{x' \in \{0,1\}^{n-1}} \left(
																							\begin{aligned}
																									& c_{0 \doubleplus x'} \ket{0 \doubleplus x'} \ket{f(1 \doubleplus (x'\oplus b'))}
																								\\
																								+ & c_{1 \doubleplus x'} \ket{1 \doubleplus (x'\oplus b')}\ket{f(0\doubleplus (x'\oplus b'))}
																							\end{aligned}
		      \right)
		      \\
		                               & \xrightarrow{\text{Step 2}} \sum_{x' \in \{0,1\}^{n-1}} \left(
																							\begin{aligned}
																									& c_{0 \doubleplus x'} \ket{0 \doubleplus x'} \ket{f(1 \doubleplus (x'\oplus b')) \oplus g(x')}
																								\\
																								+ & c_{1 \doubleplus x'} \ket{1 \doubleplus (x'\oplus b')}\ket{f(0\doubleplus (x'\oplus b')) \oplus g(x'\oplus b')}
																							\end{aligned}
		      \right)
		      \\
		                               & \xrightarrow{\text{Step 3}} \sum_{x' \in \{0,1\}^{n-1}} \left( c_{0 \doubleplus x'} \ket{0 \doubleplus x'} \ket{f(0 \doubleplus x')} + c_{1 \doubleplus x'} \ket{1 \doubleplus x'} \ket{f(1\doubleplus x')} \right)
		      \\
		                               & = \sum_{x \in \{0,1\}^n} c_x \ket{x}\ket{f(x)}.
	      \end{align*}
	      Step 1 conditionally applies CNOTs to XOR \(b'\) into the last \(n-1\) bits of the input register if the first bit is \(1\). Step 2 applies the QROM \(O_g\), controlled by the last \(n-1\) bits of the (potentially modified) input register. Step 3 substitutes the definition of \(g\) and uncomputes the conditional XOR from Step 1, restoring the input register to its original state.
	\item \textbf{\(b = 0 \doubleplus b'\), where \(b' \in \{0,1\}^{n-1}\) and \(b' \neq 0^{n-1}\):}
	      If \(b' = 0^{n-1}\), this reduces to Case 1.
	      Otherwise, we find any index \(i \in \{2, \dots, n\}\) such that the \(i\)-th bit of \(b\) is 1.
	      We can then relabel the qubits of the input register so that the \(i\)th qubit becomes the first qubit.
	      This permutation is also applied to the classical definition of \(f\) (to define \(f_{\text{perm}}\)) and to \(b\) (to get \(b_{\text{perm}}\)).
	      This transformation ensures that the first bit of \(b_{\text{perm}}\) is 1.
	      This relabeling is a classical pre-processing step that determines which \(2^{n-1}\)-entry QROM \(O_g\) to synthesize for \(f_{\text{perm}}\); it does not require any additional quantum operations during the correction phase itself.
	      The problem then reduces to Case 3, using \(f_{\text{perm}}\) and \(b_{\text{perm}}\).
\end{enumerate}
The cost of consuming the resource state and applying the correction is dominated by the cost of implementing \(O_g\), a QROM read with exactly half as many inputs as \(O_f\) (and therefore half the cost).

This strategy allows us to benefit when we are repeatedly implementing \(\bar{O}_f\), which queries \(f\) in superposition, subject to the requirement that the output register is in the \(\ket{0}\) state.
Operationally, we could define the action of this protocol on a more general state by resetting the output register to the zero state before proceeding.
Letting
\begin{equation}
	\ket{\Psi} = \sum_{i,j} c_{ij} \ket{x_i} \ket{\alpha_j},
\end{equation}
our protocol would implement the quantum channel
\begin{equation}
	\ketbra{\Psi} \rightarrow O_f \left( \tr_{\text{output}} \left[ \ketbra{\Psi} \right] \otimes \ketbra{0} \right) O_f^\dagger.
\end{equation}
This would not be sufficient for applications that require us to implement the map \(\ket{x}\ket{\alpha} \rightarrow \ket{x}\ket{\alpha \oplus f(x)}\) for arbitrary \(\alpha\), nor would it be sufficient for applications that require us to act unitarily.
\footnote{In some applications of QROM, the data loading oracle is only specified by its action on input states of the form \(\ket{x}\ket{0}\).
	In many of these cases though, there is also a (sometimes implicit) requirement that the data loading oracle act unitarily on an arbitrary input.
}

However, it is frequently possible to use \(\bar{O}_f\) to implement the more general
\begin{equation}
	O_f: \ket{x}\ket{\alpha} \rightarrow \ket{x}\ket{\alpha \oplus f(x)}
\end{equation}
with little to no additional overhead.
This is because we can always perform the lookup to an ancilla output register initialized in the \(\ket{0}\) state before XORing the result into the real output register, implementing the operation
\begin{equation}
	\tilde{O}_f: \ket{x}\ket{\alpha}\ket{0} \rightarrow \ket{x}\ket{\alpha \oplus f(x)} \ket{f(x)}.
\end{equation}
As we briefly discuss in \Cref{app:qrom_background}, we can use the measurement-based uncomputation techniques of \citen{Berry2019-qo} to uncompute the ancilla register with a cost that scales as \(\bigo{2^n}\) rather than \(\bigo{2^n m}\).
When \(m\) is large, this additional overhead is negligible.
In other cases, when we intend to temporarily use the value output by the QROM and then uncompute it anyway, it may be possible to fold the cost of uncomputing the junk register into the subsequent uncomputation step with no additional cost at all.
See, e.g., the use of QROM to implement a block encoding using a linear combination of unitaries approach, as we review in \Cref{app:lcu_overview}.

In the future, it would be interesting to investigate other applications for our resource state construction, and to understand whether or not it is possible to generalize it to reduce the cost of serial queries by more than a factor of \(\approx 2\).


\begin{thebibliography}{60}%
\makeatletter
\providecommand \@ifxundefined [1]{%
 \@ifx{#1\undefined}
}%
\providecommand \@ifnum [1]{%
 \ifnum #1\expandafter \@firstoftwo
 \else \expandafter \@secondoftwo
 \fi
}%
\providecommand \@ifx [1]{%
 \ifx #1\expandafter \@firstoftwo
 \else \expandafter \@secondoftwo
 \fi
}%
\providecommand \natexlab [1]{#1}%
\providecommand \enquote  [1]{``#1''}%
\providecommand \bibnamefont  [1]{#1}%
\providecommand \bibfnamefont [1]{#1}%
\providecommand \citenamefont [1]{#1}%
\providecommand \href@noop [0]{\@secondoftwo}%
\providecommand \href [0]{\begingroup \@sanitize@url \@href}%
\providecommand \@href[1]{\@@startlink{#1}\@@href}%
\providecommand \@@href[1]{\endgroup#1\@@endlink}%
\providecommand \@sanitize@url [0]{\catcode `\\12\catcode `\$12\catcode `\&12\catcode `\#12\catcode `\^12\catcode `\_12\catcode `\%12\relax}%
\providecommand \@@startlink[1]{}%
\providecommand \@@endlink[0]{}%
\providecommand \url  [0]{\begingroup\@sanitize@url \@url }%
\providecommand \@url [1]{\endgroup\@href {#1}{\urlprefix }}%
\providecommand \urlprefix  [0]{URL }%
\providecommand \Eprint [0]{\href }%
\providecommand \doibase [0]{https://doi.org/}%
\providecommand \selectlanguage [0]{\@gobble}%
\providecommand \bibinfo  [0]{\@secondoftwo}%
\providecommand \bibfield  [0]{\@secondoftwo}%
\providecommand \translation [1]{[#1]}%
\providecommand \BibitemOpen [0]{}%
\providecommand \bibitemStop [0]{}%
\providecommand \bibitemNoStop [0]{.\EOS\space}%
\providecommand \EOS [0]{\spacefactor3000\relax}%
\providecommand \BibitemShut  [1]{\csname bibitem#1\endcsname}%
\let\auto@bib@innerbib\@empty
%</preamble>
\bibitem [{\citenamefont {Ulig}(1974)}]{Ulig1974-xh}%
  \BibitemOpen
  \bibfield  {author} {\bibinfo {author} {\bibfnamefont {D.}~\bibnamefont {Ulig}},\ }\bibfield  {title} {\bibinfo {title} {On the synthesis of self-correcting schemes from functional elements with a small number of reliable elements},\ }\href {https://doi.org/10.1007/BF01152835} {\bibfield  {journal} {\bibinfo  {journal} {Mathematical notes of the Academy of Sciences of the USSR}\ }\textbf {\bibinfo {volume} {15}},\ \bibinfo {pages} {558} (\bibinfo {year} {1974})}\BibitemShut {NoStop}%
\bibitem [{\citenamefont {Uhlig}(1992)}]{Uhlig1992-tg}%
  \BibitemOpen
  \bibfield  {author} {\bibinfo {author} {\bibfnamefont {D.}~\bibnamefont {Uhlig}},\ }\bibfield  {title} {\bibinfo {title} {Networks computing boolean functions for multiple input values},\ }in\ \href {https://doi.org/10.1017/CBO9780511526633.013} {\emph {\bibinfo {booktitle} {Boolean Function Complexity}}}\ (\bibinfo  {publisher} {Cambridge University Press},\ \bibinfo {year} {1992})\ pp.\ \bibinfo {pages} {165--173}\BibitemShut {NoStop}%
\bibitem [{\citenamefont {Kretschmer}(2022)}]{Kretschmer2022-uy}%
  \BibitemOpen
  \bibfield  {author} {\bibinfo {author} {\bibfnamefont {W.}~\bibnamefont {Kretschmer}},\ }\bibfield  {title} {\bibinfo {title} {Quantum mass production theorems},\ }\Eprint {https://arxiv.org/abs/2212.14399} {arXiv:2212.14399 [quant-ph]}  (\bibinfo {year} {2022})\BibitemShut {NoStop}%
\bibitem [{\citenamefont {von Burg}\ \emph {et~al.}(2021)\citenamefont {von Burg}, \citenamefont {Low}, \citenamefont {Häner}, \citenamefont {Steiger}, \citenamefont {Reiher}, \citenamefont {Roetteler},\ and\ \citenamefont {Troyer}}]{von-Burg2021-yq}%
  \BibitemOpen
  \bibfield  {author} {\bibinfo {author} {\bibfnamefont {V.}~\bibnamefont {von Burg}}, \bibinfo {author} {\bibfnamefont {G.~H.}\ \bibnamefont {Low}}, \bibinfo {author} {\bibfnamefont {T.}~\bibnamefont {Häner}}, \bibinfo {author} {\bibfnamefont {D.~S.}\ \bibnamefont {Steiger}}, \bibinfo {author} {\bibfnamefont {M.}~\bibnamefont {Reiher}}, \bibinfo {author} {\bibfnamefont {M.}~\bibnamefont {Roetteler}},\ and\ \bibinfo {author} {\bibfnamefont {M.}~\bibnamefont {Troyer}},\ }\bibfield  {title} {{\selectlanguage {en}\bibinfo {title} {Quantum computing enhanced computational catalysis}},\ }\href {https://doi.org/10.1103/physrevresearch.3.033055} {\bibfield  {journal} {\bibinfo  {journal} {Phys. Rev. Res.}\ }\textbf {\bibinfo {volume} {3}},\ \bibinfo {pages} {033055} (\bibinfo {year} {2021})}\BibitemShut {NoStop}%
\bibitem [{\citenamefont {Lee}\ \emph {et~al.}(2021)\citenamefont {Lee}, \citenamefont {Berry}, \citenamefont {Gidney}, \citenamefont {Huggins}, \citenamefont {McClean}, \citenamefont {Wiebe},\ and\ \citenamefont {Babbush}}]{Lee2021-su}%
  \BibitemOpen
  \bibfield  {author} {\bibinfo {author} {\bibfnamefont {J.}~\bibnamefont {Lee}}, \bibinfo {author} {\bibfnamefont {D.~W.}\ \bibnamefont {Berry}}, \bibinfo {author} {\bibfnamefont {C.}~\bibnamefont {Gidney}}, \bibinfo {author} {\bibfnamefont {W.~J.}\ \bibnamefont {Huggins}}, \bibinfo {author} {\bibfnamefont {J.~R.}\ \bibnamefont {McClean}}, \bibinfo {author} {\bibfnamefont {N.}~\bibnamefont {Wiebe}},\ and\ \bibinfo {author} {\bibfnamefont {R.}~\bibnamefont {Babbush}},\ }\bibfield  {title} {{\selectlanguage {en}\bibinfo {title} {Even more efficient quantum computations of chemistry through tensor hypercontraction}},\ }\href {https://doi.org/10.1103/prxquantum.2.030305} {\bibfield  {journal} {\bibinfo  {journal} {PRX quantum}\ }\textbf {\bibinfo {volume} {2}},\ \bibinfo {pages} {030305} (\bibinfo {year} {2021})}\BibitemShut {NoStop}%
\bibitem [{\citenamefont {Caesura}\ \emph {et~al.}(2025)\citenamefont {Caesura}, \citenamefont {Cortes}, \citenamefont {Pol}, \citenamefont {Sim}, \citenamefont {Steudtner}, \citenamefont {Anselmetti}, \citenamefont {Degroote}, \citenamefont {Moll}, \citenamefont {Santagati}, \citenamefont {Streif},\ and\ \citenamefont {Tautermann}}]{Caesura2025-wl}%
  \BibitemOpen
  \bibfield  {author} {\bibinfo {author} {\bibfnamefont {A.}~\bibnamefont {Caesura}}, \bibinfo {author} {\bibfnamefont {C.~L.}\ \bibnamefont {Cortes}}, \bibinfo {author} {\bibfnamefont {W.}~\bibnamefont {Pol}}, \bibinfo {author} {\bibfnamefont {S.}~\bibnamefont {Sim}}, \bibinfo {author} {\bibfnamefont {M.}~\bibnamefont {Steudtner}}, \bibinfo {author} {\bibfnamefont {G.-L.~R.}\ \bibnamefont {Anselmetti}}, \bibinfo {author} {\bibfnamefont {M.}~\bibnamefont {Degroote}}, \bibinfo {author} {\bibfnamefont {N.}~\bibnamefont {Moll}}, \bibinfo {author} {\bibfnamefont {R.}~\bibnamefont {Santagati}}, \bibinfo {author} {\bibfnamefont {M.}~\bibnamefont {Streif}},\ and\ \bibinfo {author} {\bibfnamefont {C.~S.}\ \bibnamefont {Tautermann}},\ }\bibfield  {title} {\bibinfo {title} {Faster quantum chemistry simulations on a quantum computer with improved tensor factorization and active volume compilation},\ }\Eprint {https://arxiv.org/abs/2501.06165} {arXiv:2501.06165 [quant-ph]}  (\bibinfo {year} {2025})\BibitemShut {NoStop}%
\bibitem [{\citenamefont {Low}\ \emph {et~al.}(2025)\citenamefont {Low}, \citenamefont {King}, \citenamefont {Berry}, \citenamefont {Han}, \citenamefont {DePrince}, \citenamefont {White}, \citenamefont {Babbush}, \citenamefont {Somma},\ and\ \citenamefont {Rubin}}]{Low2025-ei}%
  \BibitemOpen
  \bibfield  {author} {\bibinfo {author} {\bibfnamefont {G.~H.}\ \bibnamefont {Low}}, \bibinfo {author} {\bibfnamefont {R.}~\bibnamefont {King}}, \bibinfo {author} {\bibfnamefont {D.~W.}\ \bibnamefont {Berry}}, \bibinfo {author} {\bibfnamefont {Q.}~\bibnamefont {Han}}, \bibinfo {author} {\bibfnamefont {A.~E.}\ \bibnamefont {DePrince}, \bibfnamefont {III}}, \bibinfo {author} {\bibfnamefont {A.}~\bibnamefont {White}}, \bibinfo {author} {\bibfnamefont {R.}~\bibnamefont {Babbush}}, \bibinfo {author} {\bibfnamefont {R.~D.}\ \bibnamefont {Somma}},\ and\ \bibinfo {author} {\bibfnamefont {N.~C.}\ \bibnamefont {Rubin}},\ }\bibfield  {title} {\bibinfo {title} {Fast quantum simulation of electronic structure by spectrum amplification},\ }\Eprint {https://arxiv.org/abs/2502.15882} {arXiv:2502.15882 [quant-ph]}  (\bibinfo {year} {2025})\BibitemShut {NoStop}%
\bibitem [{\citenamefont {Babbush}\ \emph {et~al.}(2021)\citenamefont {Babbush}, \citenamefont {McClean}, \citenamefont {Newman}, \citenamefont {Gidney}, \citenamefont {Boixo},\ and\ \citenamefont {Neven}}]{Babbush2021-aq}%
  \BibitemOpen
  \bibfield  {author} {\bibinfo {author} {\bibfnamefont {R.}~\bibnamefont {Babbush}}, \bibinfo {author} {\bibfnamefont {J.~R.}\ \bibnamefont {McClean}}, \bibinfo {author} {\bibfnamefont {M.}~\bibnamefont {Newman}}, \bibinfo {author} {\bibfnamefont {C.}~\bibnamefont {Gidney}}, \bibinfo {author} {\bibfnamefont {S.}~\bibnamefont {Boixo}},\ and\ \bibinfo {author} {\bibfnamefont {H.}~\bibnamefont {Neven}},\ }\bibfield  {title} {{\selectlanguage {en}\bibinfo {title} {Focus beyond quadratic speedups for error-corrected quantum advantage}},\ }\href {https://doi.org/10.1103/prxquantum.2.010103} {\bibfield  {journal} {\bibinfo  {journal} {PRX quantum}\ }\textbf {\bibinfo {volume} {2}},\ \bibinfo {pages} {010103} (\bibinfo {year} {2021})}\BibitemShut {NoStop}%
\bibitem [{\citenamefont {Beverland}\ \emph {et~al.}(2022)\citenamefont {Beverland}, \citenamefont {Murali}, \citenamefont {Troyer}, \citenamefont {Svore}, \citenamefont {Hoefler}, \citenamefont {Kliuchnikov}, \citenamefont {Low}, \citenamefont {Soeken}, \citenamefont {Sundaram},\ and\ \citenamefont {Vaschillo}}]{Beverland2022-dh}%
  \BibitemOpen
  \bibfield  {author} {\bibinfo {author} {\bibfnamefont {M.~E.}\ \bibnamefont {Beverland}}, \bibinfo {author} {\bibfnamefont {P.}~\bibnamefont {Murali}}, \bibinfo {author} {\bibfnamefont {M.}~\bibnamefont {Troyer}}, \bibinfo {author} {\bibfnamefont {K.~M.}\ \bibnamefont {Svore}}, \bibinfo {author} {\bibfnamefont {T.}~\bibnamefont {Hoefler}}, \bibinfo {author} {\bibfnamefont {V.}~\bibnamefont {Kliuchnikov}}, \bibinfo {author} {\bibfnamefont {G.~H.}\ \bibnamefont {Low}}, \bibinfo {author} {\bibfnamefont {M.}~\bibnamefont {Soeken}}, \bibinfo {author} {\bibfnamefont {A.}~\bibnamefont {Sundaram}},\ and\ \bibinfo {author} {\bibfnamefont {A.}~\bibnamefont {Vaschillo}},\ }\bibfield  {title} {\bibinfo {title} {Assessing requirements to scale to practical quantum advantage},\ }\Eprint {https://arxiv.org/abs/2211.07629} {arXiv:2211.07629 [quant-ph]}  (\bibinfo {year} {2022})\BibitemShut {NoStop}%
\bibitem [{\citenamefont {Babbush}\ \emph {et~al.}(2018)\citenamefont {Babbush}, \citenamefont {Gidney}, \citenamefont {Berry}, \citenamefont {Wiebe}, \citenamefont {McClean}, \citenamefont {Paler}, \citenamefont {Fowler},\ and\ \citenamefont {Neven}}]{Babbush2018-tb}%
  \BibitemOpen
  \bibfield  {author} {\bibinfo {author} {\bibfnamefont {R.}~\bibnamefont {Babbush}}, \bibinfo {author} {\bibfnamefont {C.}~\bibnamefont {Gidney}}, \bibinfo {author} {\bibfnamefont {D.~W.}\ \bibnamefont {Berry}}, \bibinfo {author} {\bibfnamefont {N.}~\bibnamefont {Wiebe}}, \bibinfo {author} {\bibfnamefont {J.}~\bibnamefont {McClean}}, \bibinfo {author} {\bibfnamefont {A.}~\bibnamefont {Paler}}, \bibinfo {author} {\bibfnamefont {A.}~\bibnamefont {Fowler}},\ and\ \bibinfo {author} {\bibfnamefont {H.}~\bibnamefont {Neven}},\ }\bibfield  {title} {\bibinfo {title} {Encoding electronic spectra in quantum circuits with linear {T} complexity},\ }\href {https://doi.org/10.1103/PhysRevX.8.041015} {\bibfield  {journal} {\bibinfo  {journal} {Phys. Rev. X}\ }\textbf {\bibinfo {volume} {8}},\ \bibinfo {pages} {041015} (\bibinfo {year} {2018})}\BibitemShut {NoStop}%
\bibitem [{\citenamefont {Low}\ \emph {et~al.}(2018)\citenamefont {Low}, \citenamefont {Kliuchnikov},\ and\ \citenamefont {Schaeffer}}]{Low2018-uu}%
  \BibitemOpen
  \bibfield  {author} {\bibinfo {author} {\bibfnamefont {G.~H.}\ \bibnamefont {Low}}, \bibinfo {author} {\bibfnamefont {V.}~\bibnamefont {Kliuchnikov}},\ and\ \bibinfo {author} {\bibfnamefont {L.}~\bibnamefont {Schaeffer}},\ }\bibfield  {title} {\bibinfo {title} {Trading {T} gates for dirty qubits in state preparation and unitary synthesis},\ }\Eprint {https://arxiv.org/abs/1812.00954} {arXiv:1812.00954 [quant-ph]}  (\bibinfo {year} {2018})\BibitemShut {NoStop}%
\bibitem [{\citenamefont {Berry}\ \emph {et~al.}(2019)\citenamefont {Berry}, \citenamefont {Gidney}, \citenamefont {Motta}, \citenamefont {McClean},\ and\ \citenamefont {Babbush}}]{Berry2019-qo}%
  \BibitemOpen
  \bibfield  {author} {\bibinfo {author} {\bibfnamefont {D.~W.}\ \bibnamefont {Berry}}, \bibinfo {author} {\bibfnamefont {C.}~\bibnamefont {Gidney}}, \bibinfo {author} {\bibfnamefont {M.}~\bibnamefont {Motta}}, \bibinfo {author} {\bibfnamefont {J.~R.}\ \bibnamefont {McClean}},\ and\ \bibinfo {author} {\bibfnamefont {R.}~\bibnamefont {Babbush}},\ }\bibfield  {title} {{\selectlanguage {en}\bibinfo {title} {Qubitization of arbitrary basis quantum chemistry leveraging sparsity and low rank factorization}},\ }\href {https://doi.org/10.22331/q-2019-12-02-208} {\bibfield  {journal} {\bibinfo  {journal} {Quantum}\ }\textbf {\bibinfo {volume} {3}},\ \bibinfo {pages} {208} (\bibinfo {year} {2019})}\BibitemShut {NoStop}%
\bibitem [{\citenamefont {Zhu}\ \emph {et~al.}(2024)\citenamefont {Zhu}, \citenamefont {Sundaram},\ and\ \citenamefont {Low}}]{Zhu2024-pe}%
  \BibitemOpen
  \bibfield  {author} {\bibinfo {author} {\bibfnamefont {S.}~\bibnamefont {Zhu}}, \bibinfo {author} {\bibfnamefont {A.}~\bibnamefont {Sundaram}},\ and\ \bibinfo {author} {\bibfnamefont {G.~H.}\ \bibnamefont {Low}},\ }\bibfield  {title} {\bibinfo {title} {Unified architecture for a quantum lookup table},\ }\Eprint {https://arxiv.org/abs/2406.18030} {arXiv:2406.18030 [quant-ph]}  (\bibinfo {year} {2024})\BibitemShut {NoStop}%
\bibitem [{\citenamefont {Giovannetti}\ \emph {et~al.}(2008)\citenamefont {Giovannetti}, \citenamefont {Lloyd},\ and\ \citenamefont {Maccone}}]{Giovannetti2008-un}%
  \BibitemOpen
  \bibfield  {author} {\bibinfo {author} {\bibfnamefont {V.}~\bibnamefont {Giovannetti}}, \bibinfo {author} {\bibfnamefont {S.}~\bibnamefont {Lloyd}},\ and\ \bibinfo {author} {\bibfnamefont {L.}~\bibnamefont {Maccone}},\ }\bibfield  {title} {\bibinfo {title} {Architectures for a quantum random access memory},\ }\Eprint {https://arxiv.org/abs/0807.4994} {arXiv:0807.4994 [quant-ph]}  (\bibinfo {year} {2008})\BibitemShut {NoStop}%
\bibitem [{\citenamefont {Giovannetti}\ \emph {et~al.}(2007)\citenamefont {Giovannetti}, \citenamefont {Lloyd},\ and\ \citenamefont {Maccone}}]{Giovannetti2007-cg}%
  \BibitemOpen
  \bibfield  {author} {\bibinfo {author} {\bibfnamefont {V.}~\bibnamefont {Giovannetti}}, \bibinfo {author} {\bibfnamefont {S.}~\bibnamefont {Lloyd}},\ and\ \bibinfo {author} {\bibfnamefont {L.}~\bibnamefont {Maccone}},\ }\bibfield  {title} {\bibinfo {title} {Quantum random access memory},\ }\Eprint {https://arxiv.org/abs/0708.1879} {arXiv:0708.1879 [quant-ph]}  (\bibinfo {year} {2007})\BibitemShut {NoStop}%
\bibitem [{\citenamefont {Arunachalam}\ \emph {et~al.}(2015)\citenamefont {Arunachalam}, \citenamefont {Gheorghiu}, \citenamefont {Jochym-O'Connor}, \citenamefont {Mosca},\ and\ \citenamefont {Srinivasan}}]{Arunachalam2015-vr}%
  \BibitemOpen
  \bibfield  {author} {\bibinfo {author} {\bibfnamefont {S.}~\bibnamefont {Arunachalam}}, \bibinfo {author} {\bibfnamefont {V.}~\bibnamefont {Gheorghiu}}, \bibinfo {author} {\bibfnamefont {T.}~\bibnamefont {Jochym-O'Connor}}, \bibinfo {author} {\bibfnamefont {M.}~\bibnamefont {Mosca}},\ and\ \bibinfo {author} {\bibfnamefont {P.~V.}\ \bibnamefont {Srinivasan}},\ }\bibfield  {title} {\bibinfo {title} {On the robustness of bucket brigade quantum {RAM}},\ }\href {https://doi.org/10.1088/1367-2630/17/12/123010} {\bibfield  {journal} {\bibinfo  {journal} {New J. Phys.}\ }\textbf {\bibinfo {volume} {17}},\ \bibinfo {pages} {123010} (\bibinfo {year} {2015})}\BibitemShut {NoStop}%
\bibitem [{\citenamefont {Jaques}\ and\ \citenamefont {Rattew}(2023)}]{Jaques2023-fo}%
  \BibitemOpen
  \bibfield  {author} {\bibinfo {author} {\bibfnamefont {S.}~\bibnamefont {Jaques}}\ and\ \bibinfo {author} {\bibfnamefont {A.~G.}\ \bibnamefont {Rattew}},\ }\bibfield  {title} {\bibinfo {title} {{QRAM}: A survey and critique},\ }\Eprint {https://arxiv.org/abs/2305.10310} {arXiv:2305.10310 [quant-ph]}  (\bibinfo {year} {2023})\BibitemShut {NoStop}%
\bibitem [{\citenamefont {Bravyi}\ and\ \citenamefont {Kitaev}(2005)}]{Bravyi2005-vi}%
  \BibitemOpen
  \bibfield  {author} {\bibinfo {author} {\bibfnamefont {S.}~\bibnamefont {Bravyi}}\ and\ \bibinfo {author} {\bibfnamefont {A.}~\bibnamefont {Kitaev}},\ }\bibfield  {title} {\bibinfo {title} {Universal quantum computation with ideal clifford gates and noisy ancillas},\ }\href {https://doi.org/10.1103/physreva.71.022316} {\bibfield  {journal} {\bibinfo  {journal} {Phys. Rev. A}\ }\textbf {\bibinfo {volume} {71}},\ \bibinfo {pages} {022316} (\bibinfo {year} {2005})}\BibitemShut {NoStop}%
\bibitem [{\citenamefont {Bravyi}\ and\ \citenamefont {Haah}(2012)}]{Bravyi2012-mg}%
  \BibitemOpen
  \bibfield  {author} {\bibinfo {author} {\bibfnamefont {S.}~\bibnamefont {Bravyi}}\ and\ \bibinfo {author} {\bibfnamefont {J.}~\bibnamefont {Haah}},\ }\bibfield  {title} {{\selectlanguage {en}\bibinfo {title} {Magic-state distillation with low overhead}},\ }\href {https://doi.org/10.1103/PhysRevA.86.052329} {\bibfield  {journal} {\bibinfo  {journal} {Phys. Rev. A}\ }\textbf {\bibinfo {volume} {86}},\ \bibinfo {pages} {052329} (\bibinfo {year} {2012})}\BibitemShut {NoStop}%
\bibitem [{\citenamefont {Fowler}\ and\ \citenamefont {Gidney}(2018)}]{Fowler2018-he}%
  \BibitemOpen
  \bibfield  {author} {\bibinfo {author} {\bibfnamefont {A.~G.}\ \bibnamefont {Fowler}}\ and\ \bibinfo {author} {\bibfnamefont {C.}~\bibnamefont {Gidney}},\ }\bibfield  {title} {\bibinfo {title} {Low overhead quantum computation using lattice surgery},\ }\Eprint {https://arxiv.org/abs/1808.06709} {arXiv:1808.06709 [quant-ph]}  (\bibinfo {year} {2018})\BibitemShut {NoStop}%
\bibitem [{\citenamefont {Litinski}(2019)}]{Litinski2019-ek}%
  \BibitemOpen
  \bibfield  {author} {\bibinfo {author} {\bibfnamefont {D.}~\bibnamefont {Litinski}},\ }\bibfield  {title} {{\selectlanguage {en}\bibinfo {title} {Magic state distillation: Not as costly as you think}},\ }\href {https://doi.org/10.22331/q-2019-12-02-205} {\bibfield  {journal} {\bibinfo  {journal} {Quantum}\ }\textbf {\bibinfo {volume} {3}},\ \bibinfo {pages} {205} (\bibinfo {year} {2019})}\BibitemShut {NoStop}%
\bibitem [{\citenamefont {Gidney}\ \emph {et~al.}(2024)\citenamefont {Gidney}, \citenamefont {Shutty},\ and\ \citenamefont {Jones}}]{gidney2024magic}%
  \BibitemOpen
  \bibfield  {author} {\bibinfo {author} {\bibfnamefont {C.}~\bibnamefont {Gidney}}, \bibinfo {author} {\bibfnamefont {N.}~\bibnamefont {Shutty}},\ and\ \bibinfo {author} {\bibfnamefont {C.}~\bibnamefont {Jones}},\ }\bibfield  {title} {\bibinfo {title} {Magic state cultivation: growing t states as cheap as cnot gates},\ }\href@noop {} {\bibfield  {journal} {\bibinfo  {journal} {arXiv preprint arXiv:2409.17595}\ } (\bibinfo {year} {2024})}\BibitemShut {NoStop}%
\bibitem [{\citenamefont {Harrigan}\ \emph {et~al.}(2024)\citenamefont {Harrigan}, \citenamefont {Khattar}, \citenamefont {Yuan}, \citenamefont {Peduri}, \citenamefont {Yosri}, \citenamefont {Malone}, \citenamefont {Babbush},\ and\ \citenamefont {Rubin}}]{Harrigan2024-rj}%
  \BibitemOpen
  \bibfield  {author} {\bibinfo {author} {\bibfnamefont {M.~P.}\ \bibnamefont {Harrigan}}, \bibinfo {author} {\bibfnamefont {T.}~\bibnamefont {Khattar}}, \bibinfo {author} {\bibfnamefont {C.}~\bibnamefont {Yuan}}, \bibinfo {author} {\bibfnamefont {A.}~\bibnamefont {Peduri}}, \bibinfo {author} {\bibfnamefont {N.}~\bibnamefont {Yosri}}, \bibinfo {author} {\bibfnamefont {F.~D.}\ \bibnamefont {Malone}}, \bibinfo {author} {\bibfnamefont {R.}~\bibnamefont {Babbush}},\ and\ \bibinfo {author} {\bibfnamefont {N.~C.}\ \bibnamefont {Rubin}},\ }\bibfield  {title} {\bibinfo {title} {Expressing and analyzing quantum algorithms with qualtran},\ }\Eprint {https://arxiv.org/abs/2409.04643} {arXiv:2409.04643 [quant-ph]}  (\bibinfo {year} {2024})\BibitemShut {NoStop}%
\bibitem [{\citenamefont {Craig}\ \emph {et~al.}(2024)\citenamefont {Craig}, \citenamefont {Noah},\ and\ \citenamefont {Cody}}]{Craig2024-kv}%
  \BibitemOpen
  \bibfield  {author} {\bibinfo {author} {\bibfnamefont {G.}~\bibnamefont {Craig}}, \bibinfo {author} {\bibfnamefont {S.}~\bibnamefont {Noah}},\ and\ \bibinfo {author} {\bibfnamefont {J.}~\bibnamefont {Cody}},\ }\bibfield  {title} {\bibinfo {title} {Magic state cultivation: growing {T} states as cheap as {CNOT} gates},\ }\Eprint {https://arxiv.org/abs/2409.17595} {arXiv:2409.17595 [quant-ph]}  (\bibinfo {year} {2024})\BibitemShut {NoStop}%
\bibitem [{\citenamefont {Harrow}\ \emph {et~al.}(2009)\citenamefont {Harrow}, \citenamefont {Hassidim},\ and\ \citenamefont {Lloyd}}]{Harrow2009-sm}%
  \BibitemOpen
  \bibfield  {author} {\bibinfo {author} {\bibfnamefont {A.~W.}\ \bibnamefont {Harrow}}, \bibinfo {author} {\bibfnamefont {A.}~\bibnamefont {Hassidim}},\ and\ \bibinfo {author} {\bibfnamefont {S.}~\bibnamefont {Lloyd}},\ }\bibfield  {title} {{\selectlanguage {en}\bibinfo {title} {Quantum algorithm for linear systems of equations}},\ }\href {https://doi.org/10.1103/PhysRevLett.103.150502} {\bibfield  {journal} {\bibinfo  {journal} {Phys. Rev. Lett.}\ }\textbf {\bibinfo {volume} {103}},\ \bibinfo {pages} {150502} (\bibinfo {year} {2009})}\BibitemShut {NoStop}%
\bibitem [{\citenamefont {Childs}\ \emph {et~al.}(2017)\citenamefont {Childs}, \citenamefont {Kothari},\ and\ \citenamefont {Somma}}]{Childs2017-is}%
  \BibitemOpen
  \bibfield  {author} {\bibinfo {author} {\bibfnamefont {A.~M.}\ \bibnamefont {Childs}}, \bibinfo {author} {\bibfnamefont {R.}~\bibnamefont {Kothari}},\ and\ \bibinfo {author} {\bibfnamefont {R.~D.}\ \bibnamefont {Somma}},\ }\bibfield  {title} {\bibinfo {title} {Quantum algorithm for systems of linear equations with exponentially improved dependence on precision},\ }\href {https://doi.org/10.1137/16M1087072} {\bibfield  {journal} {\bibinfo  {journal} {SIAM J. Comput.}\ }\textbf {\bibinfo {volume} {46}},\ \bibinfo {pages} {1920} (\bibinfo {year} {2017})}\BibitemShut {NoStop}%
\bibitem [{\citenamefont {Berry}(2014)}]{Berry2014-fz}%
  \BibitemOpen
  \bibfield  {author} {\bibinfo {author} {\bibfnamefont {D.~W.}\ \bibnamefont {Berry}},\ }\bibfield  {title} {{\selectlanguage {en}\bibinfo {title} {High-order quantum algorithm for solving linear differential equations}},\ }\href {https://doi.org/10.1088/1751-8113/47/10/105301} {\bibfield  {journal} {\bibinfo  {journal} {J. Phys. A Math. Theor.}\ }\textbf {\bibinfo {volume} {47}},\ \bibinfo {pages} {105301} (\bibinfo {year} {2014})}\BibitemShut {NoStop}%
\bibitem [{\citenamefont {Berry}\ \emph {et~al.}(2017)\citenamefont {Berry}, \citenamefont {Childs}, \citenamefont {Ostrander},\ and\ \citenamefont {Wang}}]{Berry2017-pd}%
  \BibitemOpen
  \bibfield  {author} {\bibinfo {author} {\bibfnamefont {D.~W.}\ \bibnamefont {Berry}}, \bibinfo {author} {\bibfnamefont {A.~M.}\ \bibnamefont {Childs}}, \bibinfo {author} {\bibfnamefont {A.}~\bibnamefont {Ostrander}},\ and\ \bibinfo {author} {\bibfnamefont {G.}~\bibnamefont {Wang}},\ }\bibfield  {title} {{\selectlanguage {en}\bibinfo {title} {Quantum algorithm for linear differential equations with exponentially improved dependence on precision}},\ }\href {https://doi.org/10.1007/s00220-017-3002-y} {\bibfield  {journal} {\bibinfo  {journal} {Commun. Math. Phys.}\ }\textbf {\bibinfo {volume} {356}},\ \bibinfo {pages} {1057} (\bibinfo {year} {2017})}\BibitemShut {NoStop}%
\bibitem [{\citenamefont {Liu}\ \emph {et~al.}(2021)\citenamefont {Liu}, \citenamefont {Kolden}, \citenamefont {Krovi}, \citenamefont {Loureiro}, \citenamefont {Trivisa},\ and\ \citenamefont {Childs}}]{Liu2021-ud}%
  \BibitemOpen
  \bibfield  {author} {\bibinfo {author} {\bibfnamefont {J.-P.}\ \bibnamefont {Liu}}, \bibinfo {author} {\bibfnamefont {H.~O.}\ \bibnamefont {Kolden}}, \bibinfo {author} {\bibfnamefont {H.~K.}\ \bibnamefont {Krovi}}, \bibinfo {author} {\bibfnamefont {N.~F.}\ \bibnamefont {Loureiro}}, \bibinfo {author} {\bibfnamefont {K.}~\bibnamefont {Trivisa}},\ and\ \bibinfo {author} {\bibfnamefont {A.~M.}\ \bibnamefont {Childs}},\ }\bibfield  {title} {{\selectlanguage {en}\bibinfo {title} {Efficient quantum algorithm for dissipative nonlinear differential equations}},\ }\href {https://doi.org/10.1073/pnas.2026805118} {\bibfield  {journal} {\bibinfo  {journal} {Proc. Natl. Acad. Sci. U. S. A.}\ }\textbf {\bibinfo {volume} {118}},\ \bibinfo {pages} {e2026805118} (\bibinfo {year} {2021})}\BibitemShut {NoStop}%
\bibitem [{\citenamefont {Lin}\ and\ \citenamefont {Tong}(2020)}]{Lin2020-qh}%
  \BibitemOpen
  \bibfield  {author} {\bibinfo {author} {\bibfnamefont {L.}~\bibnamefont {Lin}}\ and\ \bibinfo {author} {\bibfnamefont {Y.}~\bibnamefont {Tong}},\ }\bibfield  {title} {{\selectlanguage {en}\bibinfo {title} {Near-optimal ground state preparation}},\ }\href {https://doi.org/10.22331/q-2020-12-14-372} {\bibfield  {journal} {\bibinfo  {journal} {Quantum}\ }\textbf {\bibinfo {volume} {4}},\ \bibinfo {pages} {372} (\bibinfo {year} {2020})}\BibitemShut {NoStop}%
\bibitem [{\citenamefont {Lloyd}\ \emph {et~al.}(2016)\citenamefont {Lloyd}, \citenamefont {Garnerone},\ and\ \citenamefont {Zanardi}}]{Lloyd2016-po}%
  \BibitemOpen
  \bibfield  {author} {\bibinfo {author} {\bibfnamefont {S.}~\bibnamefont {Lloyd}}, \bibinfo {author} {\bibfnamefont {S.}~\bibnamefont {Garnerone}},\ and\ \bibinfo {author} {\bibfnamefont {P.}~\bibnamefont {Zanardi}},\ }\bibfield  {title} {{\selectlanguage {en}\bibinfo {title} {Quantum algorithms for topological and geometric analysis of data}},\ }\href {https://doi.org/10.1038/ncomms10138} {\bibfield  {journal} {\bibinfo  {journal} {Nat. Commun.}\ }\textbf {\bibinfo {volume} {7}},\ \bibinfo {pages} {10138} (\bibinfo {year} {2016})}\BibitemShut {NoStop}%
\bibitem [{\citenamefont {Liu}\ \emph {et~al.}(2024)\citenamefont {Liu}, \citenamefont {Liu}, \citenamefont {Liu}, \citenamefont {Ye}, \citenamefont {Wang}, \citenamefont {Alexeev}, \citenamefont {Eisert},\ and\ \citenamefont {Jiang}}]{Liu2024-dv}%
  \BibitemOpen
  \bibfield  {author} {\bibinfo {author} {\bibfnamefont {J.}~\bibnamefont {Liu}}, \bibinfo {author} {\bibfnamefont {M.}~\bibnamefont {Liu}}, \bibinfo {author} {\bibfnamefont {J.-P.}\ \bibnamefont {Liu}}, \bibinfo {author} {\bibfnamefont {Z.}~\bibnamefont {Ye}}, \bibinfo {author} {\bibfnamefont {Y.}~\bibnamefont {Wang}}, \bibinfo {author} {\bibfnamefont {Y.}~\bibnamefont {Alexeev}}, \bibinfo {author} {\bibfnamefont {J.}~\bibnamefont {Eisert}},\ and\ \bibinfo {author} {\bibfnamefont {L.}~\bibnamefont {Jiang}},\ }\bibfield  {title} {{\selectlanguage {en}\bibinfo {title} {Towards provably efficient quantum algorithms for large-scale machine-learning models}},\ }\href {https://doi.org/10.1038/s41467-023-43957-x} {\bibfield  {journal} {\bibinfo  {journal} {Nat. Commun.}\ }\textbf {\bibinfo {volume} {15}},\ \bibinfo {pages} {434} (\bibinfo {year} {2024})}\BibitemShut {NoStop}%
\bibitem [{\citenamefont {Su}\ \emph {et~al.}(2021)\citenamefont {Su}, \citenamefont {Berry}, \citenamefont {Wiebe}, \citenamefont {Rubin},\ and\ \citenamefont {Babbush}}]{su2021fault}%
  \BibitemOpen
  \bibfield  {author} {\bibinfo {author} {\bibfnamefont {Y.}~\bibnamefont {Su}}, \bibinfo {author} {\bibfnamefont {D.~W.}\ \bibnamefont {Berry}}, \bibinfo {author} {\bibfnamefont {N.}~\bibnamefont {Wiebe}}, \bibinfo {author} {\bibfnamefont {N.}~\bibnamefont {Rubin}},\ and\ \bibinfo {author} {\bibfnamefont {R.}~\bibnamefont {Babbush}},\ }\bibfield  {title} {\bibinfo {title} {Fault-tolerant quantum simulations of chemistry in first quantization},\ }\href@noop {} {\bibfield  {journal} {\bibinfo  {journal} {PRX Quantum}\ }\textbf {\bibinfo {volume} {2}},\ \bibinfo {pages} {040332} (\bibinfo {year} {2021})}\BibitemShut {NoStop}%
\bibitem [{\citenamefont {Huggins}\ \emph {et~al.}(2024)\citenamefont {Huggins}, \citenamefont {Leimkuhler}, \citenamefont {Stetina},\ and\ \citenamefont {Whaley}}]{huggins2024efficient}%
  \BibitemOpen
  \bibfield  {author} {\bibinfo {author} {\bibfnamefont {W.~J.}\ \bibnamefont {Huggins}}, \bibinfo {author} {\bibfnamefont {O.}~\bibnamefont {Leimkuhler}}, \bibinfo {author} {\bibfnamefont {T.~F.}\ \bibnamefont {Stetina}},\ and\ \bibinfo {author} {\bibfnamefont {K.~B.}\ \bibnamefont {Whaley}},\ }\bibfield  {title} {\bibinfo {title} {Efficient state preparation for the quantum simulation of molecules in first quantization},\ }\href@noop {} {\bibfield  {journal} {\bibinfo  {journal} {arXiv preprint arXiv:2407.00249}\ } (\bibinfo {year} {2024})}\BibitemShut {NoStop}%
\bibitem [{\citenamefont {Berry}\ \emph {et~al.}(2015)\citenamefont {Berry}, \citenamefont {Childs}, \citenamefont {Cleve}, \citenamefont {Kothari},\ and\ \citenamefont {Somma}}]{Berry2015-ly}%
  \BibitemOpen
  \bibfield  {author} {\bibinfo {author} {\bibfnamefont {D.~W.}\ \bibnamefont {Berry}}, \bibinfo {author} {\bibfnamefont {A.~M.}\ \bibnamefont {Childs}}, \bibinfo {author} {\bibfnamefont {R.}~\bibnamefont {Cleve}}, \bibinfo {author} {\bibfnamefont {R.}~\bibnamefont {Kothari}},\ and\ \bibinfo {author} {\bibfnamefont {R.~D.}\ \bibnamefont {Somma}},\ }\bibfield  {title} {{\selectlanguage {en}\bibinfo {title} {Simulating hamiltonian dynamics with a truncated taylor series}},\ }\href {https://doi.org/10.1103/PhysRevLett.114.090502} {\bibfield  {journal} {\bibinfo  {journal} {Phys. Rev. Lett.}\ }\textbf {\bibinfo {volume} {114}},\ \bibinfo {pages} {090502} (\bibinfo {year} {2015})}\BibitemShut {NoStop}%
\bibitem [{\citenamefont {Gilboa}\ \emph {et~al.}(2023)\citenamefont {Gilboa}, \citenamefont {Michaeli}, \citenamefont {Soudry},\ and\ \citenamefont {McClean}}]{Gilboa2023-db}%
  \BibitemOpen
  \bibfield  {author} {\bibinfo {author} {\bibfnamefont {D.}~\bibnamefont {Gilboa}}, \bibinfo {author} {\bibfnamefont {H.}~\bibnamefont {Michaeli}}, \bibinfo {author} {\bibfnamefont {D.}~\bibnamefont {Soudry}},\ and\ \bibinfo {author} {\bibfnamefont {J.}~\bibnamefont {McClean}},\ }\bibfield  {title} {\bibinfo {title} {Exponential quantum communication advantage in distributed inference and learning},\ }\href {https://proceedings.neurips.cc/paper_files/paper/2024/hash/3639efedc51a522595372f76b91cbb25-Abstract-Conference.html} {\bibfield  {journal} {\bibinfo  {journal} {Neural Inf Process Syst}\ }\textbf {\bibinfo {volume} {37}},\ \bibinfo {pages} {30425} (\bibinfo {year} {2023})},\ \Eprint {https://arxiv.org/abs/2310.07136} {2310.07136} \BibitemShut {NoStop}%
\bibitem [{\citenamefont {Huggins}\ and\ \citenamefont {McClean}(2023)}]{Huggins2023-nx}%
  \BibitemOpen
  \bibfield  {author} {\bibinfo {author} {\bibfnamefont {W.~J.}\ \bibnamefont {Huggins}}\ and\ \bibinfo {author} {\bibfnamefont {J.~R.}\ \bibnamefont {McClean}},\ }\bibfield  {title} {\bibinfo {title} {Accelerating quantum algorithms with precomputation},\ }\Eprint {https://arxiv.org/abs/2305.09638} {arXiv:2305.09638 [quant-ph]}  (\bibinfo {year} {2023})\BibitemShut {NoStop}%
\bibitem [{\citenamefont {Childs}\ \emph {et~al.}(2018)\citenamefont {Childs}, \citenamefont {Maslov}, \citenamefont {Nam}, \citenamefont {Ross},\ and\ \citenamefont {Su}}]{Childs2018-fq}%
  \BibitemOpen
  \bibfield  {author} {\bibinfo {author} {\bibfnamefont {A.~M.}\ \bibnamefont {Childs}}, \bibinfo {author} {\bibfnamefont {D.}~\bibnamefont {Maslov}}, \bibinfo {author} {\bibfnamefont {Y.}~\bibnamefont {Nam}}, \bibinfo {author} {\bibfnamefont {N.~J.}\ \bibnamefont {Ross}},\ and\ \bibinfo {author} {\bibfnamefont {Y.}~\bibnamefont {Su}},\ }\bibfield  {title} {{\selectlanguage {en}\bibinfo {title} {Toward the first quantum simulation with quantum speedup}},\ }\href {https://doi.org/10.1073/pnas.1801723115} {\bibfield  {journal} {\bibinfo  {journal} {Proc. Natl. Acad. Sci. U. S. A.}\ }\textbf {\bibinfo {volume} {115}},\ \bibinfo {pages} {9456} (\bibinfo {year} {2018})}\BibitemShut {NoStop}%
\bibitem [{\citenamefont {Reiher}\ \emph {et~al.}(2017)\citenamefont {Reiher}, \citenamefont {Wiebe}, \citenamefont {Svore}, \citenamefont {Wecker},\ and\ \citenamefont {Troyer}}]{reiher2017elucidating}%
  \BibitemOpen
  \bibfield  {author} {\bibinfo {author} {\bibfnamefont {M.}~\bibnamefont {Reiher}}, \bibinfo {author} {\bibfnamefont {N.}~\bibnamefont {Wiebe}}, \bibinfo {author} {\bibfnamefont {K.~M.}\ \bibnamefont {Svore}}, \bibinfo {author} {\bibfnamefont {D.}~\bibnamefont {Wecker}},\ and\ \bibinfo {author} {\bibfnamefont {M.}~\bibnamefont {Troyer}},\ }\bibfield  {title} {\bibinfo {title} {Elucidating reaction mechanisms on quantum computers},\ }\href@noop {} {\bibfield  {journal} {\bibinfo  {journal} {Proceedings of the national academy of sciences}\ }\textbf {\bibinfo {volume} {114}},\ \bibinfo {pages} {7555} (\bibinfo {year} {2017})}\BibitemShut {NoStop}%
\bibitem [{\citenamefont {H{\"a}ner}\ \emph {et~al.}(2016)\citenamefont {H{\"a}ner}, \citenamefont {Roetteler},\ and\ \citenamefont {Svore}}]{haner2016factoring}%
  \BibitemOpen
  \bibfield  {author} {\bibinfo {author} {\bibfnamefont {T.}~\bibnamefont {H{\"a}ner}}, \bibinfo {author} {\bibfnamefont {M.}~\bibnamefont {Roetteler}},\ and\ \bibinfo {author} {\bibfnamefont {K.~M.}\ \bibnamefont {Svore}},\ }\bibfield  {title} {\bibinfo {title} {Factoring using 2n+ 2 qubits with toffoli based modular multiplication},\ }\href@noop {} {\bibfield  {journal} {\bibinfo  {journal} {arXiv preprint arXiv:1611.07995}\ } (\bibinfo {year} {2016})}\BibitemShut {NoStop}%
\bibitem [{\citenamefont {Sanders}\ \emph {et~al.}(2020)\citenamefont {Sanders}, \citenamefont {Berry}, \citenamefont {Costa}, \citenamefont {Tessler}, \citenamefont {Wiebe}, \citenamefont {Gidney}, \citenamefont {Neven},\ and\ \citenamefont {Babbush}}]{sanders2020compilation}%
  \BibitemOpen
  \bibfield  {author} {\bibinfo {author} {\bibfnamefont {Y.~R.}\ \bibnamefont {Sanders}}, \bibinfo {author} {\bibfnamefont {D.~W.}\ \bibnamefont {Berry}}, \bibinfo {author} {\bibfnamefont {P.~C.}\ \bibnamefont {Costa}}, \bibinfo {author} {\bibfnamefont {L.~W.}\ \bibnamefont {Tessler}}, \bibinfo {author} {\bibfnamefont {N.}~\bibnamefont {Wiebe}}, \bibinfo {author} {\bibfnamefont {C.}~\bibnamefont {Gidney}}, \bibinfo {author} {\bibfnamefont {H.}~\bibnamefont {Neven}},\ and\ \bibinfo {author} {\bibfnamefont {R.}~\bibnamefont {Babbush}},\ }\bibfield  {title} {\bibinfo {title} {Compilation of fault-tolerant quantum heuristics for combinatorial optimization},\ }\href@noop {} {\bibfield  {journal} {\bibinfo  {journal} {PRX quantum}\ }\textbf {\bibinfo {volume} {1}},\ \bibinfo {pages} {020312} (\bibinfo {year} {2020})}\BibitemShut {NoStop}%
\bibitem [{\citenamefont {O'Gorman}\ and\ \citenamefont {Campbell}(2017)}]{O-Gorman2017-ul}%
  \BibitemOpen
  \bibfield  {author} {\bibinfo {author} {\bibfnamefont {J.}~\bibnamefont {O'Gorman}}\ and\ \bibinfo {author} {\bibfnamefont {E.~T.}\ \bibnamefont {Campbell}},\ }\bibfield  {title} {\bibinfo {title} {Quantum computation with realistic magic-state factories},\ }\bibfield  {journal} {\bibinfo  {journal} {Physical Review A}\ }\href {https://doi.org/10.1103/PhysRevA.95.032338} {10.1103/PhysRevA.95.032338} (\bibinfo {year} {2017})\BibitemShut {NoStop}%
\bibitem [{\citenamefont {Gidney}\ and\ \citenamefont {Fowler}(2019)}]{Gidney2019-mv}%
  \BibitemOpen
  \bibfield  {author} {\bibinfo {author} {\bibfnamefont {C.}~\bibnamefont {Gidney}}\ and\ \bibinfo {author} {\bibfnamefont {A.~G.}\ \bibnamefont {Fowler}},\ }\bibfield  {title} {\bibinfo {title} {Efficient magic state factories with a catalyzed $|ccz\rangle$ to $2|t\rangle$ transformation},\ }\href {https://doi.org/10.22331/q-2019-04-30-135} {\bibfield  {journal} {\bibinfo  {journal} {Quantum}\ }\textbf {\bibinfo {volume} {3}},\ \bibinfo {pages} {135} (\bibinfo {year} {2019})},\ \Eprint {https://arxiv.org/abs/1812.01238v3} {1812.01238v3} \BibitemShut {NoStop}%
\bibitem [{\citenamefont {Evered}\ \emph {et~al.}(2023)\citenamefont {Evered}, \citenamefont {Bluvstein}, \citenamefont {Kalinowski}, \citenamefont {Ebadi}, \citenamefont {Manovitz}, \citenamefont {Zhou}, \citenamefont {Li}, \citenamefont {Geim}, \citenamefont {Wang}, \citenamefont {Maskara}, \citenamefont {Levine}, \citenamefont {Semeghini}, \citenamefont {Greiner}, \citenamefont {Vuletic},\ and\ \citenamefont {Lukin}}]{Evered2023-bz}%
  \BibitemOpen
  \bibfield  {author} {\bibinfo {author} {\bibfnamefont {S.~J.}\ \bibnamefont {Evered}}, \bibinfo {author} {\bibfnamefont {D.}~\bibnamefont {Bluvstein}}, \bibinfo {author} {\bibfnamefont {M.}~\bibnamefont {Kalinowski}}, \bibinfo {author} {\bibfnamefont {S.}~\bibnamefont {Ebadi}}, \bibinfo {author} {\bibfnamefont {T.}~\bibnamefont {Manovitz}}, \bibinfo {author} {\bibfnamefont {H.}~\bibnamefont {Zhou}}, \bibinfo {author} {\bibfnamefont {S.~H.}\ \bibnamefont {Li}}, \bibinfo {author} {\bibfnamefont {A.~A.}\ \bibnamefont {Geim}}, \bibinfo {author} {\bibfnamefont {T.~T.}\ \bibnamefont {Wang}}, \bibinfo {author} {\bibfnamefont {N.}~\bibnamefont {Maskara}}, \bibinfo {author} {\bibfnamefont {H.}~\bibnamefont {Levine}}, \bibinfo {author} {\bibfnamefont {G.}~\bibnamefont {Semeghini}}, \bibinfo {author} {\bibfnamefont {M.}~\bibnamefont {Greiner}}, \bibinfo {author} {\bibfnamefont {V.}~\bibnamefont {Vuletic}},\ and\ \bibinfo {author} {\bibfnamefont {M.~D.}\ \bibnamefont {Lukin}},\ }\bibfield  {title} {\bibinfo {title} {High-fidelity parallel entangling gates on a neutral atom quantum computer},\ }\Eprint {https://arxiv.org/abs/2304.05420} {arXiv:2304.05420 [quant-ph]}  (\bibinfo {year} {2023})\BibitemShut {NoStop}%
\bibitem [{\citenamefont {Xu}\ \emph {et~al.}(2024)\citenamefont {Xu}, \citenamefont {Bonilla~Ataides}, \citenamefont {Pattison}, \citenamefont {Raveendran}, \citenamefont {Bluvstein}, \citenamefont {Wurtz}, \citenamefont {Vasić}, \citenamefont {Lukin}, \citenamefont {Jiang},\ and\ \citenamefont {Zhou}}]{Xu2024-zd}%
  \BibitemOpen
  \bibfield  {author} {\bibinfo {author} {\bibfnamefont {Q.}~\bibnamefont {Xu}}, \bibinfo {author} {\bibfnamefont {J.~P.}\ \bibnamefont {Bonilla~Ataides}}, \bibinfo {author} {\bibfnamefont {C.~A.}\ \bibnamefont {Pattison}}, \bibinfo {author} {\bibfnamefont {N.}~\bibnamefont {Raveendran}}, \bibinfo {author} {\bibfnamefont {D.}~\bibnamefont {Bluvstein}}, \bibinfo {author} {\bibfnamefont {J.}~\bibnamefont {Wurtz}}, \bibinfo {author} {\bibfnamefont {B.}~\bibnamefont {Vasić}}, \bibinfo {author} {\bibfnamefont {M.~D.}\ \bibnamefont {Lukin}}, \bibinfo {author} {\bibfnamefont {L.}~\bibnamefont {Jiang}},\ and\ \bibinfo {author} {\bibfnamefont {H.}~\bibnamefont {Zhou}},\ }\bibfield  {title} {{\selectlanguage {en}\bibinfo {title} {Constant-overhead fault-tolerant quantum computation with reconfigurable atom arrays}},\ }\href {https://doi.org/10.1038/s41567-024-02479-z} {\bibfield  {journal} {\bibinfo  {journal} {Nat. Phys.}\ }\textbf {\bibinfo {volume} {20}},\ \bibinfo {pages} {1084} (\bibinfo {year} {2024})}\BibitemShut {NoStop}%
\bibitem [{\citenamefont {Grover}\ and\ \citenamefont {Rudolph}(2002)}]{Grover2002-st}%
  \BibitemOpen
  \bibfield  {author} {\bibinfo {author} {\bibfnamefont {L.}~\bibnamefont {Grover}}\ and\ \bibinfo {author} {\bibfnamefont {T.}~\bibnamefont {Rudolph}},\ }\bibfield  {title} {\bibinfo {title} {Creating superpositions that correspond to efficiently integrable probability distributions},\ }\Eprint {https://arxiv.org/abs/quant-ph/0208112} {arXiv:quant-ph/0208112 [quant-ph]}  (\bibinfo {year} {2002})\BibitemShut {NoStop}%
\bibitem [{\citenamefont {Gosset}\ \emph {et~al.}(2024)\citenamefont {Gosset}, \citenamefont {Kothari},\ and\ \citenamefont {Wu}}]{Gosset2024-xi}%
  \BibitemOpen
  \bibfield  {author} {\bibinfo {author} {\bibfnamefont {D.}~\bibnamefont {Gosset}}, \bibinfo {author} {\bibfnamefont {R.}~\bibnamefont {Kothari}},\ and\ \bibinfo {author} {\bibfnamefont {K.}~\bibnamefont {Wu}},\ }\bibfield  {title} {\bibinfo {title} {Quantum state preparation with optimal {T}-count},\ }\Eprint {https://arxiv.org/abs/2411.04790} {arXiv:2411.04790 [quant-ph]}  (\bibinfo {year} {2024})\BibitemShut {NoStop}%
\bibitem [{\citenamefont {Ross}\ and\ \citenamefont {Selinger}(2014)}]{ross2014optimal}%
  \BibitemOpen
  \bibfield  {author} {\bibinfo {author} {\bibfnamefont {N.~J.}\ \bibnamefont {Ross}}\ and\ \bibinfo {author} {\bibfnamefont {P.}~\bibnamefont {Selinger}},\ }\bibfield  {title} {\bibinfo {title} {Optimal ancilla-free clifford+ t approximation of z-rotations},\ }\href@noop {} {\bibfield  {journal} {\bibinfo  {journal} {arXiv preprint arXiv:1403.2975}\ } (\bibinfo {year} {2014})}\BibitemShut {NoStop}%
\bibitem [{\citenamefont {Gilyén}\ \emph {et~al.}(2019)\citenamefont {Gilyén}, \citenamefont {Su}, \citenamefont {Low},\ and\ \citenamefont {Wiebe}}]{Gilyen2019-jt}%
  \BibitemOpen
  \bibfield  {author} {\bibinfo {author} {\bibfnamefont {A.}~\bibnamefont {Gilyén}}, \bibinfo {author} {\bibfnamefont {Y.}~\bibnamefont {Su}}, \bibinfo {author} {\bibfnamefont {G.~H.}\ \bibnamefont {Low}},\ and\ \bibinfo {author} {\bibfnamefont {N.}~\bibnamefont {Wiebe}},\ }\bibfield  {title} {\bibinfo {title} {Quantum singular value transformation and beyond: exponential improvements for quantum matrix arithmetics},\ }in\ \href {https://doi.org/10.1145/3313276.3316366} {\emph {\bibinfo {booktitle} {Proceedings of the 51st Annual ACM SIGACT Symposium on Theory of Computing}}},\ \bibinfo {series and number} {STOC 2019}\ (\bibinfo  {publisher} {ACM},\ \bibinfo {address} {New York, NY, USA},\ \bibinfo {year} {2019})\ pp.\ \bibinfo {pages} {193--204}\BibitemShut {NoStop}%
\bibitem [{\citenamefont {Low}\ and\ \citenamefont {Chuang}(2019)}]{Low2019-oa}%
  \BibitemOpen
  \bibfield  {author} {\bibinfo {author} {\bibfnamefont {G.~H.}\ \bibnamefont {Low}}\ and\ \bibinfo {author} {\bibfnamefont {I.~L.}\ \bibnamefont {Chuang}},\ }\bibfield  {title} {{\selectlanguage {en}\bibinfo {title} {Hamiltonian simulation by qubitization}},\ }\href {https://doi.org/10.22331/q-2019-07-12-163} {\bibfield  {journal} {\bibinfo  {journal} {Quantum}\ }\textbf {\bibinfo {volume} {3}},\ \bibinfo {pages} {163} (\bibinfo {year} {2019})}\BibitemShut {NoStop}%
\bibitem [{\citenamefont {Martyn}\ \emph {et~al.}(2021)\citenamefont {Martyn}, \citenamefont {Rossi}, \citenamefont {Tan},\ and\ \citenamefont {Chuang}}]{Martyn2021-mf}%
  \BibitemOpen
  \bibfield  {author} {\bibinfo {author} {\bibfnamefont {J.~M.}\ \bibnamefont {Martyn}}, \bibinfo {author} {\bibfnamefont {Z.~M.}\ \bibnamefont {Rossi}}, \bibinfo {author} {\bibfnamefont {A.~K.}\ \bibnamefont {Tan}},\ and\ \bibinfo {author} {\bibfnamefont {I.~L.}\ \bibnamefont {Chuang}},\ }\bibfield  {title} {{\selectlanguage {en}\bibinfo {title} {Grand unification of quantum algorithms}},\ }\href {https://doi.org/10.1103/prxquantum.2.040203} {\bibfield  {journal} {\bibinfo  {journal} {PRX quantum}\ }\textbf {\bibinfo {volume} {2}},\ \bibinfo {pages} {040203} (\bibinfo {year} {2021})}\BibitemShut {NoStop}%
\bibitem [{\citenamefont {McClean}\ \emph {et~al.}(2014)\citenamefont {McClean}, \citenamefont {Babbush}, \citenamefont {Love},\ and\ \citenamefont {Aspuru-Guzik}}]{McClean2014-ll}%
  \BibitemOpen
  \bibfield  {author} {\bibinfo {author} {\bibfnamefont {J.~R.}\ \bibnamefont {McClean}}, \bibinfo {author} {\bibfnamefont {R.}~\bibnamefont {Babbush}}, \bibinfo {author} {\bibfnamefont {P.~J.}\ \bibnamefont {Love}},\ and\ \bibinfo {author} {\bibfnamefont {A.}~\bibnamefont {Aspuru-Guzik}},\ }\bibfield  {title} {{\selectlanguage {en}\bibinfo {title} {Exploiting locality in quantum computation for quantum chemistry}},\ }\Eprint {https://arxiv.org/abs/1407.7863} {arXiv:1407.7863 [quant-ph]}  (\bibinfo {year} {2014})\BibitemShut {NoStop}%
\bibitem [{\citenamefont {Svore}\ \emph {et~al.}(2013)\citenamefont {Svore}, \citenamefont {Hastings},\ and\ \citenamefont {Freedman}}]{svore2013faster}%
  \BibitemOpen
  \bibfield  {author} {\bibinfo {author} {\bibfnamefont {K.~M.}\ \bibnamefont {Svore}}, \bibinfo {author} {\bibfnamefont {M.~B.}\ \bibnamefont {Hastings}},\ and\ \bibinfo {author} {\bibfnamefont {M.}~\bibnamefont {Freedman}},\ }\bibfield  {title} {\bibinfo {title} {Faster phase estimation},\ }\href@noop {} {\bibfield  {journal} {\bibinfo  {journal} {arXiv preprint arXiv:1304.0741}\ } (\bibinfo {year} {2013})}\BibitemShut {NoStop}%
\bibitem [{\citenamefont {Perez-Garcia}\ \emph {et~al.}(2007)\citenamefont {Perez-Garcia}, \citenamefont {Verstraete}, \citenamefont {Wolf},\ and\ \citenamefont {Cirac}}]{Perez-Garcia2007-wt}%
  \BibitemOpen
  \bibfield  {author} {\bibinfo {author} {\bibfnamefont {D.}~\bibnamefont {Perez-Garcia}}, \bibinfo {author} {\bibfnamefont {F.}~\bibnamefont {Verstraete}}, \bibinfo {author} {\bibfnamefont {M.~M.}\ \bibnamefont {Wolf}},\ and\ \bibinfo {author} {\bibfnamefont {J.~I.}\ \bibnamefont {Cirac}},\ }\bibfield  {title} {\bibinfo {title} {Matrix product state representations},\ }\href {https://doi.org/10.26421/qic7.5-6-1} {\bibfield  {journal} {\bibinfo  {journal} {Quantum Inf. Comput.}\ }\textbf {\bibinfo {volume} {7}},\ \bibinfo {pages} {401} (\bibinfo {year} {2007})}\BibitemShut {NoStop}%
\bibitem [{\citenamefont {Oseledets}(2011)}]{Oseledets2011-xm}%
  \BibitemOpen
  \bibfield  {author} {\bibinfo {author} {\bibfnamefont {I.~V.}\ \bibnamefont {Oseledets}},\ }\bibfield  {title} {\bibinfo {title} {Tensor-train decomposition},\ }\href {https://doi.org/10.1137/090752286} {\bibfield  {journal} {\bibinfo  {journal} {SIAM J. Sci. Comput.}\ }\textbf {\bibinfo {volume} {33}},\ \bibinfo {pages} {2295} (\bibinfo {year} {2011})}\BibitemShut {NoStop}%
\bibitem [{\citenamefont {Fomichev}\ \emph {et~al.}(2024)\citenamefont {Fomichev}, \citenamefont {Hejazi}, \citenamefont {Zini}, \citenamefont {Kiser}, \citenamefont {Fraxanet}, \citenamefont {Casares}, \citenamefont {Delgado}, \citenamefont {Huh}, \citenamefont {Voigt}, \citenamefont {Mueller} \emph {et~al.}}]{fomichev2024initial}%
  \BibitemOpen
  \bibfield  {author} {\bibinfo {author} {\bibfnamefont {S.}~\bibnamefont {Fomichev}}, \bibinfo {author} {\bibfnamefont {K.}~\bibnamefont {Hejazi}}, \bibinfo {author} {\bibfnamefont {M.~S.}\ \bibnamefont {Zini}}, \bibinfo {author} {\bibfnamefont {M.}~\bibnamefont {Kiser}}, \bibinfo {author} {\bibfnamefont {J.}~\bibnamefont {Fraxanet}}, \bibinfo {author} {\bibfnamefont {P.~A.~M.}\ \bibnamefont {Casares}}, \bibinfo {author} {\bibfnamefont {A.}~\bibnamefont {Delgado}}, \bibinfo {author} {\bibfnamefont {J.}~\bibnamefont {Huh}}, \bibinfo {author} {\bibfnamefont {A.-C.}\ \bibnamefont {Voigt}}, \bibinfo {author} {\bibfnamefont {J.~E.}\ \bibnamefont {Mueller}}, \emph {et~al.},\ }\bibfield  {title} {\bibinfo {title} {Initial state preparation for quantum chemistry on quantum computers},\ }\href@noop {} {\bibfield  {journal} {\bibinfo  {journal} {PRX Quantum}\ }\textbf {\bibinfo {volume} {5}},\ \bibinfo {pages} {040339} (\bibinfo {year} {2024})}\BibitemShut {NoStop}%
\bibitem [{\citenamefont {Berry}\ \emph {et~al.}(2024)\citenamefont {Berry}, \citenamefont {Tong}, \citenamefont {Khattar}, \citenamefont {White}, \citenamefont {Kim}, \citenamefont {Boixo}, \citenamefont {Lin}, \citenamefont {Lee}, \citenamefont {Chan}, \citenamefont {Babbush},\ and\ \citenamefont {Rubin}}]{Berry2024-qe}%
  \BibitemOpen
  \bibfield  {author} {\bibinfo {author} {\bibfnamefont {D.~W.}\ \bibnamefont {Berry}}, \bibinfo {author} {\bibfnamefont {Y.}~\bibnamefont {Tong}}, \bibinfo {author} {\bibfnamefont {T.}~\bibnamefont {Khattar}}, \bibinfo {author} {\bibfnamefont {A.}~\bibnamefont {White}}, \bibinfo {author} {\bibfnamefont {T.~I.}\ \bibnamefont {Kim}}, \bibinfo {author} {\bibfnamefont {S.}~\bibnamefont {Boixo}}, \bibinfo {author} {\bibfnamefont {L.}~\bibnamefont {Lin}}, \bibinfo {author} {\bibfnamefont {S.}~\bibnamefont {Lee}}, \bibinfo {author} {\bibfnamefont {G.~K.-L.}\ \bibnamefont {Chan}}, \bibinfo {author} {\bibfnamefont {R.}~\bibnamefont {Babbush}},\ and\ \bibinfo {author} {\bibfnamefont {N.~C.}\ \bibnamefont {Rubin}},\ }\bibfield  {title} {\bibinfo {title} {Rapid initial state preparation for the quantum simulation of strongly correlated molecules},\ }\Eprint {https://arxiv.org/abs/2409.11748} {arXiv:2409.11748 [quant-ph]}  (\bibinfo {year} {2024})\BibitemShut {NoStop}%
\bibitem [{\citenamefont {Brassard}\ \emph {et~al.}(2000)\citenamefont {Brassard}, \citenamefont {Hoyer}, \citenamefont {Mosca},\ and\ \citenamefont {Tapp}}]{brassard2000quantum}%
  \BibitemOpen
  \bibfield  {author} {\bibinfo {author} {\bibfnamefont {G.}~\bibnamefont {Brassard}}, \bibinfo {author} {\bibfnamefont {P.}~\bibnamefont {Hoyer}}, \bibinfo {author} {\bibfnamefont {M.}~\bibnamefont {Mosca}},\ and\ \bibinfo {author} {\bibfnamefont {A.}~\bibnamefont {Tapp}},\ }\bibfield  {title} {\bibinfo {title} {Quantum amplitude amplification and estimation},\ }\href@noop {} {\bibfield  {journal} {\bibinfo  {journal} {arXiv preprint quant-ph/0005055}\ } (\bibinfo {year} {2000})}\BibitemShut {NoStop}%
\bibitem [{\citenamefont {Zalka}(1999)}]{Zalka1999-no}%
  \BibitemOpen
  \bibfield  {author} {\bibinfo {author} {\bibfnamefont {C.}~\bibnamefont {Zalka}},\ }\bibfield  {title} {\bibinfo {title} {Grover's quantum searching algorithm is optimal},\ }\href {https://doi.org/10.1103/PhysRevA.60.2746} {\bibfield  {journal} {\bibinfo  {journal} {Phys. Rev. A}\ }\textbf {\bibinfo {volume} {60}},\ \bibinfo {pages} {2746} (\bibinfo {year} {1999})}\BibitemShut {NoStop}%
\bibitem [{\citenamefont {Yoder}\ \emph {et~al.}(2014)\citenamefont {Yoder}, \citenamefont {Low},\ and\ \citenamefont {Chuang}}]{Yoder2014-ek}%
  \BibitemOpen
  \bibfield  {author} {\bibinfo {author} {\bibfnamefont {T.~J.}\ \bibnamefont {Yoder}}, \bibinfo {author} {\bibfnamefont {G.~H.}\ \bibnamefont {Low}},\ and\ \bibinfo {author} {\bibfnamefont {I.~L.}\ \bibnamefont {Chuang}},\ }\bibfield  {title} {{\selectlanguage {en}\bibinfo {title} {Fixed-point quantum search with an optimal number of queries}},\ }\href {https://doi.org/10.1103/PhysRevLett.113.210501} {\bibfield  {journal} {\bibinfo  {journal} {Phys. Rev. Lett.}\ }\textbf {\bibinfo {volume} {113}},\ \bibinfo {pages} {210501} (\bibinfo {year} {2014})}\BibitemShut {NoStop}%
\end{thebibliography}
\end{document}